\newtheorem{definition}{Definition}[subsection]
\newtheorem{proposition}{Proposition}[subsection]
\newtheorem{example}{Example}[subsection]
\newtheorem{claim}{Claim}[subsection]
\newtheorem{theorem}{Theorem}[section]
\newtheorem{lemma}{Lemma}[subsection]
\newtheorem{remark}{Remark}[subsection]
\DeclareMathOperator{\supp}{supp}
\DeclareMathOperator{\tr}{tr}
\title{Constant-Overhead Magic State Distillation}
\author{Adam Wills\thanks{Center for Theoretical Physics, Massachusetts Institute of Technology, Cambridge, MA; Hon Hai (Foxconn) Research Institute, Taipei, Taiwan. Email: \texttt{a\_wills@mit.edu}.}\and Min-Hsiu Hsieh\thanks{Hon Hai (Foxconn) Research Institute, Taipei, Taiwan. Email: \texttt{min-hsiu.hsieh@foxconn.com}.}\and Hayata Yamasaki\thanks{Department of Physics, Graduate School of Science, The University of Tokyo, 7-3-1 Hongo, Bunkyo-ku, Tokyo, 113-0033, Japan. Email: \texttt{hayata.yamasaki@gmail.com}.}}
\date{}
\begin{document}

\maketitle

\begin{abstract}
Magic state distillation is a crucial yet resource-intensive process in fault-tolerant quantum computation. The protocol's overhead, defined as the number of input magic states required per output magic state with an error rate below $\epsilon$, typically grows as $\mathcal{O}(\log^\gamma(1/\epsilon))$. Achieving smaller overheads, i.e., smaller exponents $\gamma$, is highly desirable; however, all existing protocols require polylogarithmically growing overheads with some $\gamma > 0$, and identifying the smallest achievable exponent $\gamma$ for distilling magic states of qubits has remained challenging. To address this issue, we develop magic state distillation protocols for qubits with efficient, polynomial-time decoding that achieve an $\mathcal{O}(1)$ overhead, meaning the optimal exponent $\gamma = 0$; this improves over the previous best of $\gamma \approx 0.678$ due to Hastings and Haah. In our construction, we employ algebraic geometry codes to explicitly present asymptotically good quantum codes for $2^{10}$-dimensional qudits that support transversally implementable logical gates in the third level of the Clifford hierarchy. The use of asymptotically good codes with non-vanishing rate and relative distance leads to the constant overhead. These codes can be realised by representing each $2^{10}$-dimensional qudit as a set of $10$ qubits, using stabiliser operations on qubits. The $10$-qubit magic states distilled with these codes can be converted to and from conventional magic states for the controlled-controlled-$Z$ ($CCZ$) and $T$ gates on qubits with only a constant overhead loss, making it possible to achieve constant-overhead distillation of such standard magic states for qubits. These results resolve the fundamental open problem in quantum information theory concerning the construction of magic state distillation protocols with the optimal exponent.
\end{abstract}

\section{Introduction}

It is well understood that some method of protecting quantum information from the effects of noise and decoherence is necessary to achieve scalable quantum computing. This is the mandate of the field of quantum error correction. Equally challenging is achieving the ability to not only reliably store quantum information in the presence of noise, but compute with the information in a manner that is tolerant to faults; this is the aim of studying fault-tolerant quantum computing~\cite{gottesman2010,gottesman2016surviving}.
In the near-thirty years since the discovery of the threshold theorem~\cite{10.1145/258533.258579,doi:10.1137/S0097539799359385,shor1996fault,10.5555/2011665.2011666,10.1007/11786986_6,yamasaki2024time}, many schemes and techniques for quantum error correction and fault-tolerant quantum computing have been proposed, many featuring various advantages and disadvantages in certain areas. Some such techniques have started to be successfully experimentally realised.

One important notion in quantum error correction and fault-tolerant quantum computing is that of a transversal gate. If quantum information is protected by some error-correcting code, for example, a stabiliser code \cite{gottesman1997stabilizer}, one acts with a one qubit gate $U$ on all the physical qubits of the code, which has the effect of executing the gate $U$ on all the encoded qubits. This requires the quantum error-correcting code to have a particular structure; we say that ``it supports a transversal gate $U$''. The point here is that acting transversally is an inherently fault-tolerant way to compute on the encoded information because it prevents the spread of errors between qubits.

Unfortunately, the Eastin-Knill theorem \cite{eastin2009restrictions} tells us that no quantum code can support a set of transversal gates that enables universal quantum computation; some further technique is required. Moreover, in many natural quantum error-correcting codes, the transversal gates that are supported are Clifford gates, whose action (on stabiliser states) is known to be efficiently classically simulatable.

One particularly successful technique that can augment transversal gates to achieve universal quantum computation is known as magic state distillation, initially introduced by Bravyi and Kitaev in \cite{bravyi2005universal}. Let us briefly outline this. Suppose we have some diagonal $\tau$-qubit non-Clifford gate, $U$, and let us define the state
\begin{equation}
    \ket{M} = U\left(\ket{+}^{\otimes \tau}\right)
\end{equation}
where $\ket{+} = \frac{1}{\sqrt{2}}(\ket{0}+\ket{1})$ is the usual $+1$-eigenstate of the $X$-operator.
Common examples of $U$ include the $T$ gate, in which case $\tau=1$, the controlled-$S$ gate, $CS$, in which case $\tau=2$, or the controlled-controlled-$Z$ gate, $CCZ$, in which case $\tau=3$. All of these gates are non-Clifford, and all such gates enable universal quantum computation when combined with Clifford operations.
We write magic states for $T$ gates and $CCZ$ gates as $\ket{T}$ and $\ket{CCZ}$, respectively.

Notice that because $U$ is diagonal, the state $\ket{M}$ essentially lists the non-zero elements of $U$ in its amplitudes. This fact enables an interesting construction, known as gate teleportation. In this construction, one state $\ket{M}$ may be consumed to execute the gate $U$ on an unknown $\tau$-qubit quantum state $\ket{\psi}$, using Clifford operations only. This requires, however, $U$ to be a gate in the third level of the Clifford hierarchy: a concept introduced by Gottesman and Chuang \cite{gottesman1999demonstrating} that we will discuss more later. We mention, for now, that $T$, $CS$, and $CCZ$ are all in the third level.

Using gate teleportation, we see, therefore, that a supply of states $\ket{M}$ can promote the Clifford operations to give universal quantum computation. The issue that magic state distillation tackles, then, is that of producing high-quality ``magic states'' $\ket{M}$. What is meant by this is that in reality, we do not hold the pure state $\ket{M}$, but rather some mixed state $\rho$, which we hope to be very ``close'' to the magic state $\ket{M}$, in the sense that we want its error rate $p$ to be small, where
\begin{equation}
    \bra{M}\rho\ket{M} \geq 1-p.
\end{equation}
As the name suggests, the task accomplished by magic state distillation is to produce a smaller number of magic states of a higher quality (lower error rate) from a larger number of magic states of a lower quality (higher error rate).

Magic state distillation has been studied extensively since the original proposal by Bravyi and Kitaev \cite{bravyi2005universal} and is still under active investigation. One issue of particular interest is the overhead of magic state distillation --- essentially the ratio of input to output magic states to a protocol. More precisely, if one has a target error rate $\epsilon$, the overhead is the number of input magic states of constant error rate (below some constant threshold) that are required per output magic state of error rate $\epsilon$ to be produced. Denoted $C(\epsilon)$, in general the overhead may be expressed as
\begin{equation}
    C(\epsilon) = \mathcal{O}(\log^\gamma(1/\epsilon)).
\end{equation}
Magic state distillation is considered one of the most prohibitive elements in terms of resource cost of fault-tolerant quantum computation proposals, and so reducing the exponent $\gamma$ is a major issue if we wish to achieve realistic fault-tolerant quantum computation. For context, the initial value of $\gamma$ obtained by Bravyi and Kitaev was $\gamma = \log_3(15) \approx 2.46$.

The common assumption that is made when constructing magic state distillation protocols is to assume noiseless Clifford gates as if the protocol described is protected by some outer error-correcting code supporting transversal Clifford gates. This was the assumption initially introduced by Bravyi and Kitaev \cite{bravyi2005universal}. Under this assumption, several works presenting protocols with lower values of $\gamma$ have been published. In 2012, Bravyi and Haah presented the framework of triorthogonality \cite{bravyi2012magic}, a powerful framework for developing magic state distillation protocols for the $T$ gate, but later generalised to the $CS$ and $CCZ$ gates with the generalised triorthogonality framework of Haah and Hastings \cite{haah2018codes}. Triorthogonality starts with the definition of a triorthogonal matrix: a binary matrix in which any pair of rows has even overlap, and any triple of rows has even overlap. Bravyi and Haah discuss how these matrices may be used to define a ``qubit triorthogonal code'': a quantum code that supports a non-Clifford transversal gate (in fact the $T$ gate in their case). They show that such codes may be used to construct magic state distillation protocols for which
\begin{equation}
\label{eq:overhead_of_existing_protocol}
    \gamma = \frac{\log(n/k)}{\log d}
\end{equation}
and where $[[n,k,d]]$ denote the parameters of the quantum code\footnote{The notation $[[n,k,d]]$ is a canonical notation for a quantum code with $n$ physical qubits, $k$ logical qubits (we say the code has dimension $k$), and distance $d$.}. In particular, this is done with a ``multi-level'' magic state distillation protocol, in which a single magic state distillation subroutine is concatenated with itself, meaning that the output states of one level are used as inputs to the next level. In their paper, Bravyi and Haah achieve $\gamma$ arbitrarily close to $\log_2(3) \approx 1.58$, improving upon the best-known values at the time: $\gamma = \log_2(5) \approx 2.32$ for qubits due to Meier, Eastin, and Knill \cite{meier2012magic}, and $\gamma = 2$ for qudits due to Campbell, Anwar, and Browne \cite{campbell2012magic}.

Bravyi and Haah also conjectured in their paper that the exponent $\gamma$ could not go below 1 for any distillation protocol, and subsequently, a protocol with $\gamma \to 1$ was discovered by Jones \cite{jones2013multilevel}. Their conjecture was, however, falsified by Hastings and Haah in the breakthrough work \cite{hastings2018distillation}, achieving $\gamma \approx 0.678$, although these protocols are not considered practical due to their requirement for exorbitantly large numbers of input magic states. In particular, a protocol with $\gamma < 1$ is only discovered on $n=2^{58}$-qubit inputs in that work. Protocols with $\gamma> 0$ that can be made arbitrarily close to $0$ were discovered shortly after by Krishna and Tillich~\cite{krishna2019towards}, although these come with their own drawback. To achieve $\gamma \to 0$, these protocols require prime-dimensional qudits whose dimension diverges to infinity. In particular, for any fixed-dimensional qudit, these protocols yield some constant $\gamma > 0$. Moreover, the actual dimensions of the qudits required to achieve low values of $\gamma$ were quite large; achieving $\gamma < 1$ requires $41$-dimensional qudits, and improving on Hastings and Haah's value of $\gamma \approx 0.678$ requires $97$-dimensional qudits.
It is unknown how to use the protocol for prime-dimensional qudits  in~\cite{krishna2019towards} to distil magic states of qubits.

Despite their separate impracticalities, both~\cite{hastings2018distillation} and~\cite{krishna2019towards} represent significant theoretical breakthroughs. In particular, they both gave inspiration for the present work. As a key example, Krishna and Tillich in \cite{krishna2019towards} generalised the notion of binary triorthogonal matrices to a definition of triorthogonal matrices over the field $\mathbb{F}_p$, where $p$ is a prime, where again the definition is made in view of defining a quantum code (on $p$-dimensional qudits) that supports a non-Clifford transversal gate.

\subsection{Overview of Our Results}

It has been an important open question as to whether protocols may be designed with $\gamma$ made arbitrarily close to 0 for qubits. The main contributions of this paper are not only to answer this in the affirmative but also to determine that constant-overhead magic state distillation is possible, i.e. $\gamma = 0$, for qubits (see also Table~\ref{tab:gamma_values}).

\begin{theorem}\label{mainTheorem}
    Constant-overhead magic state distillation for qubits is possible. In particular, there exist constants $\Omega,$ $p_\mathrm{th}$ such that, for any target error rate $\epsilon$, there is a large enough $n$ such that a supply of $n$ $\ket{CCZ}$ magic states of error rate below $p_\mathrm{th}$ may be used to produce $k$ $\ket{CCZ}$ magic states of error rate below $\epsilon$, under the local stochastic error model, where $n/k \leq \Omega$.
\end{theorem}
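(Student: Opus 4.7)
The plan is to assemble an asymptotically good quantum code on large qudits that supports a transversal logical non-Clifford gate, plug it into a multi-level distillation scheme via the generalised triorthogonality framework, and then translate the resulting qudit-level statement into one about ordinary qubit $\ket{CCZ}$ magic states. Recall from~\eqref{eq:overhead_of_existing_protocol} that a code with parameters $[[n,k,d]]$ and a transversal third-level gate yields overhead exponent $\gamma = \log(n/k)/\log d$. Hence if one can exhibit a family of such codes with \emph{simultaneously} constant rate $k/n \ge R$ and constant relative distance $d/n \ge \delta$, then $\gamma \le \log(1/R)/\log(\delta n) \to 0$, giving $\gamma = 0$ and an overhead bounded by the constant $\Omega := 1/R$.

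The first step is to fix the working alphabet at $q=2^{10}$ and to set up the appropriate generalisation of triorthogonality over $\mathbb{F}_q$, so that the associated CSS-type quantum code transversally implements a diagonal third-level Clifford-hierarchy gate on $2^{10}$-dimensional qudits (a qudit analogue of $CCZ$). This is the $\mathbb{F}_q$ extension of the Bravyi--Haah / Haah--Hastings construction in the spirit of Krishna--Tillich. The second step --- and the technical core --- is to realise this construction by an \emph{explicit} algebraic geometry code family over $\mathbb{F}_{2^{10}}$: one chooses a function-field tower (Garcia--Stichtenoth-type) whose evaluation points and Riemann--Roch spaces can be arranged to satisfy the pairwise- and triplewise-orthogonality constraints required by the generalised triorthogonality framework, while still delivering the AG-code guarantee of constant rate and constant relative distance. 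The choice $q=2^{10}$ is what makes this possible: the alphabet is large enough for AG codes to beat the thresholds imposed by the orthogonality constraints, yet a $2^{10}$-dimensional qudit is just $10$ qubits, so stabiliser operations on qubits suffice to realise the whole protocol.

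The third step is to run the distillation subroutine: given $n$ noisy $q$-qudit magic states, encode, measure the stabilisers, apply a polynomial-time decoder (AG codes admit such decoders up to half the designed distance), and read off $k$ logical qudit magic states. Under local stochastic noise of rate below some threshold $p_\mathrm{th}$, the constant relative distance together with a standard percolation/union-bound argument on error clusters of size below $\delta n/2$ forces the output error rate below any target $\epsilon$ once $n$ is taken large enough, while maintaining $n/k \le \Omega$ in a single shot (no concatenation needed). The fourth and final step is the qubit bridge: show that a $q$-qudit magic state for the constructed third-level gate can be converted to a bounded number of standard qubit $\ket{CCZ}$ states, and conversely that the protocol's noisy inputs can be assembled from noisy qubit $\ket{CCZ}$ states, both with only constant overhead loss. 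Composing these two conversions around the qudit protocol yields the theorem.

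The hard part will be the second step: producing an AG-based code family that satisfies the $\mathbb{F}_{2^{10}}$-triorthogonality conditions \emph{and} remains asymptotically good, since the orthogonality constraints are nonlinear in the generator matrix and must be enforced via careful choice of divisors and evaluation points on the underlying curve. The fault-tolerant analysis under local stochastic noise (as opposed to i.i.d. noise) is also delicate, since errors introduced by the protocol's own Clifford circuitry must be shown not to spread beyond a fraction of the code distance; this is where the efficient AG decoder together with the constant relative distance does the real work.
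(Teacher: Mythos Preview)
Your steps two through four match the paper's construction closely: AG codes over $\mathbb{F}_{2^{10}}$ yielding asymptotically good triorthogonal codes, a single-round protocol using error correction (not post-selection) via an efficient decoder, and constant-cost conversion between the $10$-qubit magic state and $\ket{CCZ}$. That is the paper's route.

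However, your opening paragraph contains a genuine confusion that the paper explicitly warns against. You invoke the concatenated-protocol formula $\gamma = \log(n/k)/\log d$ and argue that with $k/n\ge R$, $d/n\ge\delta$ one gets $\gamma \le \log(1/R)/\log(\delta n) \to 0$, ``giving $\gamma = 0$ and an overhead bounded by $\Omega := 1/R$.'' This is not right: that formula applies to a \emph{fixed} code used in many concatenation levels, and for any fixed code in the family it yields a strictly positive $\gamma$. Letting $n$ grow gives $\gamma$ arbitrarily small but never zero via this route; the paper stresses (see the note on related works and the discussion after~\eqref{eq:overhead_of_existing_protocol2}) that asymptotically good codes plugged into the standard concatenated analysis yield only polylogarithmic overhead of arbitrarily low degree, not constant overhead. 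The constant overhead comes \emph{only} from the single-round argument you sketch in step three, where linear decoding radius $t=\Theta(n)$ gives $\epsilon \le \binom{n}{t+1}(Cp_{\mathrm{ph}})^{t+1}$ and hence a constant threshold with exponential suppression in $n$. So your plan mentions ``multi-level distillation'' and the $\gamma$-formula, but these play no role in the actual proof; drop that paragraph and lead with the single-round error-correction argument.

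One minor point: your worry about ``errors introduced by the protocol's own Clifford circuitry'' spreading is misplaced in this setting. The paper works in the standard magic-state-distillation model where Clifford operations are noiseless; the only noise is on the input magic states, and after twirling these are stochastic $Z$-errors, which is precisely why only $X$-stabilisers are measured and only the $\mathcal{G}_0^\perp$ decoder is needed.
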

\renewcommand{\arraystretch}{1.4}
\begin{table}[h]
    \centering
    \begin{tabular}{c|c||c|c}
         Work&Year&Overhead Exponent $\gamma$&Qudit Dimension $d$  \\\hline\hline
         Hastings and Haah \cite{hastings2018distillation}& 2017&$\gamma \approx 0.678$&$d=2$\\\hline
         Krishna and Tillich \cite{krishna2019towards}&2018&$\gamma \to 0$&$d \to \infty$\\\hline
         Present Work & 2024 & $\gamma = 0$&$d=2$
    \end{tabular}
    \caption{The overhead for magic state distillation is $\mathcal{O}(\log^\gamma(1/\epsilon))$, for a target error rate $\epsilon$, as discussed. This table presents the result of this work in comparison to the previous state of the art.}
    \label{tab:gamma_values}
\end{table}
\renewcommand{\arraystretch}{1}
Our main construction is geared around the distillation of the $\ket{CCZ}$ state, which enables the execution of the (non-Clifford) $CCZ$ gate via stabiliser operations only, using a gate teleportation protocol. In addition, one may also achieve constant-overhead magic state distillation in the same sense as Theorem \ref{mainTheorem} (possibly with different constants) for the $\ket{T}$ gate by using known protocols to convert $\ket{CCZ}$ states to $\ket{T}$ states~\cite{beverland2020lower,PhysRevA.87.022328,PhysRevA.87.042302,Gidney2019efficientmagicstate}.

At a high level, there are two ingredients that allow us to prove Theorem \ref{mainTheorem}. The first is a qubit code which is asymptotically good (meaning it has essentially optimal density of information storage and error-correcting capability: $k,d = \Theta(n)$) that also supports a non-Clifford transversal gate. The second is a triorthogonality framework for prime-power qudits (in fact, for $2^{10}$-dimensional qudits) which generalises the framework of Bravyi and Haah \cite{bravyi2012magic} for qubits and that of Krishna and Tillich \cite{krishna2019towards} for prime-dimensional qudits. We also show the translation between a $2^{10}$-dimensional qudit and $10$ qubits. 
The proof of Theorem \ref{mainTheorem} is essentially the work of the entire paper, but the conclusion comes in Section \ref{errorAnalysis} in the form of Theorem \ref{thm:error_rate}.

Because our construction of a quantum code may be of interest in other contexts, we will enunciate it as its own theorem here.
This theorem will be established via the results of Sections \ref{mainCodeSection} and \ref{mainConstruction}, in particular, Theorem~\ref{technicalCodeTheorem}.

\begin{theorem}\label{quantumCodeTheorem}
    We have an explicit construction of a family of $[[n,k,d]]$ quantum codes for $2^{10}$-dimensional qudits that is asymptotically good, so $k,d = \Theta(n)$, and supports a non-Clifford transversal gate in the third level of the Clifford hierarchy of $2^{10}$-dimensional qudits. This latter statement means that there is a single-qudit non-Clifford gate $U$ in the third level of the Clifford hierarchy of $2^{10}$-dimensional qudits such that acting with $U^{\otimes n}$ on the physical qudits of a single code block executes the encoded gate $\overline{U^{\otimes k}}$ on the logical qudits of the code block.
    Each $U$ is also in the third level of the Clifford hierarchy of qubits, implemented by a finite sequence of Clifford+$CCZ$ gates on qubits, and can be used for gate teleportation to implement a $CCZ$ gate on qubits.
    Moreover, this code has a polynomial-time decoder for correcting $Z$ errors, which are relevant for magic state distillation.
\end{theorem}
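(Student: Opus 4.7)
The plan is to split the proof of Theorem \ref{quantumCodeTheorem} into two main pillars corresponding to the two ingredients highlighted in the introduction: a generalised triorthogonality framework for $2^{10}$-dimensional qudits, and an explicit family of asymptotically good codes instantiating that framework. First, I would extend the binary triorthogonality of Bravyi and Haah and the prime-$p$ triorthogonality of Krishna and Tillich to $\mathbb{F}_{2^{10}}$. Concretely, I would write the CSS stabilisers in terms of an $X$-type generator matrix $G$ over $\mathbb{F}_{2^{10}}$ and identify, via an $\mathbb{F}_2$-basis of $\mathbb{F}_{2^{10}}$, the pairwise and triple-row trace forms on the rows of $G$ whose vanishing forces a diagonal single-qudit gate $U$ in the third level of the Clifford hierarchy to act transversally as $\overline{U^{\otimes k}}$ on the encoded qudits. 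The gate $U$ would be chosen so that its diagonal phase is the $\mathbb{F}_2$-trace of a cubic form, so that after a basis change it reduces to a $CCZ$ on three of the $10$ underlying qubits, placing $U$ in the third level of the Clifford hierarchy of qubits as well.

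Second, I would instantiate the framework with algebraic geometry codes over $\mathbb{F}_{2^{10}}$. Since $2^{10}=(2^5)^2$ is a square, the Tsfasman--Vladut--Zink bound applies and Garcia--Stichtenoth towers of function fields produce asymptotically good codes of constant rate and constant relative distance. Taking two nested AG codes $C_2 \subseteq C_1$ yields a CSS code with parameters $[[n,k,d]]$ and $k,d = \Theta(n)$. The crucial technical step is to choose the defining divisors so that the resulting evaluation codes satisfy the pairwise and triple orthogonality conditions of the first pillar; I would exploit the multiplicative property of AG codes, $C_L(D_1)\cdot C_L(D_2) \subseteq C_L(D_1 + D_2)$, to force the triple products of codewords to land in a Riemann--Roch space on which the cubic trace form vanishes. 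Once the conditions hold, transversality of $U$ and its residence in the third level of the Clifford hierarchy on $2^{10}$-dimensional qudits both follow from the triorthogonality framework.

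Third, I would translate one $2^{10}$-dimensional qudit into $10$ qubits by fixing a self-dual $\mathbb{F}_2$-basis of $\mathbb{F}_{2^{10}}$, under which the qudit Pauli group maps to the $10$-qubit Pauli group and stabiliser/Clifford operations on qudits become stabiliser/Clifford operations on qubits. The gate $U$ constructed above reduces under this identification to a $CCZ$ (up to Clifford conjugation) on three of the $10$ underlying qubits, so it lies in the third level of the Clifford hierarchy of qubits and can be used in gate teleportation to implement a $CCZ$. A polynomial-time decoder for $Z$ errors is then inherited from the classical decoder of the underlying AG code: $Z$-correction on the CSS code reduces to bounded-distance decoding of the $X$-type AG code, and Feng--Rao / Berlekamp--Massey--Sakata style syndrome decoders correct $\Theta(n)$ errors in polynomial time.

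I expect the main obstacle to lie in the second step: simultaneously enforcing the pairwise CSS condition, the triple orthogonality required for transversality of a non-Clifford gate, and non-triviality of the cubic phase, all while keeping both rate and distance bounded away from zero. Characteristic-$2$ arithmetic makes the cubic trace form degenerate for many natural divisor choices, so the divisors on the chosen curve (or tower) must be tuned carefully enough to realise a non-trivial cubic form while leaving sufficient slack in the underlying Riemann--Roch spaces to guarantee asymptotic goodness; this is the balancing act that I expect will consume most of the technical effort in Sections \ref{mainCodeSection} and \ref{mainConstruction}.
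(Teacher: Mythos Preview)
Your high-level two-pillar plan matches the paper, but there is a genuine gap in the first pillar that would make the second pillar fail as stated. You propose that the diagonal phase of $U$ be ``the $\mathbb{F}_2$-trace of a cubic form'' and that the relevant code condition is a \emph{triple}-product orthogonality. In characteristic~$2$ this does not work: the paper computes explicitly (Section~\ref{nonCliffordGate}) that the gate $U_\beta^{(3)} = \sum_\gamma (-1)^{\tr(\beta\gamma^3)}\ket{\gamma}\bra{\gamma}$ is always \emph{Clifford}, essentially because the Frobenius $\gamma\mapsto\gamma^2$ collapses the commutator $U_\beta^{(3)} X^\gamma U_\beta^{(3)}$ back to a Pauli times $X^\gamma$. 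The same happens for $n=5,6$. The first exponent that escapes the Clifford group is $n=7$, so the paper takes $U = U_1^{(7)}$ with phase $\tr(\gamma^7)$. Correspondingly, the orthogonality condition the AG code must satisfy is a \emph{seven}-fold product relation $\sum_i (y_1)_i\cdots(y_7)_i = 0$, not a three-fold one; after the characteristic-$2$ multinomial expansion of $(\sum_a u_a g^a)^7$ this becomes the triorthogonality condition $\sum_i (g^a_i)^4(g^b_i)^2(g^c_i)$ of Definition~\ref{secondTriorthogDef}. The paper obtains the seven-fold relation not by the inclusion $C_\mathcal{L}(D_1)\cdot C_\mathcal{L}(D_2)\subseteq C_\mathcal{L}(D_1+D_2)$ alone, but via a residue/Weil-differential argument that produces $\sum_i x_1(P_i)\cdots x_7(P_i)\,\omega_{0P_i}(1)=0$, followed by a rescaling by seventh roots $w_i^7=\omega_{0P_i}(1)$ (which is why one needs $7\nmid q-1$, i.e.\ $s\not\equiv 0\pmod 3$), and then a puncturing of a single generator matrix into the block form $(\mathbf{1}_k\;|\;G)$ rather than a nested-code CSS construction.

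A second, smaller misconception: $\theta(U)$ does \emph{not} reduce to a single $CCZ$ on three of the ten qubits after a basis change. As a Boolean function of the ten coordinates, $\tr(\gamma^7)$ is indeed cubic (since $7=4+2+1$ and $b_i^2=b_i$), but it is a dense cubic: the paper's example decomposition (Appendix~\ref{exampleDecomposition}) uses $70$ $CCZ$ gates together with many $Z$ and $CZ$ gates, and in general $1\le C\le\binom{10}{3}$. The conversion between $\ket{U_{10}}$ and $\ket{CCZ}$ therefore costs a constant factor $C$, not~$1$. Your remarks on the self-dual basis translation and on inheriting a polynomial-time $Z$-decoder from the classical AG decoder are in line with the paper.
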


To discuss our whole construction in a little more depth, let us return to \cite{hastings2018distillation} and \cite{krishna2019towards}. In the former paper, the (binary) triorthogonal matrices, originally introduced in~\cite{bravyi2012magic}, are constructed using Reed-Muller codes. In the latter, the triorthogonal matrices over the field $\mathbb{F}_p$, where $p$ is a prime, are defined and constructed using Reed-Solomon codes. In both cases, it is the properties of these codes as polynomial evaluation codes that lead to the triorthogonality conditions being satisfied. The problem in the first case is that Reed-Muller codes do not have good parameters simultaneously on themselves and on their duals\footnote{The other problem with Reed-Muller codes is that, as a code family, their size grows very quickly, which leads to the long lengths of the codes achieving $\gamma < 1$.}, which leads to only $\gamma \approx 0.678$. On the other hand, Reed-Solomon codes do have good parameters on themselves and their duals, and so $\gamma \to 0$ is possible, although a natural Reed-Solomon code family is defined over fields whose size is growing. This, in turn, corresponds to the qudits of the triorthogonal codes, and thus the magic state distillation protocols, growing impractically large.

One naturally asks, therefore, whether there is a further class of classical code with good parameters on themselves and on their duals, which may be defined over a fixed-order field.
Our idea is that the algebraic geometry codes, dating back to the work of Goppa \cite{goppa1982algebraico}, fit this description.
The algebraic geometry codes have been considered in the context of quantum codes without arguing concrete applications to fault-tolerant quantum computing --- first in \cite{ashikhmin2001asymptotically}, but here we aim to use these codes for magic state distillation.
However, the problem to overcome with algebraic geometry codes is that they are naturally defined on relatively large, albeit fixed, fields\footnote{Recall that a field of order $r$, where $r$ is an integer, exists if and only if $r$ is a prime power, where prime powers are often denoted by the character $q$. There is one field of order $q$ for each prime power $q$, up to isomorphism, and it may be denoted $\mathbb{F}_q$.}. This is discussed in more depth in Section \ref{AGPrelims}. The problem of algebraic geometry codes naturally defining codes on larger fields may be easily overcome in classical coding theory by some ``binarisation'' process. For example, if one has a classical code family over $\mathbb{F}_q$, where $q=2^s$, then it is possible to express each $\mathbb{F}_q$-symbol as a list of $s$ bits. Performing this transformation on every $\mathbb{F}_q$-symbol in every codeword defines a binary code, and since $s$ is constant, the parameters of the resulting code family will be essentially the same as the code family over $\mathbb{F}_q$.
By contrast, the problem in the present case is that it is indeed possible to use algebraic geometry codes to construct codes over $\mathbb{F}_q$, for some fixed $q=2^s$ which satisfy (a generalisation of) the triorthogonality properties of Bravyi and Haah \cite{bravyi2012magic}, but it is not clear whether it is possible to maintain the triorthogonality under a binarisation process to obtain a standard binary triorthogonal code.

We nevertheless develop another approach to circumvent this problem. Let us first ask what is meant by a $q$-dimensional qudit, where $q=2^s$ is a power of two. One finds that there is a standard approach outlined in Chapter 8 of \cite{gottesman2016surviving}.
A quantum state of a single qudit of dimension $q$ is in the Hilbert space $\mathbb{C}^q$, where the standard choice of computational basis is labelled by field elements: $\{\ket{\eta}:\eta \in \mathbb{F}_q\}$. On this, there is a standard choice of Pauli group, Clifford group, and Clifford hierarchy. It may then be shown that all these sets of gates are essentially ``the same'' as their counterparts for a collection of $s$ qubits.

The way this is done is presented in detail in Section \ref{primePowerQuditsPrelims}, but for now, we say the following. There is a natural correspondence (in fact there are many, but we fix one) between elements of $\mathbb{F}_{2^s}$ and bit strings of length $s$ (i.e. elements of $\mathbb{F}_2^s$),
\begin{equation}
    \mathbb{F}_{2^s} \ni \beta \longleftrightarrow \mathbf{b} \in \mathbb{F}_2^s.
\end{equation}
Accordingly, this gives a correspondence between computational basis states of $\mathbb{C}^q$ and $(\mathbb{C}^2)^{\otimes s}$:
\begin{equation}
    \ket{\beta} \longleftrightarrow \ket{\mathbf{b}}.
\end{equation}
Then, a quantum state of one $q$-dimensional qudit, which is a linear combination of $\{\ket{\beta}:\beta \in \mathbb{F}_q\}$, corresponds to a quantum state of $s$ qubits, as the same linear combination of the corresponding $\{\ket{\mathbf{b}}: \mathbf{b} \in \mathbb{F}_2^s\}$. There is a similarly natural isomorphism between the sets of unitaries acting on the two spaces. If one has a unitary acting on $\mathbb{C}^q$, denoted as its matrix in the computational basis of $\mathbb{C}^q$, $\{\ket{\beta}:\beta \in \mathbb{F}_q\}$, the corresponding unitary acting on $(\mathbb{C}^2)^{\otimes s}$ has exactly the same matrix, with respect to the corresponding computational basis on that space, $\{\ket{\mathbf{b}}: \mathbf{b} \in \mathbb{F}_2^s\}$. It may then be shown that this isomorphism specialises to an isomorphism of the Pauli group, the Clifford group, and of the group of diagonal non-Clifford gates in the third level of the Clifford hierarchy.

Knowing this, we want to define a generalisation of triorthogonal matrices (and from there, triorthogonal codes) for qudits of dimension $q=2^s$ and then construct such a triorthogonal matrix using algebraic geometry codes over $\mathbb{F}_q$. Our construction is built on $q$-dimensional qudits, where $q=2^{10} = 1024$, and the code on $q$-dimensional qudits naturally corresponds to a code on ``sets of $10$ qubits'' (as is discussed in Section \ref{primePowerQuditsPrelims}). 
We ultimately want magic state distillation protocols on qubits, which is the reason why we have chosen the prime power $q$ to be a power of two.
The reason for the particular choice of $q=2^{10}$ will become clear in Section \ref{mainConstruction}.
The definition of a triorthogonal matrix in this paper is as follows (note that the following definition of triorthogonality naturally generalises to qudits of dimension $q=2^s$ for any $s$ while we here present the case of $q=2^{10}=1024$, i.e., $s=10$, relevant to our construction).

\begin{definition}[Triorthogonal Matrix]\label{firstTriorthogDefn}
    For us, a matrix $G \in \mathbb{F}_{q}^{m \times n}$, where $q=2^{10}=1024$, with rows $(g^a)_{a=1}^m$, is called triorthogonal if for all $a,b,c \in \{1, ..., m\}$,
    \begin{equation}
        \sum_{i=1}^n(g^a_i)^4(g^b_i)^2(g^c_i) = \begin{cases}
            1 &\text{ if } 1 \leq a=b=c \leq k\\
            0 &\text{ otherwise}
        \end{cases}
    \end{equation}
    and
    \begin{equation}
        \sum_{i=1}^n \sigma_ig_i^ag_i^b = \begin{cases}
            \tau_a &\text{ if } 1 \leq a=b \leq k\\
            0 &\text{ otherwise}
        \end{cases}
    \end{equation}
    for some integer $k \in \{1, ..., m\}$ and for some $\sigma_i, \tau_a \in \mathbb{F}_{1024}$, where $\sigma_i, \tau_a \neq 0$. In these expressions, all arithmetic takes place over $\mathbb{F}_{1024}$.
\end{definition}

The conditions in Definition \ref{firstTriorthogDefn} are carefully chosen so that the quantum code that we define out of the triorthogonal matrix supports a non-Clifford transversal gate (the resulting code is called a quantum triorthogonal code). This gate will be
\begin{equation}
    U = \sum_{\beta \in \mathbb{F}_{1024}}\exp\left[i\pi\tr(\beta^7)\right]\ket{\beta}\bra{\beta}
\end{equation}
where $\tr:\mathbb{F}_{1024} \to \mathbb{F}_2$ is the trace map, which is defined in Section \ref{finiteFieldPrelims}. This will enable the distillation of the one-qudit magic state
\begin{equation}
    \ket{M} = U\ket{+_q},
\end{equation}
where $\ket{+_q} = \frac{1}{\sqrt{q}}\sum_{\beta \in \mathbb{F}_q}\ket{\beta}$,
via a distillation subroutine, essentially the same as that employed in the conventional triorthogonality framework for qubits~ \cite{bravyi2012magic}. This all may be equally well described as the distillation of the $10$-qubit state $\ket{U_{10}}$, which is the qubit state corresponding to $\ket{M}$ according to the correspondence described above.

Via gate teleportation, this allows the execution of the $10$-qubit non-Clifford gate corresponding to $U$. However, we note that this gate is highly non-standard, and it is much more desirable to have the distillation in terms of some more standard non-Clifford gate. We show, in fact, that $\ket{U_{10}}$ may be converted from and into $\ket{CCZ}$ up to only some constant-factor loss of the overhead.

The quantum code that we construct from our triorthogonal matrix will be asymptotically good, i.e. it has parameters $[[n,k,d]]$ where $k,d = \Theta(n)$ as $n \to \infty$, and the argument of~\cite{bravyi2012magic} naturally applies so that the overhead exponent of $\ket{U_{10}}$ distillation if we concatenated the distillation subroutine multiple times, would be
\begin{equation}
\label{eq:overhead_of_existing_protocol2}
    \gamma = \frac{\log(n/k)}{\log d}.
\end{equation}
Since we have $k,d = \Theta(n)$, $\gamma$ in such a concatenated protocol would be made arbitrarily close to $0$. However, we note that this could be achieved with any constant rate $k/n$ and any $d$ which diverges to infinity as $n \to \infty$. In our case, it turns out that because the parameters are asymptotically good, a stronger claim is possible. Using just one round of the magic state distillation subroutine, i.e., without concatenating it with itself, distillation with $\gamma = 0$ is possible. Essentially, this is true because the error suppression is strong enough (following from $d = \Theta(n)$) that one round of magic state distillation, with a large enough $n$, is enough to make the target error rate as small as desired.
To achieve $\gamma=0$, it is also crucial that our protocol, using a polynomial-time decoder for algebraic geometry codes, performs error correction without post-selection to keep the overhead small\footnote{We note that the idea of using error correction, rather than post-selection, in magic state distillation has appeared before~\cite{krishna2018magic,Hu2022designingquantum}.
This idea was originally due to David Poulin~\cite{KrishnaPersonal}.}.
Because $k = \Theta(n)$, this means that the number of outputted $\ket{U_{10}}$ states grows in proportion to the number of input states; i.e., the overhead is constant $\mathcal{O}(1)$ as $\epsilon\to 0$. When combined with the conversion to the magic states $\ket{CCZ}$ (with only a constant-factor loss), this establishes Theorem \ref{mainTheorem}. This argument on the error analysis will be made precise in Section \ref{errorAnalysis}.

\paragraph*{Note on related works}
Independent works~\cite{golowich2024asymptoticallygoodquantumcodes,nguyen2024goodbinaryquantumcodes} from ours have also presented asymptotically good quantum codes that support transversal non-Clifford gates.
Whereas our protocol for magic state distillation uses $2^{10}$-dimensional qudits for our codes, the codes presented in these works support transversal $CCZ$ gates on qubits, addressing one of the open questions in the first version of this paper. These other works use some similar constructions to ours, and some other techniques, which will likely prove to be complementary to those in the present paper.

However, our contribution is to prove the achievability of constant-overhead magic state distillation in addition to the construction of asymptotically good codes.
The use of asymptotically good codes in the existing protocols for magic state distillation leads to the polylogarithmic overhead $\mathcal{O}(\log^\gamma(1/\epsilon))$ with $\gamma>0$ arbitrarily small, according to the existing analysis summarized in Equations~\eqref{eq:overhead_of_existing_protocol} and~\eqref{eq:overhead_of_existing_protocol2}.
Without presenting different protocols and analyses, the works~\cite{golowich2024asymptoticallygoodquantumcodes,nguyen2024goodbinaryquantumcodes} only lead to $\gamma \to 0$, i.e., magic state distillation with polylogarithmic overhead of arbitrarily low degree, rather than constant-overhead magic state distillation.
By contrast, our error analysis in Section~\ref{errorAnalysis} shows that using the asymptotically good codes in the single-round protocol presented in this work, which uses error correction without post-selection, it is possible to achieve the constant overhead $\gamma=0$, where we can also use the asymptotically good codes in Refs.~\cite{golowich2024asymptoticallygoodquantumcodes,nguyen2024goodbinaryquantumcodes} as well as our asymptotically good codes.

\subsection{Discussion and Future Directions}
\label{sec:discussion}

In this work, we have provided protocols for the distillation of $\ket{CCZ}$ magic states on qubits with constant overhead under the local stochastic error model. This answers in the affirmative an important open question in fault-tolerant quantum computing as to whether the magic state distillation overhead exponent $\gamma$ could be made arbitrarily close to zero for qubits, and in fact, we show a stronger statement --- that $\gamma = 0$ is possible.

We believe that, in addition, this work should open many interesting new research directions, both on the front of more practical issues and also in theory. Most pressingly, it is necessary to examine the constant factors and the finite length performance of the protocols. In the present work, we have restricted our focus to showing that constant-overhead magic state distillation is possible for qubits, and have made no attempt to optimise the construction up to constant factors, or analyse the finite length performance. This is an issue we will tackle in follow-up work to present and optimise exact values of the constants $\Omega$ and $p_\mathrm{th}$.
In case the protocols in their current form appear to be impractical, or even if they are quite practical, a natural follow-up question is then to ask whether other protocols from ours also exist at more reasonable lengths and/or with improved values of $\Omega$ and $p_\mathrm{th}$.

It is worth noting, however, that our protocol makes it possible to perform constant-overhead magic state distillation with the same threshold as the conventional $15$-to-$1$ protocol in~\cite{bravyi2005universal} (and also the same thresholds as any other small-scale protocols, e.g., those in~\cite {meier2012magic,PRXQuantum.3.030319}); to achieve this, we first suppress the error rate of initial magic states below the threshold $p_\mathrm{th}$ of our protocol using these small-scale protocols with a constant number of concatenations and then switch to our protocol to further suppress the error rate of the magic states arbitrarily within the constant overhead.
Thus, the issue involves the trade-off between the overhead and the threshold in the practical regime of physical and target logical error rates.
Such a trade-off needs to be estimated by numerical simulation since the analytical bounds of error rates and thresholds may be loose.
An improvement might come from algebraic geometry codes based on our development, or something entirely different.

As discussed, the protocols we construct naturally distil a $10$-qubit magic state for a $10$-qubit non-Clifford gate, but this magic state may be converted to and from $\ket{CCZ}$ by stabiliser operations; thus, our protocols may distil the $\ket{CCZ}$ state.
Distillation of $\ket{T}$ states requires further conversions~\cite{beverland2020lower,PhysRevA.87.022328,PhysRevA.87.042302,Gidney2019efficientmagicstate}.
Algebraic geometry codes provide the desired framework to construct asymptotically good codes from matrices satisfying polynomial equations like Definition~\ref{firstTriorthogDefn}.
However, in this work, we did not optimise constant factors appearing in these codes and protocols for constant-overhead magic state distillation.
The question one might ask, therefore, is whether one may realise practical constant-overhead magic state distillation, which may require improving the codes and protocols with more optimised constant factors.
\\

\noindent\textbf{Open Question 1:} Can we construct asymptotically good quantum codes supporting transversally implementable non-Clifford gates with more optimised constant factors? In a practical finite-length regime, can the protocols for constant-overhead magic state distillation be advantageous over the conventional polylogarithmic-overhead protocols?
\\

On even more speculative matters, one can ask about magic state distillation protocols using quantum LDPC codes, including transversal gates for such codes, where ``LDPC'' stands for low-density parity-check: a very strong property for a quantum code to have to hamper the spread of errors during syndrome measurement. It was not that long ago that asymptotically good quantum LDPC codes were constructed, first in \cite{panteleev2022asymptotically}, with follow-up works \cite{leverrier2022quantum,dinur2023good}, thus bringing to the close a long search. The study of protocols for fault-tolerant quantum computing with such codes is still in its infancy but promises to be a burgeoning area of research over the next few years. The following question promises to be as difficult as it is interesting.
\\

\noindent\textbf{Open Question 2:} What are the optimal parameters of a quantum LDPC code supporting a transversally implementable non-Clifford gate? Does an asymptotically good quantum LDPC code supporting a transversally implementable non-Clifford gate exist?
\\

As a very theoretical problem, we recall the conversion between the single-qudit magic state $\ket{M}$ (or equivalently the $10$-qubit magic state $\ket{U_{10}}$) and the $\ket{CCZ}$ states of qubits. To perform this conversion, we will go on to describe how the $10$-qubit non-Clifford gate in question may be decomposed into a sequence of $Z$, $CZ$, and $CCZ$ gates. Now, once one has chosen an isomorphism between $2^{10}$-dimensional qudits and $10$ qubits, this decomposition is unique.
However, it does depend on the isomorphism chosen in the first place. In particular, different choices of self-dual bases used for the isomorphism may lead to different gate decompositions, potentially resulting in a smaller number of $CCZ$ gates in this decomposition. A smaller number is desirable, as it leads to a more efficient conversion between $\ket{U_{10}}$ and $\ket{CCZ}$. The following question is, therefore, well-motivated and could be an interesting algebraic problem.
\\

\noindent\textbf{Open Question 3:} How can we analytically determine the isomorphism between $2^{10}$-dimensional qudits and $10$ qubits (specified by a self-dual basis of $\mathbb{F}_{1024}$ over $\mathbb{F}_2$) which leads to the decomposition of our $10$-qubit non-Clifford gate containing the smallest number of $CCZ$ gates? In the case of $\mathbb{F}_{q}$ for $q=2^{s}$ with fixed $s=10$, we may be able to try an exhaustive search of all the self-dual bases, but can we efficiently identify the optimal solution for general $s$?
\\

Lastly, from the perspective of quantum information theory, magic state distillation can be viewed as a type of distillation task to convert noisy states into pure states.
In special cases, it may be possible to identify the optimal rate of distillation; for example, the optimal rate of entanglement distillation under one-way local operations and classical communication (LOCC) can be fully characterised for any bipartite mixed states~\cite{doi:10.1098/rspa.2004.1372}, and the optimal rate of entanglement distillation under two-way local operations and classical communication (LOCC) can be characterised for bipartite pure states~\cite{PhysRevA.53.2046}.
Our results prove that magic state distillation is possible at a nonzero asymptotic rate as in entanglement distillation, and this opens a quest to seek the asymptotically optimal overhead rates and protocols in the task of magic state distillation.
This is a fundamental question relevant to the quantum resource theory of magic~\cite{Veitch_2012,Veitch_2014,PhysRevLett.118.090501}.
\\

\noindent\textbf{Open Question 4:} What are the optimal overhead and the corresponding asymptotically optimal family of protocols of magic state distillation?

\subsection{Outline of the Paper}\label{outline}

The paper is structured as follows. Section \ref{prelimSection} presents the necessary preliminary material. In particular, Section \ref{finiteFieldPrelims} presents the material on finite fields, Section \ref{primePowerQuditsPrelims} describes prime-power-dimensional qudits, Section \ref{AGPrelims} presents the algebraic geometry codes, and Section \ref{sec:task} defines the task of magic state distillation and the error model we use.

Section \ref{mainCodeSection} introduces the framework for constructing $2^s$-dimensional-qudit triorthogonal codes from triorthogonal matrices over finite fields of order $2^s$.
Using this framework, our magic state distillation protocol is presented in Section \ref{MSDProtocolMainSection}.
The quantum code in the framework of Section \ref{mainCodeSection} is explicitly constructed from the triorthogonal matrix that is explicitly constructed in Section \ref{mainConstruction}.
The error analysis for the constant-overhead magic state distillation takes place in Section \ref{errorAnalysis}.

Section \ref{mainCodeSection} contains Section \ref{nonCliffordGate} which analyses the non-Clifford gate that the quantum code will support transversally. Section \ref{quantumCodefromMatrix} will then construct the quantum code and prove results about its properties.

Section \ref{MSDProtocolMainSection} contains Section \ref{MSDSummarySection} which summarises the magic state distillation protocols, Section \ref{twirling} which gives more information on the twirling portion of the protocol, Section \ref{MSDfromCode} which gives more information on the bulk of the protocol (the distillation of the $10$-qubit magic state) and lastly Section \ref{CCZConversion}, which describes the conversion between the $10$-qubit magic state and the $\ket{CCZ}$ state.

Section \ref{mainConstruction} contains Sections \ref{AGCodeConstruction} and \ref{puncturingSection} which provide the majority of the construction of the triorthogonal matrix, whereas Section \ref{concrete} makes the triorthogonal matrix construction concrete by assigning values to each variable of the construction, thus demonstrating that all hypotheses of the construction may be simultaneously satisfied, and that the parameters of the associated quantum code are asymptotically good.

As mentioned, Section \ref{errorAnalysis} provides the error analysis of the magic state distillation protocol, and proves that under our error model (the local stochastic error model), the protocol has a constant error threshold and achieves a constant overhead. 

\section{Preliminaries}\label{prelimSection}

\subsection{Finite Fields}\label{finiteFieldPrelims}

Finite fields play a significant role in this paper and so it will be important to have a full account of them. First, it is well known that if $r$ is a positive integer, there is a field of order $r$ if and only if $r$ is a prime power. Moreover, if $r$ is a prime power, there is a unique field of order $r$ up to isomorphism. Prime powers are often denoted by the letter $q$ and the field of order $q$ is denoted $\mathbb{F}_q$.

When $q$ is a prime (let's say $q=p$), the field $\mathbb{F}_p$ is most easily understood. As a set, $\mathbb{F}_p$ may be identified with the set of integers $\{0, 1, ..., p-1\}$, and arithmetic in this field then corresponds simply to addition and multiplication modulo $p$. Fields $\mathbb{F}_q$ with $q=p^s$ may be constructed from $\mathbb{F}_p$ as we will now describe. One considers any irreducible polynomial over $\mathbb{F}_p$ of degree $s$ and considers a root of this polynomial, say $\alpha$. Then, $\mathbb{F}_q$ is exactly the field generated by adding the element $\alpha$ to $\mathbb{F}_p$. It is worth emphasising that then $\mathbb{F}_q$ contains a copy of $\mathbb{F}_p$.

The case of importance to us is the finite field of $2^{10}=1024$ elements, $\mathbb{F}_{1024}$. We will often write $\mathbb{F}_q$ for this field, where $q=2^s$ and $s=10$. To construct this, $1+x^3+x^{10}$ is a convenient choice of irreducible polynomial over $\mathbb{F}_2$. Then, one may write
\begin{equation}\label{fieldBasis}
    \mathbb{F}_{q} = \left\{\sum_{i=0}^{s-1}a_i\alpha^i:a_i \in \mathbb{F}_2\right\}
\end{equation}
as a set. The arithmetic in $\mathbb{F}_{q}$ is then completely determined by modulo 2 addition and the expression $1+\alpha^3 + \alpha^{10} = 0$, or equivalently, $\alpha^{10} = 1+\alpha^3$. As a concrete example, one can show that $(1+\alpha^6+\alpha^8)(\alpha+\alpha^5) = \alpha^3+\alpha^4+\alpha^5+\alpha^6+\alpha^7+\alpha^9$. Also, the modulo 2 addition quickly gives us the following useful expression:
\begin{equation}\label{char2Binomial}
    (\beta_1+\beta_2)^2 = \beta_1^2+\beta_2^2 \text{ for all } \beta_1, \beta_2 \in \mathbb{F}_{q}.
\end{equation}

One thing that the expression \eqref{fieldBasis} makes clear is that the field $\mathbb{F}_{q}$ may be viewed as a vector space over $\mathbb{F}_2$, simply by addition in the larger field $\mathbb{F}_{q}$ and scalar multiplication by elements in $\mathbb{F}_2$. In fact, $\mathbb{F}_{q}$ is an $s$-dimensional vector space over $\mathbb{F}_2$, and the set $\{1, \alpha, \alpha^2, \ldots,\alpha^{s-1}\}$ is acting as a basis in this context. Many other choices of basis are possible, however, and we will go on to discuss those that are particularly important to us soon.

One of the central objects in this work is the trace map of finite fields, denoted $\tr$. While $\tr$ can be defined between any two finite fields where one contains the other, we will only need to consider $\tr:\mathbb{F}_{q} \to \mathbb{F}_2$. Any reference to $\tr$ or the ``trace map'' throughout the paper will refer to this trace. This may be defined as
\begin{equation}
    \tr:\begin{cases}
        \mathbb{F}_{q} &\to \mathbb{F}_2\\
        \gamma &\mapsto \sum_{i=0}^{s-1}\gamma^{2^i}
    \end{cases}
    \;\;\text{where arithmetic takes place in }\mathbb{F}_q.
\end{equation}
From this expression, it is not immediately clear why $\tr(\gamma)$ should necessarily be in $\mathbb{F}_2$ for all $\gamma \in \mathbb{F}_{q}$, since it is expressed as merely some arithmetic in $\mathbb{F}_{q}$. It can be shown, however, that this is indeed the case. While $\tr$ satisfies many nice properties, we will only need the following:
\begin{proposition}\label{traceProps}
    The trace map satisfies:
    \begin{enumerate}
        \item $\tr:\mathbb{F}_{q} \to \mathbb{F}_2$ is an $\mathbb{F}_2$-linear map;
        \item $\tr(\gamma) = 0$ for half the elements $\gamma \in \mathbb{F}_{q}$ and $\tr(\gamma) = 1$ for the other half;
        \item $\tr(\gamma^2) = \tr(\gamma)$ for all $\gamma \in \mathbb{F}_{q}$.
    \end{enumerate}
\end{proposition}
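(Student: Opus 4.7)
The plan is to prove parts (1) and (3) first by direct algebraic manipulation --- these will also establish the tacit claim that $\tr$ really does land in $\mathbb{F}_2$ --- and then to reduce part (2) to a degree count.

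For part (1), I would iterate equation \eqref{char2Binomial} to obtain the Frobenius identity $(\gamma_1+\gamma_2)^{2^i}=\gamma_1^{2^i}+\gamma_2^{2^i}$ for each $i$, so that $\tr(\gamma_1+\gamma_2)=\tr(\gamma_1)+\tr(\gamma_2)$ by termwise splitting; compatibility with scalars from $\mathbb{F}_2=\{0,1\}$ is automatic. For part (3), I would reindex the defining sum:
\begin{equation}
\tr(\gamma^2)=\sum_{i=0}^{s-1}\gamma^{2^{i+1}}=\gamma^{2^s}+\sum_{i=1}^{s-1}\gamma^{2^i},
\end{equation}
and then invoke the standard finite-field identity $\gamma^q=\gamma$ (with $q=2^s$) to replace $\gamma^{2^s}$ by $\gamma$, recovering $\tr(\gamma)$. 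Applying the very same manipulation to $\tr(\gamma)^2$ (which by Frobenius equals $\sum_{i=0}^{s-1}\gamma^{2^{i+1}}$) yields $\tr(\gamma)^2=\tr(\gamma)$, forcing $\tr(\gamma)\in\{0,1\}=\mathbb{F}_2$ and justifying the implicit well-definedness claim made just before the proposition statement.

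The substantive step is part (2). From (1) plus $\mathbb{F}_2$-valuedness, $\tr$ is an $\mathbb{F}_2$-linear map from the $s$-dimensional $\mathbb{F}_2$-vector space $\mathbb{F}_q$ to $\mathbb{F}_2$, so its kernel is a subspace of codimension $0$ or $1$. In the codimension-$1$ case, exactly $2^{s-1}=q/2$ elements map to $0$ and the other $q/2$ to $1$ (the two fibers being cosets of the kernel), which is precisely (2). The only thing to rule out is that $\tr$ vanishes identically on $\mathbb{F}_q$, and here is where I expect the only real content of the proof to lie. I would argue this by observing that $\tr$ is defined by the nonzero polynomial $p(x)=\sum_{i=0}^{s-1}x^{2^i}\in\mathbb{F}_q[x]$, whose formal degree is $2^{s-1}<2^s=q$, so it has at most $2^{s-1}$ roots in $\mathbb{F}_q$ and cannot vanish on all $q$ elements. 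The mild subtlety to be careful about is that the identity $\gamma^q=\gamma$ is a relation on elements of $\mathbb{F}_q$, not a polynomial identity, and therefore must not be used to collapse the formal degree of $p(x)$ before the root count is applied.
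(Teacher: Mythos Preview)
Your proof is correct. The paper, however, does not actually prove Proposition~\ref{traceProps}: it is stated as a standard preliminary fact about finite fields, with the text afterward only unpacking what the three statements mean (and remarking that point (2) is equivalent to $\ker(\tr)$ being an $(s-1)$-dimensional $\mathbb{F}_2$-subspace). So there is no paper argument to compare against; your write-up supplies exactly the kind of elementary verification the paper omits, including the well-definedness check ($\tr(\gamma)^2=\tr(\gamma)$) that the paper flagged as not ``immediately clear'' but left to the reader. The degree-count argument for surjectivity in part~(2) is clean and the caveat you note about not conflating the field identity $\gamma^q=\gamma$ with a polynomial identity is the right point of care.
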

Explicitly, the first point states that
\begin{align}
    \tr(a\gamma) &= a\tr(\gamma) \text{ for all } a\in \mathbb{F}_2 \text{ and } \gamma \in \mathbb{F}_{q}\\
    \tr(\gamma_1 + \gamma_2) &=\tr(\gamma_1) + \tr(\gamma_2) \text{ for all } \gamma_1, \gamma_2 \in \mathbb{F}_{q}.
\end{align}
It is worth emphasising that the arithmetic within the argument of $\tr(\cdot)$ is taken over $\mathbb{F}_{q}$, whereas the arithmetic outside of the trace (on the right-hand sides) is taking place in $\mathbb{F}_2$. We also mention that the second point of Proposition \ref{traceProps} might be written more formally as ``$\ker(\tr)$ is an $(s-1)$-dimensional vector space over $\mathbb{F}_2$''.

Combining our above discussions of $\mathbb{F}_{q}$ viewed as a vector space, as well as that of the trace map, one may consider the following.
\begin{definition}[Self-Dual Basis]
    Consider a basis $(\alpha_i)_{i=0}^{s-1}$ for $\mathbb{F}_{q}$ over $\mathbb{F}_2$. The basis is called self-dual if
    \begin{equation}
        \tr(\alpha_i\alpha_j) = \delta_{ij} \text{ for all } i,j = 0, ..., s-1.
    \end{equation}
\end{definition}
It is known that self-dual bases exist for $\mathbb{F}_{p^s}$ over $\mathbb{F}_p$ if and only if $p=2$ or both $p$ and $s$ are odd \cite{seroussi1980factorization}, and so in particular they exist for $\mathbb{F}_{1024}$ over $\mathbb{F}_2$. There are, in fact, 6684672 distinct examples of self-dual bases for $\mathbb{F}_{1024}$ over $\mathbb{F}_2$ (see Theorem 3 of \cite{jungnickel1990number}). One such example is given by the elements
\begin{align}
\alpha_0 &= 1+\alpha^2+\alpha^4+\alpha^5+\alpha^7+\alpha^8\label{sdbFirst},\\
\alpha_1 &= \alpha^3+\alpha^6+\alpha^7+\alpha^8+\alpha^9,\\
\alpha_2 &= \alpha+\alpha^2+\alpha^5+\alpha^7 + \alpha^8+\alpha^9,\\
\alpha_3 &= 1+\alpha+\alpha^2+\alpha^3+\alpha^4+\alpha^6+\alpha^7+\alpha^8+\alpha^9,\\
\alpha_4 &= 1+\alpha+\alpha^4+\alpha^5+\alpha^7+\alpha^9,\\
\alpha_5 &= \alpha + \alpha^2+\alpha^3 + \alpha^7,\\
\alpha_6 &= \alpha^2+\alpha^6+\alpha^7,\\
\alpha_7 &= \alpha^2 + \alpha^5 + \alpha^7,\\
\alpha_8 &= 1+\alpha^3 + \alpha^7,\\
\alpha_9 &= 1 + \alpha^4 + \alpha^6 + \alpha^7,\label{sdbLast}
\end{align}
where, again, $\alpha$ is an element in $\mathbb{F}_{1024}$ satisfying $\alpha^{10}=1+\alpha^3$.

\subsection{Prime-Power Qudits}\label{primePowerQuditsPrelims}
Whereas qubits and qudits of prime dimension are quite well-studied, qudits whose dimension is a prime power are less commonly discussed. We describe such qudits now, relying heavily on Gottesman's book \cite{gottesman2016surviving}, in particular Chapter 8, but adding notation and details where we wish to spell things out most explicitly. Ultimately, we will find that a $2^{10}$-dimensional qudit can be thought of as just $10$ qubits.

Let $q=2^s$. While we think of $s=10$, the following treatment in fact holds for any $s$. The state space of a single qudit of dimension $q$ is, unsurprisingly, $\mathbb{C}^q$. The standard choice of notation (which is very convenient for us) for the computational basis of $\mathbb{C}^q$ is $\left\{\ket{\gamma}:\gamma \in \mathbb{F}_{q}\right\}$, i.e. the $q$ computational basis states of $\mathbb{C}^q$ are labelled by the elements of $\mathbb{F}_q$. If we start by considering the set of unitaries acting on $\mathbb{C}^q$, $U(\mathbb{C}^q)$, the most important subgroup for us is the Pauli group (acting on one qudit of dimension $q$), denoted $\mathcal{P}_{1,q}$, whose elements are
\begin{equation}
    \mathcal{P}_{1,q} = \{i^aX^\beta Z^\gamma:a \in \{0, 1, 2, 3\} \text{ and } \beta, \gamma \in \mathbb{F}_q\}.
\end{equation}
The elementary Pauli operators $X^\beta$ and $Z^\gamma$ act on computational basis states as
\begin{align}
    X^\beta\ket{\eta} &= \ket{\eta+\beta}\\
    Z^\gamma\ket{\eta} &= (-1)^{\tr(\gamma\eta)}\ket{\eta}
\end{align}
where $\eta + \beta$ refers to addition in $\mathbb{F}_q$, and $\tr$ is the trace map introduced in the previous section. With these definitions, the fundamental commutation relations become
\begin{equation}
    X^\beta Z^\gamma = (-1)^{\tr(\beta\gamma)}Z^\gamma X^\beta.
\end{equation}
The Clifford group for the qudit of dimension $q$ is
\begin{equation}
    \mathcal{C}_{1,q} = \{U \in U(\mathbb{C}^q): UPU^\dagger \in \mathcal{P}_{1,q} \text{ for all } P \in \mathcal{P}_{1,q}\}.
\end{equation}

With this defined, let us begin to explain the translation between $q$-dimensional qudits and $s$ qubits. First, fix any self-dual basis $(\alpha_i)_{i=0}^{s-1}$ for $\mathbb{F}_{q}$ over $\mathbb{F}_2$.
We write the computational basis states of $s$ qubits as
\begin{equation}
    \left\{\ket{\mathbf{b}} = \bigotimes_{i=0}^{s-1}\ket{b_i}:\mathbf{b} \in \mathbb{F}_2^s\right\} \subseteq (\mathbb{C}^2)^{\otimes s},
\end{equation}
where $\mathbf{b} = (b_0, b_1, \ldots, b_{s-1})$.
We then introduce the following correspondence between computational basis states of $\mathbb{C}^q$ and those of $(\mathbb{C}^2)^{\otimes s}$:
\begin{equation}
    \Ket{\beta = \sum_{i=0}^{s-1}b_i\alpha_i} \leftrightarrow \bigotimes_{i=0}^{s-1}\ket{b_i}.
\end{equation}
In words, for any computational basis state $\ket{\beta}$, $\beta$ may be (uniquely) expressed in terms of the self-dual basis as $\beta = \sum_{i=0}^{s-1}b_i\alpha_i$, where $b_i \in \mathbb{F}_2$, and then the computational basis state $\ket{\beta}$ of $\mathbb{C}^q$ corresponds to the computational basis state $\bigotimes_{i=0}^{s-1}\ket{b_i}$ of $(\mathbb{C}^2)^{\otimes s}$. This correspondence may then be extended linearly to an isomorphism of vector spaces $\psi: \mathbb{C}^{q} \to (\mathbb{C}^2)^{\otimes s}$. 

Next, let us consider the Pauli group for $s$ qubits. This is defined as
\begin{equation}
    \mathcal{P}_{s,2} = \{i^aX^{\mathbf{b}}Z^{\mathbf{c}}:a \in \{0,1,2,3\} \text{ and }\mathbf{b},\mathbf{c} \in \mathbb{F}_2^s\}.
\end{equation}
In this expression, $X$ and $Z$ are the standard elementary Pauli operators for qubits, and we have employed the standard notation
\begin{equation}
    X^{\mathbf{b}} = \bigotimes_{i=0}^{s-1}X^{b_i}
\end{equation}
for a tensor product of a single-qubit operator acting on multiple qubits. This allows us to set up a very natural isomorphism between $\mathcal{P}_{1,q}$ and $\mathcal{P}_{s,2}$, where again we note that $q=2^s$. Indeed, we define
\begin{equation}
    \theta_P:\begin{cases}
        \mathcal{P}_{1,q} &\to \mathcal{P}_{s,2}\\
        i^aX^\beta Z^\gamma &\mapsto i^aX^{\mathbf{b}}Z^{\mathbf{c}} \text{, where } \beta = \sum_{i=0}^{s-1}b_i\alpha_i \text{ and }\gamma = \sum_{i=0}^{s-1}c_i\alpha_i.
    \end{cases}
\end{equation}
In words, we expand out the labels $\beta$ and $\gamma$ of the elementary Paulis $X$ and $Z$ in terms of the fixed self-dual basis, and their coefficients give the corresponding powers of $X$ and $Z$ on each qubit.

It is worth emphasising that the isomorphism of vector spaces $\psi : \mathbb{C}^{q} \to (\mathbb{C}^2)^{\otimes s}$, and the map $\theta_P: \mathcal{P}_{1,q} \to \mathcal{P}_{s,2}$, which will turn out to be an isomorphism of groups, both depend on the choice of self-dual basis for $\mathbb{F}_{q}$ over $\mathbb{F}_2$, of which there are multiple, but it is important that the self-dual basis used when defining $\psi$ and $\theta_P$ is the same. Given the multiple choices for this self-dual basis, however, it is true that there are multiple ways to decompose a $q$-dimensional qudit into $s$ qubits.

The fact that $\theta_P$ forms an isomorphism (of groups) is easily seen --- indeed it is a homomorphism from
\begin{align}
    \theta_P(i^{a_1}X^{\beta_1}Z^{\gamma_1}i^{a_2}X^{\beta_2}Z^{\gamma_2}) &= \theta_P(i^{a_1+a_2}(-1)^{\tr(\beta_2\gamma_1)}X^{\beta_1+\beta_2}Z^{\gamma_1+\gamma_2})\\
    &= i^{a_1+a_2}(-1)^{\tr(\beta_2\gamma_1)}X^{\mathbf{b}_1+\mathbf{b}_2}Z^{\mathbf{c}_1+\mathbf{c}_2}\\
    &= i^{a_1}X^{\mathbf{b}_1}Z^{\mathbf{c}_1}i^{a_2}X^{\mathbf{b}_2}Z^{\mathbf{c}_2}\\
    &= \theta_P(i^{a_1}X^{\beta_1}Z^{\gamma_1})\theta_P(i^{a_2}X^{\beta_2}Z^{\gamma_2}),
\end{align}
where the calculation uses the obvious notation, and going into the third line we have used $\mathbf{b}_2\cdot \mathbf{c}_1 = \sum_{i=0}^{s-1}(\mathbf{b}_2)_i(\mathbf{c}_1)_i = \sum_{i,j=0}^{s-1}(\mathbf{b}_2)_i(\mathbf{c}_1)_j\delta_{ij} = \tr\left(\sum_{i=0}^{s-1}(\mathbf{b}_2)_i\alpha_i\sum_{j=0}^{s-1}(\mathbf{c}_1)_j\alpha_j\right) = \tr(\beta\gamma)$ which implies $X^{\mathbf{b}_2}Z^{\mathbf{c}_1} = (-1)^{\tr(\beta\gamma)}Z^{\mathbf{c}_1}X^{\mathbf{b}_2}$. The fact that $\theta_P$ forms a bijection follows quickly from the fact that, via decomposition in the self-dual basis, elements of $\mathbb{F}_{q}$ are in one-to-one correspondence with bit strings in $\mathbb{F}_2^s$.

Using $\psi$, let us now construct a more general isomorphism between unitaries acting on $\mathbb{C}^q$ and those acting on $(\mathbb{C}^2)^{\otimes s}$. This is defined as
\begin{equation}
    \theta:\begin{cases}
        U\left(\mathbb{C}^{q}\right) &\to U\left((\mathbb{C}^2)^{\otimes s}\right)\\
        U &\mapsto \theta(U)
    \end{cases}
\end{equation}
where $\theta(U)$ acts on $v \in (\mathbb{C}^2)^{\otimes s}$ as
\begin{equation}
    \theta(U)(v) = \psi U \psi^{-1}(v).
\end{equation}
Since $\psi$ maps the computational basis states of $\mathbb{C}^{q}$ to those of $(\mathbb{C}^2)^{\otimes s}$, all this definition does is simply take some $U \in U(\mathbb{C}^{q})$, imagined as a matrix in terms of the computational basis, and translate it into exactly the same matrix in terms of the computational basis of $(\mathbb{C}^2)^{\otimes s}$. It is, therefore, clear that if $U$ is a unitary, then indeed $\theta(U)$ is a unitary. Also, given unitaries $U_1, U_2 \in U\left(\mathbb{C}^{q}\right)$, and given any $v \in (\mathbb{C}^2)^{\otimes s}$, we have
\begin{align}
    \theta(U_1U_2)(v) &= \psi U_1 U_2\psi^{-1}(v)\\
    &=\psi U_1\psi^{-1}\psi U_2\psi^{-1}(v)\\
    &= \theta(U_1)\theta(U_2)(v)
\end{align}
which tells us that $\theta(U_1U_2) = \theta(U_1)\theta(U_2)$. Because $\theta$ is invertible, we find that $\theta$ is an isomorphism of the groups of unitaries. We claim, moreover, that $\theta$ specialises to an isomorphism from $\mathcal{P}_{1,q}$ to $\mathcal{P}_{s,2}$, and indeed that it acts in the same way as $\theta_P$. Considering any $\mathbf{e} \in \mathbb{F}_2^s$ and $\eta = \sum_{i=0}^{s-1}e_i\alpha_i$, we have
\begin{align}
    \theta(i^aX^\beta Z^\gamma)\ket{\mathbf{e}} &= \psi i^aX^\beta Z^\gamma\psi^{-1}\ket{\mathbf{e}}\\
    &=\psi i^a X^\beta Z^\gamma \ket{\eta}\\
    &= i^a(-1)^{\tr(\gamma\eta)}\psi\ket{\eta+\beta}\\
    &= i^a(-1)^{\tr(\gamma\eta)}\ket{\mathbf{e}+\mathbf{b}}\\
    &= i^aX^{\mathbf{b}}Z^{\mathbf{c}}\ket{\mathbf{e}}\\
    &= \theta_P(i^aX^\beta Z^\gamma)\ket{\mathbf{e}}
\end{align}
so that indeed we find $\theta(P) = \theta_P(P)$ for all $P \in \mathcal{P}_{1,q}$, thus confirming that $\theta$ specialises to an isomorphism between Pauli operators, and acts in the same way as $\theta_P$. With the usual definition of the Clifford group on $s$ qubits:
\begin{equation}
    \mathcal{C}_{s,2} = \{U \in U\left((\mathbb{C}^2)^{\otimes s}\right):UPU^\dagger \in \mathcal{P}_{s,2} \text{ for all } P \in \mathcal{P}_{s,2}\},
\end{equation}
we can check that $\theta$ specialises to an isomorphism from $\mathcal{C}_{1,q}$ to $\mathcal{C}_{s,2}$. Indeed, let $U \in \mathcal{C}_{1,q}$ and $P \in \mathcal{P}_{s,2}$. Then we find
\begin{equation}
    \theta(U)P\theta(U)^\dagger = \theta(U\theta^{-1}(P)U^\dagger)
\end{equation}
and because $\theta^{-1}(P) \in \mathcal{P}_{1,q}$, we have $U\theta^{-1}(P)U^\dagger \in \mathcal{P}_{1,q}$, which leads to $\theta(U)P\theta(U)^\dagger \in \mathcal{P}_{s,2}$, from which we conclude that $\theta(U) \in \mathcal{C}_{s,2}$. A similar calculation shows that if $U \in \mathcal{C}_{s,2}$ then $\theta^{-1}(U) \in \mathcal{C}_{1,q}$, showing that indeed $\theta$ specialises to an isomorphism from $\mathcal{C}_{1,q}$ to $\mathcal{C}_{s,2}$.

The next thing to consider is the Clifford hierarchy, originally introduced by Gottesman and Chuang \cite{gottesman1999demonstrating}, and defined recursively in our setting as
\begin{align}
    \mathcal{C}_{1,q}^{(1)} &= \mathcal{P}_{1,q},\\
    \mathcal{C}_{1,q}^{(k)} &= \{U \in U(\mathbb{C}^{q}): UPU^\dagger \in \mathcal{C}_{1,q}^{(k-1)}\}\text{ for }k > 1,
\end{align}
for one qudit of dimension $q$, and similarly for $s$ qubits. Note that, in particular, $\mathcal{C}^{(2)}_{1,q} = \mathcal{C}_{1,q}$ and $\mathcal{C}^{(2)}_{s,2} = \mathcal{C}_{s,2}$. Of great interest in this work will be diagonal gates in the third level of the Clifford hierarchy. While the gates in the $k$-th level of the Clifford hierarchy do not form a group for $k > 2$, the diagonal gates in the $k$-th level always do \cite{zeng2008semi,cui2017diagonal}. By an easy induction using the same ideas as above, and the fact that $\theta$ maps diagonal gates to diagonal gates, it may be seen that $\theta$ restricts to an isomorphism between the diagonal gates in the $k$-th level of the Clifford hierarchy for one $q$-dimensional qudit and $s$ qubits for every $k \geq 1$.

With this formalism set up for the translation between individual $q$-dimensional qudits and sets of $s$ qubits, it remains to consider what it means to build a stabiliser code for qudits of dimension $q$. Again, this is covered in \cite{gottesman2016surviving}. The only codes we will be interested in are CSS codes, which we recall are stabiliser codes in which each generator of the stabiliser may be taken to be either an $X$ Pauli operator or a $Z$ Pauli operator. 

Consider again two classical linear codes $\mathcal{L}_X \subseteq \mathbb{F}_q^n$ and $\mathcal{L}_Z \subseteq \mathbb{F}_q^n$ \footnote{In particular, $\mathcal{L}_X$ and $\mathcal{L}_Z$ are $\mathbb{F}_q$-linear spaces.}, where $n$ is the number of qudits on which our code is defined. To define a meaningful quantum code, the requirement placed on these spaces is that $\mathcal{L}_X \subseteq \mathcal{L}_Z^\perp$, where
\begin{equation}
    \mathcal{L}_Z^\perp = \left\{x \in \mathbb{F}_q^n: \sum_{i=1}^nx_iy_i = 0\text{ for all } y \in \mathcal{L}_Z\right\}
\end{equation}
and arithmetic in this definition occurs over $\mathbb{F}_q$. We then define the code $CSS(X,\mathcal{L}_X;Z,\mathcal{L}_Z) \subseteq (\mathbb{C}^q)^{\otimes n}$ in the usual way:
\begin{equation}
    CSS(X,\mathcal{L}_X;Z,\mathcal{L}_Z) = \left\{\ket{\psi} \in (\mathbb{C}^q)^{\otimes n}: X^{\mathbf{x}}\ket{\psi} = \ket{\psi}\text{ and } Z^{\mathbf{z}}\ket{\psi} = \ket{\psi} \text{ for all } \mathbf{x} \in \mathcal{L}_X \text{ and } \mathbf{z} \in \mathcal{L}_Z\right\},
\end{equation}
which can be defined in words as the simultaneous $+1$-eigenspace of all the $X$-stabiliser elements described by elements of $\mathcal{L}_X$ and $Z$-stabiliser elements described by elements of $\mathcal{L}_Z$. We emphasise again the use of the notation $X^{\mathbf{x}}$ with $\mathbf{x} \in \mathbb{F}_q^n$; $X^{\mathbf{x}} = \bigotimes_{i=1}^nX^{\mathbf{x}_i}$. The condition $\mathcal{L}_X \subseteq \mathcal{L}_Z^\perp$ is what makes this definition make sense. Indeed, given $\mathbf{x} \in \mathcal{L}_X$ and $\mathbf{z} \in \mathcal{L}_Z$, the stabilisers $X^{\mathbf{x}}$ and $Z^{\mathbf{z}}$ commute:
\begin{equation}
    X^{\mathbf{x}}Z^{\mathbf{z}} = (-1)^{\sum_{i=1}^n\tr(\mathbf{x}_i\mathbf{z}_i)}Z^{\mathbf{z}}X^{\mathbf{x}}= (-1)^{\tr(\sum_{i=1}^n\mathbf{x}_i\mathbf{z}_i)}Z^{\mathbf{z}}X^{\mathbf{x}}. = Z^{\mathbf{z}}X^{\mathbf{x}}. 
\end{equation}
We emphasise here that the sum can be brought inside the trace because $\tr$ is a linear map from $\mathbb{F}_q$ to $\mathbb{F}_2$ (see Proposition \ref{traceProps}) and because the sum is in an exponent with base $-1$, meaning that the sum outside the trace may be taken modulo 2.

We note that because $\mathcal{L}_X$ and $\mathcal{L}_Z$ are $\mathbb{F}_q$-linear vector spaces, the quantum code $CSS(X,\mathcal{L}_X;Z,\mathcal{L}_Z)$ is a ``true $\mathbb{F}_q$-stabiliser code'' in the sense of \cite{gottesman2016surviving}. What this means for us is that the code encodes an integer number of $q$-dimensional qudits. In particular, as a complex vector space,
\begin{equation}\label{numEncodedQudits}
    \dim CSS(X,\mathcal{L}_X;Z,\mathcal{L}_Z) = q^{n-\dim\mathcal{L}_X-\dim\mathcal{L}_Z},
\end{equation}
where $\dim\mathcal{L}_X$ and $\dim\mathcal{L}_Z$ denote the dimensions of $\mathcal{L}_X$ and $\mathcal{L}_Z$ as vector spaces over $\mathbb{F}_q$, respectively. From this, we learn that the code encodes $n-\dim\mathcal{L}_X-\dim\mathcal{L}_Z$ qudits of dimension $q$. However, this can all be translated into a statement about qubit codes. Indeed, the code may be equally well thought of as a stabiliser code on $ns$ qubits, which encodes $s(n-\dim\mathcal{L}_X-\dim\mathcal{L}_Z)$ logical qubits. This may be achieved by simply mapping every state in $CSS(X,\mathcal{L}_X;Z,\mathcal{L}_Z)$ under $\psi^{\otimes n}$. Equivalently, one may define the corresponding qubit code by mapping every stabiliser $X^{\mathbf{x}}$ and $Z^{\mathbf{z}}$ under $\theta^{\otimes n}$ to form a qubit stabiliser group. 

Let us go into detail about how to determine a list of stabiliser generators for the qubit code from those of the qudit code. Let us suppose that $\hat{\mathcal{L}}_X\subseteq \mathcal{L}_X$ is an $\mathbb{F}_q$-basis for $\mathcal{L}_X$, and that $\hat{\mathcal{L}}_Z\subseteq \mathcal{L}_Z$ is an $\mathbb{F}_q$-basis for $\mathcal{L}_Z$. With $\{\alpha_i\}_{i=0}^{s-1}$ our self-dual basis for $\mathbb{F}_q$ over $\mathbb{F}_2$, we then let
\begin{align}
    \tilde{\mathcal{L}}_X &\coloneq \{\alpha_i\}_{i=0}^{s-1}\cdot\hat{\mathcal{L}}_X\\
    \tilde{\mathcal{L}}_Z &\coloneq \{\alpha_i\}_{i=0}^{s-1}\cdot\hat{\mathcal{L}}_Z.
\end{align}
We then have a set of $X$-stabiliser generators for the qubit code
\begin{equation}
    \{\theta_P(X^{v}): v \in \tilde{\mathcal{L}}_X\}
\end{equation}
and $Z$-stabiliser generators
\begin{equation}
    \{\theta_P(Z^{v}): v \in \tilde{\mathcal{L}}_Z\}
\end{equation}
where, as always $X^v = \bigotimes_{i=1}^nX^{v_i}$, and similarly for $Z$. Aside from the fact that all these operators are independent qubit operators, we will see why this makes sense as we now discuss what it means to measure stabiliser generators in the case of $q$-dimensional qudits. Given a code state $\ket{\psi} \in CSS(X,\mathcal{L}_X;Z,\mathcal{L}_Z)$, let us suppose it is affected by an error $Z^{v_\mathrm{err}}$ for some $v_\mathrm{err} \in \mathbb{F}_q^n$. Now given some $v_\mathrm{stab} \in \hat{\mathcal{L}}_X$, measuring the stabiliser element $X^{v_\mathrm{stab}}$ leads to an eigenvalue
\begin{equation}
    (-1)^{\tr(v_\mathrm{stab}\cdot v_\mathrm{err})}
\end{equation}
where the dot product is taken over $\mathbb{F}_q^n$. On the other hand, one can also measure the stabiliser element for $\alpha_iv_\mathrm{stab}$ to obtain eigenvalues
\begin{equation}\label{basisMeasurementsSDB}
    (-1)^{\tr(\alpha_iv_\mathrm{stab}\cdot v_\mathrm{err})}.
\end{equation}
We claim, then, that given knowledge of all elements in Equation \eqref{basisMeasurementsSDB} (for $i=0, \ldots, s-1$), we can determine the value of $v_\mathrm{stab}\cdot v_\mathrm{err}$. Indeed, writing $v_\mathrm{stab}\cdot v_\mathrm{err} = \sum_{i=0}^{s-1}\omega_i\alpha_i$ for $\omega_i \in \mathbb{F}_2$, one finds that
\begin{equation}
    \tr(\alpha_i v_\mathrm{stab}\cdot v_\mathrm{err}) = \omega_i
\end{equation}
from which $v_\mathrm{stab} \cdot v_\mathrm{err}$ may be deduced. Ultimately, if we measure the stabilisers of a qudit code (where the qudits are of dimension $q$), we are measuring the eigenvalues of the operators in $\tilde{\mathcal{L}}_X$ and $\tilde{\mathcal{L}}_Z$, which will be in $\pm 1$, but these values in $\pm 1$ can equivalently be thought of as (a shorter list of) elements in $\mathbb{F}_q$.

As for the code's error correction capability, we have that, as always, the set of undetectable logical Pauli errors is formed of the set of Paulis that commute with all stabilisers, while themselves not being stabilisers. This set of undetectable logical Pauli errors can then either be thought of as a set of Pauli operators acting on $n$ qudits of dimension $q$, or equivalently as a set of Pauli operators acting on $ns$ qubits. As sets of operators, these are entirely equivalent, although there is a slight difference when we think about the distance of the code. The distance is always defined as the minimum weight of an undetectable logical Pauli error, although in general this will be different based on whether we count the weight in terms of non-trivial Pauli operators acting on $q$-dimensional qudits, or in terms of non-trivial qubit Pauli operators\footnote{Note also that if one has a fixed code determined by $\mathbb{F}_q$-linear spaces $\mathcal{L}_X \subseteq \mathbb{F}_q^n$ and $\mathcal{L}_Z \subseteq \mathbb{F}_q^n$, then the set of logical error operators, when thought of in terms of qubit operators will change if one changes the isomorphism between the Pauli groups. Recall from earlier that this isomorphism may be changed by changing the self-dual basis of $\mathbb{F}_{q}$ over $\mathbb{F}_2$ that one uses.}. However, as long as one is aware of this difference, this will not be a problem for us. Indeed, because a Pauli operator acting on one $q$-dimensional qudit is non-trivial if and only if the corresponding Pauli operator acting on $s$ qubits is non-trivial, the distance of the code, when counted in terms of qubit Pauli operators, will always be at least the distance of the code when counted in terms of $q$-dimensional qudit Pauli operators.

\subsection{Algebraic Geometry Codes}\label{AGPrelims}

In this work, we employ a class of classical evaluation codes: algebraic geometry codes. It would be impossible to give a full account of the subject of algebraic geometry codes in a reasonable amount of space. Nevertheless, we will give here the essentials for our purposes, referring the reader to one of the many excellent references on the subject for more details. In particular, for a purely algebraic approach, we recommend Stichtenoth \cite{stichtenoth2009algebraic}, or for a more detailed treatment of the interplay between algebraic function fields and varieties, see Niederreiter and Xing \cite{niederreiter2009algebraic}. The following treatment follows that of Stichtenoth closely. Throughout this section, $\mathbb{F}_q$ represents the field of $q$ elements, where $q=p^s$ may be any prime power, although the case $q=2^{10}=1024$ ($p=2$ and $s=10$) is particularly relevant in this work.

\subsubsection{Function Fields, Places and Valuations}

Let us begin with the most fundamental definitions in the algebraic treatment of algebraic geometry codes. Function fields are the large algebraic objects with which algebraic geometry codes are built, and `places' and `valuations' are two ways of describing the ``points'' in function fields at which the codes are defined.

\begin{definition}[Field Extensions and the Degree of a Field Extension]
A field extension $L$ of $K$, denoted $K \subseteq L$ or $L/K$, is nothing more than a field $L$ that contains another, $K$. Given such a field extension, $L$ may be viewed as a vector space over the field $K$. The degree of the extension, written $[L:K]$, is the dimension of this vector space, and the extension is called finite if its degree is finite.
\end{definition}
As a pertinent example, the finite field $\mathbb{F}_{p^s}$ is a field extension of $\mathbb{F}_p$ of degree $s$.

\begin{definition}[Algebraic Function Field over a Finite Field]
    An algebraic function field of one variable $F$ over $\mathbb{F}_q$ is a finite field extension of $\mathbb{F}_q(x)$ (where $\mathbb{F}_q(x)$ denotes the field of rational functions in the variable $x$ with coefficients in the field $\mathbb{F}_q$).
\end{definition}

An algebraic function field of one variable may simply be referred to as an `algebraic function field', or even just as a `function field' when the meaning is clear. We denote an algebraic function field $F$ over $\mathbb{F}_q$ as $F/\mathbb{F}_q$.

\begin{definition}[Valuation Ring] A valuation ring of a function field $F/\mathbb{F}_q$ is a ring $\mathcal{O}\subseteq F$ such that:
\begin{enumerate}
    \item $\mathbb{F}_q \subsetneq \mathcal{O} \subsetneq F$;
    \item for each $x \in F$, $x \in \mathcal{O}$ or $x^{-1} \in \mathcal{O}$.
\end{enumerate}
\end{definition}

As a subring of a field, a valuation ring $\mathcal{O}$ naturally forms a commutative ring. Given a valuation ring $\mathcal{O}$, a subset $I\subset\mathcal{O}$ is called an ideal if $I$ forms an additive subgroup of $\mathcal{O}$, and for every $z\in\mathcal{O}$ and $x\in I$, their product is in $I$, i.e., $zx\in I$.
An ideal $I$ that forms a proper subset of $\mathcal{O}$, $I\neq\mathcal{O}$, is called a proper ideal.
A proper ideal $I$ is called a maximal ideal if there exists no other proper ideal $J$ with $I$ a proper subset of $J$.
For the ring $\mathcal{O}$, a unit is an element having a multiplicative inverse in $\mathcal{O}$, and the set of units is denoted by $\mathcal{O}^\times$.
The valuation ring $\mathcal{O}$ has the following properties.

\begin{proposition}[From Proposition 1.1.5 and Theorem 1.1.6 of \cite{stichtenoth2009algebraic}. See also Theorem 1.1.13 of~\cite{stichtenoth2009algebraic}]
    Let $\mathcal{O}$ be a valuation ring of the function field $F/\mathbb{F}_q$. Then the following hold.
    \begin{enumerate}
        \item The valuation ring $\mathcal{O}$ is a discrete valuation ring. To be explicit, $\mathcal{O}$ contains an element $t$, called a prime element, such that for any $0 \neq z \in \mathcal{O}$, $z$ may be written uniquely as $z=ut^n$ for some $u \in \mathcal{O}^\times$ and $n \in \mathbb{Z}_{\geq 0}$, where $\mathcal{O}^\times$ is the set of units in $\mathcal{O}$.
        \item The valuation ring $\mathcal{O}$ contains a unique maximal ideal which we denote $P$. If $t$ is a prime element of $\mathcal{O}$, we have
        \begin{equation}
            P = \{ut^n : u \in \mathcal{O}^\times, n \in \mathbb{Z}_{> 0}\} \cup \{0\}.
        \end{equation}
        One may, therefore, write $P = t\mathcal{O}$.
        \item Let $t$ be a prime element of $\mathcal{O}$. Then any element $0 \neq x \in F$ may be uniquely represented as $x=ut^n$ for $n \in \mathbb{Z}$ and $u \in \mathcal{O}^\times$.
    \end{enumerate}
\end{proposition}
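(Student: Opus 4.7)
The plan is to establish all three parts in sequence, first showing that $\mathcal{O}$ is a local ring whose unique maximal ideal $P$ is principal, and then reading off the structural statements from powers of a generator.

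First I would define $P := \mathcal{O}\setminus\mathcal{O}^\times$ and prove it is an ideal; it is then automatically the unique maximal ideal, because every proper ideal contains only non-units. The leverage here is the valuation-ring axiom: for any two nonzero $x,y\in\mathcal{O}$, at least one of $x/y$ or $y/x$ lies in $\mathcal{O}$. If $x,y\in P$ and say $y/x\in\mathcal{O}$, write $x+y=x(1+y/x)$; if the left side were a unit, so would $x$ be, contradicting $x\in P$. A similar one-line check shows $zx\in P$ for $x\in P, z\in\mathcal{O}$. Hence $P$ is an ideal and $\mathcal{O}$ is local.

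Next I would show $\bigcap_{n\geq 1}P^n=\{0\}$, which is the step where the hypothesis that $F/\mathbb{F}_q$ is an algebraic function field (rather than an arbitrary valued field) actually enters. Since $\mathbb{F}_q\subsetneq\mathcal{O}\subsetneq F$ and $F$ has transcendence degree one over $\mathbb{F}_q$, I would produce a transcendental element of $F$ and use its minimal polynomial relation — together with the fact that in a valuation ring any two principal ideals are totally ordered by inclusion — to rule out a nonzero element lying in every $P^n$. Granted this, I pick any $t\in P\setminus P^2$, which exists because otherwise $P=P^2$ would iterate to $P\subseteq\bigcap_n P^n=\{0\}$, contradicting $\mathcal{O}\supsetneq\mathbb{F}_q$. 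To prove $P=t\mathcal{O}$, I take any $y\in P$ and use the comparability of $y\mathcal{O}$ and $t\mathcal{O}$: if $t\mathcal{O}\subseteq y\mathcal{O}$, write $t=yw$ with $w\in\mathcal{O}$; if $w\in P$ then $t\in P^2$, a contradiction, so $w\in\mathcal{O}^\times$ and hence $y\in t\mathcal{O}$.

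With $t$ in hand, parts (1) and (2) fall out. For $0\neq z\in\mathcal{O}$, let $n$ be the largest integer with $z\in P^n=t^n\mathcal{O}$; this maximum is finite by $\bigcap_n P^n=\{0\}$, and writing $z=ut^n$ forces $u\in\mathcal{O}^\times$ (else $z\in P^{n+1}$). Uniqueness of the pair $(u,n)$ is immediate by cancellation in $F$. This establishes (1) and, combined with $P=t\mathcal{O}$, yields the explicit description of $P$ in (2). For (3), given $0\neq x\in F$, the valuation-ring axiom puts $x$ or $x^{-1}$ in $\mathcal{O}$; apply (1) to whichever lies in $\mathcal{O}$ and invert if necessary to extend the exponent to all of $\mathbb{Z}$, with uniqueness inherited from (1).

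The main obstacle I anticipate is the intersection statement $\bigcap_{n\geq 1}P^n=\{0\}$. The remaining steps are formal consequences of the valuation-ring axioms, but this one genuinely requires the finite-degree extension structure of $F/\mathbb{F}_q(x)$ and is what supplies the \emph{discrete} in ``discrete valuation ring''; without it, $\mathcal{O}$ could in principle admit a non-$\mathbb{Z}$-valued valuation and the unique factorisation by powers of a single prime element would fail.
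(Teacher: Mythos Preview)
The paper does not supply its own proof of this proposition; it is a cited preliminary result taken directly from Stichtenoth's textbook (Propositions 1.1.5, Theorem 1.1.6, Theorem 1.1.13). Your proposal is therefore not competing against an argument in the paper, but against the textbook proof the paper relies on.

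Your overall strategy is correct and standard. The identification of $P=\mathcal{O}\setminus\mathcal{O}^\times$ as an ideal, the comparability of principal ideals via the valuation-ring axiom, and the deduction of parts (1)--(3) once $P=t\mathcal{O}$ is known, are all sound. You are also right that the genuine content lies in forcing the valuation to be \emph{discrete}, and that this is where the function-field hypothesis enters.

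Where you differ slightly from Stichtenoth is in the packaging of that key step. Stichtenoth argues via an ascending chain: starting from any $0\neq x_1\in P$, one builds $x_1\mathcal{O}\subsetneq x_2\mathcal{O}\subsetneq\cdots$ and shows it terminates by proving $\mathcal{O}/x_1\mathcal{O}$ is a finite-dimensional $\mathbb{F}_q$-vector space (this is where $[F:\mathbb{F}_q(x_1)]<\infty$ is used). You instead aim directly at $\bigcap_n P^n=\{0\}$ and then pick $t\in P\setminus P^2$. These are equivalent routes, but your sketch of how to actually prove $\bigcap_n P^n=\{0\}$ (``produce a transcendental element \ldots\ use its minimal polynomial relation'') is vague and would need to be fleshed out into something like Stichtenoth's finiteness argument to be complete. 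As you yourself flag, this is the one step that is not a formal consequence of the axioms, and it deserves more than a one-line gesture.
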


It follows from the first and third points that $F$ is nothing more than the quotient field (also known as the field of fractions) of $\mathcal{O}$, and this is true for any $\mathcal{O}$.

The maximal ideal $P$ of a valuation ring $\mathcal{O}$ is of great importance and is given a name.

\begin{definition}[Place of a Function Field]

A place $P$ of a function field $F/\mathbb{F}_q$ is the (unique) maximal ideal of some valuation ring in $F$. We let
\begin{align}
    \mathbb{P}_F \coloneq \{P:P \text{ is a place of $F/\mathbb{F}_q$}\}
\end{align}
denote the set of all places of $F/\mathbb{F}_q$.    
\end{definition}

Our next definition appears to be a slight detour although its significance will become clear.

\begin{definition}\label{discreteValuations}
    A discrete valuation of $F/\mathbb{F}_q$ is a function $\nu:F \to \mathbb{Z} \cup \{\infty\}$ such that, for all $x,y \in F$,
    \begin{enumerate}
        \item $\nu(x) = \infty \iff x = 0$;
        \item $\nu(xy) = \nu(x) + \nu(y)$;
        \item $\nu(x+y) \geq \min\{\nu(x),\nu(y)\}$;
        \item there exists an element $z \in F$ such that $\nu(z) = 1$;
        \item $\nu(a) = 0$ for all $0 \neq a \in \mathbb{F}_q$.
    \end{enumerate}
\end{definition}
Here, $\infty$ is a symbol that behaves as $\infty + \infty = \infty + n = n + \infty  = \infty$ and $\infty > n$ for every $n \in \mathbb{Z}$. We now connect discrete valuations to our previous discussion.

\begin{lemma}[From Theorem 1.1.13 of \cite{stichtenoth2009algebraic} and the preceding remark]\label{PlaceValuation}
Let $\mathcal{O}$ be a valuation ring of the function field $F/\mathbb{F}_q$ with a place $P$, and let us choose a prime element $t$ of $\mathcal{O}$. The function $\nu_P: F \to \mathbb{Z} \cup \{\infty\}$ defined by
\begin{equation}
    \nu_P(x)\coloneq \begin{cases}
        n &\text{ if } 0 \neq x = ut^n \text{ for } u \in \mathcal{O}^\times\\
        \infty &\text{ if } x = 0
    \end{cases}
\end{equation}
is a discrete valuation of the function field $F/\mathbb{F}_q$. This map is independent of the choice of prime element $t$ in the valuation ring $\mathcal{O}$.
\end{lemma}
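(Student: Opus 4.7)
The plan is to verify the five axioms of Definition \ref{discreteValuations} directly from the third point of the preceding proposition, which already gives the unique representation $0 \neq x = ut^n$ for $u \in \mathcal{O}^\times$ and $n \in \mathbb{Z}$. This unique representation is what makes $\nu_P$ well-defined in the first place, so I would open the proof by invoking it explicitly.

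First I would dispatch the easy axioms. Axiom (1) is immediate from the definition. For axiom (2), writing $x=u_1 t^{n_1}$ and $y = u_2 t^{n_2}$ gives $xy = (u_1u_2) t^{n_1+n_2}$ with $u_1u_2 \in \mathcal{O}^\times$ since units form a multiplicative group; the cases where $x$ or $y$ is zero are handled by the conventions on $\infty$. Axiom (4) holds by taking $z=t$, since $t = 1\cdot t^1$. Axiom (5) holds because any $0 \neq a \in \mathbb{F}_q \subseteq \mathcal{O}$ has an inverse in $\mathbb{F}_q \subseteq \mathcal{O}$, so $a \in \mathcal{O}^\times$ and $a = a\cdot t^0$.

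The one axiom requiring a small idea is (3), the ultrametric inequality. Assume without loss of generality $n_1 \leq n_2$ and factor
\begin{equation}
    x+y = u_1 t^{n_1} + u_2 t^{n_2} = t^{n_1}\bigl(u_1 + u_2 t^{n_2-n_1}\bigr).
\end{equation}
The parenthesised element lies in $\mathcal{O}$. If it vanishes, then $x+y=0$ and $\nu_P(x+y)=\infty \geq n_1$; otherwise, by the unique representation it equals $u' t^m$ for some $u' \in \mathcal{O}^\times$ and $m \in \mathbb{Z}_{\geq 0}$, whence $\nu_P(x+y) = n_1 + m \geq n_1 = \min\{\nu_P(x),\nu_P(y)\}$. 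This is the only step where the assumption $m \geq 0$ (i.e., that the parenthesised element lies in $\mathcal{O}$, not just in $F$) is used, and I would emphasise it.

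Finally, for independence from the choice of prime element, let $t$ and $t'$ both be prime elements of $\mathcal{O}$. Since $P = t\mathcal{O} = t'\mathcal{O}$, one can write $t' = vt^a$ and $t = wt'^b$ with $v,w \in \mathcal{O}^\times$ and $a,b \in \mathbb{Z}_{\geq 0}$; substituting one into the other and using uniqueness of the representation of $t$ forces $ab=1$, hence $a=b=1$. So $t' = vt$ with $v \in \mathcal{O}^\times$. Then any $x = ut^n$ may be rewritten as $x = u v^{-n}(t')^n$ with $u v^{-n} \in \mathcal{O}^\times$, which gives the same value $n$ for the valuation defined from $t'$. I expect the main obstacle to be keeping track of the distinction between $n \in \mathbb{Z}_{\geq 0}$ (for elements of $\mathcal{O}$) and $n \in \mathbb{Z}$ (for elements of $F$) throughout, but beyond this bookkeeping the proof is entirely mechanical.
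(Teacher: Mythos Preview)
The paper does not supply its own proof of this lemma; it is quoted directly from Stichtenoth (Theorem 1.1.13 and the preceding remark) without argument. Your direct verification of the five axioms from the unique factorisation $x = ut^n$ is correct and is essentially the standard textbook proof; the only minor omission is that you should also note the trivial case $x=0$ or $y=0$ when checking axiom (3), as you did for axiom (2).
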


With this, it turns out that places, valuation rings, and discrete valuations of a function field $F/\mathbb{F}_q$ can be thought of as the same thing.

\begin{theorem}[From Theorem 1.1.13 and the remark preceding Definition 1.1.9 of \cite{stichtenoth2009algebraic}]
\label{thm:places_valuation_rings_discrete_valuations}

Let $F/\mathbb{F}_q$ be a function field. There is a natural one-to-one correspondence between its places, valuation rings, and discrete valuations as follows.

\begin{enumerate}
    \item Given a valuation ring $\mathcal{O}$, it contains a unique place $P$: its unique maximal ideal. On the other hand, given any place $P$, it arises from a unique valuation ring $\mathcal{O} = \{x \in F: x^{-1} \notin P\}$.
    \item Given a valuation ring $\mathcal{O}$ and its place $P$, they specify a unique discrete valuation $\nu_P$ as described in Lemma \ref{PlaceValuation}. Conversely, any discrete valuation $\nu$ specifies a unique place and corresponding valuation ring of the function field via
    \begin{align}
        P &= \{x \in F: \nu(x) > 0\},\\
        \mathcal{O} &= \{x \in F: \nu(x) \geq 0\},
    \end{align}
respectively.    
\end{enumerate}
\end{theorem}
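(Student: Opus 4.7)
The plan is to establish the two asserted bijections independently and then check they compose consistently. For the correspondence between valuation rings $\mathcal{O}$ and places $P$, the forward map is immediate from the preceding proposition, which guarantees each valuation ring has a unique maximal ideal. For the reverse direction, given a place $P$ arising as the maximal ideal of some ambient valuation ring $\mathcal{O}_0$, I would verify the formula $\mathcal{O}_0 = \{x \in F : x^{-1} \notin P\}$. This uses the unique factorisation $x = ut^n$, with $u \in \mathcal{O}_0^\times$, $n \in \mathbb{Z}$, and $t$ a prime element, supplied by the preceding proposition: $x \in \mathcal{O}_0$ iff $n \geq 0$, while $x^{-1} = u^{-1}t^{-n} \in P = t\mathcal{O}_0$ iff $n < 0$. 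Since $\mathcal{O}_0$ is thus recoverable from $P$, the assignment $\mathcal{O} \mapsto P$ is injective.

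For the correspondence between valuation rings and discrete valuations, the forward direction is exactly Lemma \ref{PlaceValuation}. For the reverse, given a discrete valuation $\nu$, I would set
\begin{equation}
\mathcal{O} \coloneq \{x \in F : \nu(x) \geq 0\}, \qquad P \coloneq \{x \in F : \nu(x) > 0\},
\end{equation}
and verify that (a) $\mathcal{O}$ is a valuation ring, (b) $P$ is its unique maximal ideal, and (c) the discrete valuation $\nu_P$ produced by Lemma \ref{PlaceValuation} coincides with $\nu$. Items (a) and (b) are nearly formal consequences of Definition \ref{discreteValuations}: axioms 2 and 3 yield subring closure, axiom 5 gives $\mathbb{F}_q \subseteq \mathcal{O}$, axiom 4 supplies an element $z$ with $\nu(z)=1$ that witnesses both $\mathbb{F}_q \subsetneq \mathcal{O}$ and $\mathcal{O} \subsetneq F$ via $\nu(z^{-1}) = -1$, and the identity $\nu(x) + \nu(x^{-1}) = \nu(1) = 0$ supplies the $x$-or-$x^{-1}$ dichotomy. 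For the maximal ideal part, the units of $\mathcal{O}$ are exactly the elements with $\nu(x) = 0$, so $\mathcal{O} \setminus \mathcal{O}^\times = P$, and the non-units in a commutative local ring form the unique maximal ideal. For (c), any $z$ with $\nu(z) = 1$ is a prime element, and writing $x = uz^n$ in the canonical form of the preceding proposition gives $\nu_P(x) = n = \nu(x)$ by additivity.

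With both correspondences in hand, their mutual compatibility is automatic: starting from a valuation ring, forming $\nu_P$ and then recovering $\mathcal{O} = \{x : \nu_P(x) \geq 0\}$ and $P = \{x : \nu_P(x) > 0\}$ returns the original pair by construction; starting from $\nu$, the induced $\nu_P$ equals $\nu$ by step (c). The main obstacle threading through the whole argument is to deploy the discrete-valuation axioms at exactly the right moments to exclude degenerate cases — in particular, axiom 4 (non-triviality, i.e., surjectivity onto $\mathbb{Z} \cup \{\infty\}$) is what forces $\mathcal{O} \subsetneq F$ and delivers the prime element that anchors (c), while axiom 5 (vanishing on $\mathbb{F}_q^\times$) is what guarantees $\mathbb{F}_q \subsetneq \mathcal{O}$ and therefore matches the valuation-ring definition used here. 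Once these are invoked correctly, the remaining verifications are routine ring-theoretic checks.
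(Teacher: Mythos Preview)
The paper does not prove this theorem; it is quoted from Stichtenoth's book \cite{stichtenoth2009algebraic} as background material, with no argument supplied. Your proposal is correct and follows precisely the standard textbook route (indeed, essentially the argument one finds in Stichtenoth), so there is nothing to compare against. One small wording issue: when you say ``the non-units in a commutative local ring form the unique maximal ideal,'' you have the implication backwards --- what you need, and what is true, is that \emph{if} the set of non-units happens to be an ideal (which you should verify using axioms 2 and 3), \emph{then} that ideal is the unique maximal ideal, since every proper ideal consists of non-units.
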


It therefore makes sense to denote by $\mathcal{O}_P$ the valuation ring corresponding to the place $P$. Because $P$ is an ideal of $\mathcal{O}_P$, we can consider the quotient ring $\mathcal{O}_P/P$ and, because $P$ is a maximal ideal, this forms a field.

\begin{definition}[Residue Class Field and Residue Class Map]\label{residueClassFieldAndMapDef}
    Given a place $P$ and corresponding valuation ring $\mathcal{O}_P$ of the function field $F/\mathbb{F}_q$, we define the residue class field of $P$ as $F_P = \mathcal{O}_P/P$. The residue class map of $P$ is the quotient map from $\mathcal{O}_P$ into $F_P$:
    \begin{align}
        \mathcal{O}_P \ni x \mapsto x(P) \coloneq x+P \in F_P
    \end{align}
    which is then a ring homomorphism\footnote{For the reader that references \cite{stichtenoth2009algebraic}, we note that Stichtenoth defines the residue class map as a map from $F$ into $F_P$ acting on elements in $\mathcal{O}_P$ as in our definition, but mapping any element not in $\mathcal{O}_P$ to the symbol $\infty$. For our purposes, it is simpler to just define the residue class map on elements in $\mathcal{O}_P$, as is done in the book of Niederreiter and Xing \cite{niederreiter2009algebraic} (see their Definition 1.5.10). There is no difference in what follows.}.
\end{definition}

We can now elucidate the structure of residue class fields. 

\begin{proposition}[From Proposition 1.1.15 of \cite{stichtenoth2009algebraic} and the discussion preceding Definition 1.1.14]\label{residueClassFieldStructure}

Every residue class field $F_P$ contains (an isomorphic copy of the field) $\mathbb{F}_q$. Moreover, $F_P$ is a finite extension of $\mathbb{F}_q$.
    
\end{proposition}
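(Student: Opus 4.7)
The plan is to prove the two claims separately. For the first—that $F_P$ contains an isomorphic copy of $\mathbb{F}_q$—I would consider the restriction of the residue class map to $\mathbb{F}_q$. Property 5 of Definition \ref{discreteValuations} gives $\nu_P(a) = 0$ for every $0 \neq a \in \mathbb{F}_q$, so $\mathbb{F}_q \subseteq \mathcal{O}_P$. The restriction $\mathbb{F}_q \hookrightarrow \mathcal{O}_P \to F_P$ is a ring homomorphism from a field, hence either injective or zero; since it sends $1 \mapsto 1 \neq 0$ in $F_P$, it is injective, identifying $\mathbb{F}_q$ with a subfield of $F_P$.

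For the second claim—finiteness of $[F_P : \mathbb{F}_q]$—my plan is to exhibit a nonzero transcendental element $x \in P$ and then to prove the bound $[F_P : \mathbb{F}_q] \leq [F : \mathbb{F}_q(x)]$, which is finite because $F/\mathbb{F}_q$ is a function field (any transcendental element of $F$ generates a rational subfield over which $F$ is a finite extension). To produce such an $x$: because $\mathcal{O}_P \subsetneq F$, pick $y \in F \setminus \mathcal{O}_P$; axiom 2 of a valuation ring forces $y^{-1} \in \mathcal{O}_P$, and $y^{-1}$ cannot be a unit of $\mathcal{O}_P$ (else $y \in \mathcal{O}_P$), so $y^{-1}$ lies in $P$ and is nonzero. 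Set $x := y^{-1}$. To see $x$ is transcendental over $\mathbb{F}_q$: any algebraic relation $x^n + a_{n-1}x^{n-1} + \cdots + a_0 = 0$ with $a_i \in \mathbb{F}_q$ and $a_0 \neq 0$ could be rewritten as $-a_0 = x(x^{n-1} + a_{n-1}x^{n-2} + \cdots + a_1)$; applying $\nu_P$ then forces $0 = \nu_P(a_0) \geq \nu_P(x) > 0$, since the bracketed term lies in $\mathcal{O}_P$ and so has nonnegative valuation—a contradiction.

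The main step is then the bound. Take residues $z_1(P), \ldots, z_n(P) \in F_P$ that are $\mathbb{F}_q$-linearly independent, with lifts $z_i \in \mathcal{O}_P$. I would show these lifts are $\mathbb{F}_q(x)$-linearly independent in $F$. Suppose for contradiction that $\sum_i \phi_i(x)\, z_i = 0$ for $\phi_i \in \mathbb{F}_q(x)$ not all zero. Clearing denominators, one may assume $\phi_i \in \mathbb{F}_q[x]$, and by dividing through by the largest common power of $x$, one may assume not every constant term $\phi_i(0)$ vanishes. Applying the residue class map (a ring homomorphism) to both sides, and using $x(P) = 0$ since $x \in P$, yields $\sum_i \phi_i(0)\, z_i(P) = 0$ in $F_P$ with the $\phi_i(0) \in \mathbb{F}_q$ not all zero—contradicting $\mathbb{F}_q$-linear independence of $\{z_i(P)\}$. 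Hence $n \leq [F : \mathbb{F}_q(x)] < \infty$, so $[F_P : \mathbb{F}_q] < \infty$.

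The main obstacle I anticipate is locating a transcendental element inside the place $P$ and verifying its transcendence, which requires combining the valuation-ring axioms with property 5 of the discrete valuation in the right order; once these pieces are in hand, the reduction via the residue class map is clean linear algebra, and no further machinery about function fields beyond their defining finiteness property is needed.
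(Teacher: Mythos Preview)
Your proof is correct. The paper does not give its own proof of this proposition; it simply cites Stichtenoth's textbook (Proposition~1.1.15 and the discussion before Definition~1.1.14 there), and what you have written is essentially the argument found in that reference: embed $\mathbb{F}_q$ via the residue map, pick a nonzero element of $P$, argue it is transcendental by a valuation count, and then bound $[F_P:\mathbb{F}_q]$ by $[F:\mathbb{F}_q(x)]$ via lifting an $\mathbb{F}_q$-independent set and reducing modulo $P$. One small tightening: when you write ``any algebraic relation \ldots\ with $a_0 \neq 0$'', you should say explicitly that if $x$ were algebraic you may take its minimal polynomial (or divide out powers of $x$) to ensure $a_0 \neq 0$; and you might note that the bracketed factor is nonzero (otherwise $a_0 = 0$), so its valuation is genuinely $\geq 0$ rather than $\infty$. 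These are cosmetic; the logic is sound.
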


\begin{definition}[Degree of a Place]
\label{def:degreee_place}
    Given a place $P$ of a function field $F/\mathbb{F}_q$, the degree of the place $P$ is
    \begin{align}
    \label{eq:deg_P}
        \deg(P) = [F_P:\mathbb{F}_q],
    \end{align}
    the degree of the field extension $F_P/\mathbb{F}_q$. The place $P$ is called rational if $\deg(P) = 1$.
\end{definition}
Note that Proposition \ref{residueClassFieldStructure} tells us that the degree of any place is finite. We emphasise the important fact that the residue class field of a rational place is nothing more than (an isomorphic copy of) $\mathbb{F}_q$.

\subsubsection{Divisors, Differentials, and the Riemann-Roch Theorem}

This section aims to present the Riemann-Roch theorem, which gives an expression for the dimension of a Riemann-Roch space, which we will also need to define. The Riemann-Roch theorem ultimately gives a useful expression for the dimension of an algebraic geometry code.

\begin{definition}[Divisors and the Divisor Group]
\label{def:divisor}
Given a function field $F/\mathbb{F}_q$, a divisor is a formal sum
\begin{equation}
    D = \sum_{P \in \mathbb{P}_F}n_PP\label{typicalDivisor}
\end{equation}
with coefficients $n_P \in \mathbb{Z}$ and $n_p = 0$ for all but finitely many places $P\in\mathbb{P}_F$. The divisors collectively form the divisor group $\mathrm{Div}(F)$ under the obvious addition, which is the free abelian group generated by the places of $F/\mathbb{F}_q$. Given the divisor $D$ in Equation \eqref{typicalDivisor}, it is useful to define
\begin{align}
    \nu_P(D) \coloneq n_P.
\end{align}
The support of a divisor $\supp(D)$ is the set of places $P \in \mathbb{P}_F$ on which $\nu_P(D) \neq 0$.
\end{definition}

\begin{definition}[Positive Divisors]
\label{def:positive_divisor}
A divisor $D$ is called positive if $\nu_P(D) \geq 0$ for all $P \in \mathbb{P}_F$,
where $\nu_P(D)$ is defined in Definition~\ref{def:divisor}; in this case, we write
\begin{align}
    D \geq 0.
\end{align}
For two divisors $D_1$ and $D_2$, we write $D_1 \geq D_2$ if $D_1 - D_2 \geq 0$ (where 0 represents the zero divisor).
\end{definition}

\begin{definition}[Degree of a Divisor]
\label{def:degree_divisor}
The degree of a divisor $D$ is defined as
\begin{equation}
    \deg(D) = \sum_{P \in \mathbb{P}_F}\nu_P(D)\deg(P),
\end{equation}
where $\deg(P)$ is defined in Definition~\ref{def:degreee_place}.
Note that $\deg : \mathrm{Div}(F) \to \mathbb{Z}$ is a group homomorphism.
\end{definition}

\begin{definition}[Principal Divisors] \label{principalDivisorDefinition}Given $0 \neq x \in F$, the principal divisor of $x$ is
\begin{equation}
    (x) \coloneq \sum_{P \in \mathbb{P}_F}\nu_P(x)P,
\end{equation}
where $\nu_P(x)$ is given in Lemma~\ref{PlaceValuation}.
\end{definition}

The principal divisor $(x)$ does indeed form a divisor in light of the fact that for all $0 \neq x \in F$, the set of places $P$ for which $\nu_P(x) \neq 0$ is finite (see Definition 1.1.18 and Corollary 1.3.4 of \cite{stichtenoth2009algebraic}). At this point, we may make the following highly important definition.

\begin{definition}[Riemann-Roch Space]
\label{def:riemann_space}
    Given a function field $F/\mathbb{F}_q$ and a divisor $D \in \mathrm{Div}(F)$, the Riemann-Roch space corresponding to $D$ is
    \begin{equation}
        \mathcal{L}(D)\coloneq \{x \in F: D + (x) \geq 0\} \cup \{0\},
    \end{equation}
    where the principal divisor $(x)$ is defined in Definition~\ref{principalDivisorDefinition}, and the notion of a positive divisor $ D + (x) \geq 0$ is defined in Definition~\ref{def:positive_divisor}. 
It can be checked that $\mathcal{L}(D)$ forms a vector space over $\mathbb{F}_q$. We denote the dimension of this vector space as $l(D)$.
\end{definition}

The Riemann-Roch theorem gives a useful expression for the quantity $l(D)$. We will require even more definitions before we can state it fully, but we can say a little about $l(D)$ now.

\begin{proposition}[Proposition 1.4.14 in \cite{stichtenoth2009algebraic}]Given a function field $F/\mathbb{F}_q$, there is a constant $\gamma \in \mathbb{Z}$ depending only on $F$ such that for any $D \in \mathrm{Div}(F)$,
\begin{equation}
    \deg(D)-l(D) \leq \gamma.
\end{equation}
\end{proposition}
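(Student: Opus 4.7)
The plan is to bound $\deg(D) - l(D)$ uniformly in $D$ using four ingredients: monotonicity of this difference under the divisor ordering, its invariance under adding principal divisors, an explicit bound on a distinguished family of divisors, and a reduction of the general case to that family.

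First I would prove monotonicity: if $D_1 \leq D_2$ then $l(D_2) - l(D_1) \leq \deg(D_2 - D_1)$, equivalently $\deg(D_1) - l(D_1) \leq \deg(D_2) - l(D_2)$. By induction on $\deg(D_2 - D_1)$ it suffices to treat the single-place step $l(D + P) - l(D) \leq \deg(P)$. Fixing a prime element $t$ of $P$, the $\mathbb{F}_q$-linear map $\mathcal{L}(D + P) \to F_P$ sending $y$ to the residue class of $y t^{\nu_P(D) + 1}$ is well defined (because $\nu_P(y) \geq -\nu_P(D) - 1$ on $\mathcal{L}(D + P)$ puts $y t^{\nu_P(D) + 1}$ in $\mathcal{O}_P$) and has kernel exactly $\mathcal{L}(D)$, so $\mathcal{L}(D + P)/\mathcal{L}(D)$ embeds into $F_P$ and $l(D + P) - l(D) \leq \deg(P)$. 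Next, I would note that $y \mapsto y f^{-1}$ is an $\mathbb{F}_q$-linear isomorphism $\mathcal{L}(D + (f)) \to \mathcal{L}(D)$ for any nonzero $f \in F$; combined with the standard fact $\deg((f)) = 0$, this yields the principal-divisor invariance $\deg(D + (f)) - l(D + (f)) = \deg(D) - l(D)$.

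Third, I would construct the distinguished family. Choose $x \in F \setminus \mathbb{F}_q$, let $F_0 := \mathbb{F}_q(x)$ and $n := [F : F_0]$, and fix an $F_0$-basis $u_1, \ldots, u_n$ of $F$. Let $(x)_\infty := \sum_{P : \nu_P(x) < 0} (-\nu_P(x)) P$ denote the pole divisor of $x$; a standard fact gives $\deg((x)_\infty) = n$. Choose a positive divisor $B$ large enough that $u_j \in \mathcal{L}(B)$ for every $j$. A place-by-place check using $(u_j x^i) = (u_j) + i(x)_0 - i(x)_\infty$ (where $(x)_0$ is the zero divisor of $x$) shows $u_j x^i \in \mathcal{L}(k(x)_\infty + B)$ whenever $0 \leq i \leq k$ and $1 \leq j \leq n$. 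These $(k+1)n$ elements are $\mathbb{F}_q$-linearly independent: any relation $\sum_{i,j} c_{ij} u_j x^i = 0$ regroups as $\sum_j u_j \left(\sum_i c_{ij} x^i\right) = 0$, forcing each $\sum_i c_{ij} x^i$ to vanish by $F_0$-linear independence of the $u_j$, and then $c_{ij} = 0$ for all $i,j$ by the transcendence of $x$ over $\mathbb{F}_q$. Hence $l(k(x)_\infty + B) \geq (k+1)n$, so
\begin{equation}
    \deg(k(x)_\infty + B) - l(k(x)_\infty + B) \leq kn + \deg(B) - (k+1)n = \deg(B) - n
\end{equation}
for every $k \geq 0$.

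Finally, to handle arbitrary $D$, monotonicity reduces to the case $D \geq 0$, since any $D$ is dominated by its positive part $D_+$. Given $D \geq 0$, choose $k$ with $(k+1)n > \deg(D)$; monotonicity applied to $k(x)_\infty + B - D \leq k(x)_\infty + B$ yields $l(k(x)_\infty + B - D) \geq l(k(x)_\infty + B) - \deg(D) \geq 1$, so there is a nonzero $f \in \mathcal{L}(k(x)_\infty + B - D)$, which by the definition of the Riemann--Roch space means $D - (f) \leq k(x)_\infty + B$. Applying monotonicity and then principal invariance gives
\begin{equation}
    \deg(D) - l(D) = \deg(D - (f)) - l(D - (f)) \leq \deg(B) - n,
\end{equation}
so $\gamma := \deg(B) - n$, which depends only on $F$ (through the choices of $x$, the basis $u_j$, and $B$), works. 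The main obstacle is this last reduction: it hinges on using monotonicity in both directions---once to lower-bound $l(k(x)_\infty + B - D)$ and thereby produce the shifting function $f$, and once to transfer the bound from $D - (f)$ back to $D$---glued together by principal-divisor invariance.
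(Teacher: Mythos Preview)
The paper does not supply its own proof of this proposition; it simply cites Stichtenoth's Proposition~1.4.14, and your argument is precisely the standard one given there (monotonicity of $\deg - l$, invariance under principal divisors, the explicit family $k(x)_\infty + B$, and the dominating trick via a nonzero element of $\mathcal{L}(k(x)_\infty + B - D)$). The overall structure and each step are correct.

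One small slip: the isomorphism in your principal-invariance step is written in the wrong direction. If $y \in \mathcal{L}(D + (f))$ then $(y) + D + (f) \geq 0$, so $(yf) + D = (y) + (f) + D \geq 0$, giving $yf \in \mathcal{L}(D)$; the map should be $y \mapsto yf$, not $y \mapsto yf^{-1}$. This does not affect the rest of the argument, since you only use the consequence $l(D + (f)) = l(D)$ and $\deg((f)) = 0$, both of which hold. You might also tighten ``choose $x \in F \setminus \mathbb{F}_q$'' to ``choose $x \in F$ transcendental over $\mathbb{F}_q$'' to guarantee $[F : \mathbb{F}_q(x)] < \infty$ without implicitly assuming $\mathbb{F}_q$ is the full constant field.
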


In light of this, the following definition makes sense.

\begin{definition}[Genus of a Function Field]
    The genus $g$ of a function field $F/\mathbb{F}_q$ is the integer
    \begin{equation}
        g \coloneq \max\{\deg(D)-l(D) + 1:D \in \mathrm{Div}(F)\}.
    \end{equation}
    Note that, due to Corollary~1.4.16 of~\cite{stichtenoth2009algebraic}, we have
    \begin{align}
        g\geq 0.
    \end{align}
\end{definition}

To state the Riemann-Roch theorem, we will need the concept of Weil differentials, and to understand Weil differentials we need the concept of adeles.

\begin{definition}[Adeles and Adele Spaces]
\label{def:adeles}
Given a function field $F/\mathbb{F}_q$, an adele of $F$ is a mapping $\alpha : \mathbb{P}_F \to F$ such that $\alpha(P) \in \mathcal{O}_P$ for all but finitely many places $P$. We write
\begin{align}
    \alpha_P &\coloneq \alpha(P),\\
    \nu_P(\alpha) &\coloneq \nu_P(\alpha_P).
\end{align}
Then,
\begin{equation}
    \mathcal{A}_F \coloneq \{\alpha : \alpha \text{ is an adele of } F/\mathbb{F}_q\}
\end{equation}
is called the adele space of $F$, and it forms a vector space over $\mathbb{F}_q$. Given any $D \in \mathrm{Div}(F)$, we may define an $\mathbb{F}_q$-linear subspace of $\mathcal{A}_F$:
\begin{equation}
    \mathcal{A}_F(D) \coloneq \{\alpha \in \mathcal{A}_F: \nu_P(\alpha) + \nu_P(D) \geq 0 \text{ for all } P \in \mathbb{P}_F\},
\end{equation}
where $\nu_P(D)$ is defined in Definition~\ref{def:divisor}.
\end{definition}
\begin{definition}[Principal Adele]
\label{def:principal_adele}
    Given any $x \in F$, the principal adele of $x$, denoted $\mathrm{princ}(x)$, is the adele which takes value $x$ at every place $P \in \mathbb{P}_F$.
\end{definition}
In a very similar way to how principal divisors are well-defined divisors, principal adeles are well-defined adeles because for any $x \in F$, $\nu_P(x) < 0$ for finitely many places $P \in \mathbb{P}_F$ (see Definition 1.1.18 and Corollary 1.3.4 of \cite{stichtenoth2009algebraic}). It can be checked that the set of principal adeles forms an $\mathbb{F}_q$-linear subspace of $\mathcal{A}_F$ which is isomorphic to $F$. When we refer to $F$ as a subspace of $\mathcal{A}_F$, we will always mean the space of principal adeles.

Using the adele space $\mathcal{A}_F$, we may define Weil differentials.
\begin{definition}[Weil Differentials]\label{differentialDef}
    A Weil differential of $F$ is an $\mathbb{F}_q$-linear map $\omega: \mathcal{A}_F \to \mathbb{F}_q$ such that
    \begin{align}
        \omega|_{\mathcal{A}_F(D)+F} \equiv 0
    \end{align}
    for some $D \in \mathrm{Div}(F)$, meaning that $\omega$ vanishes on the subspace $\mathcal{A}_F(D)+F \subseteq \mathcal{A}_F$. We denote the set of all Weil differentials of $F$ by $\Omega_F$. Given a fixed divisor $D \in \mathrm{Div}(F)$, we write $\Omega_F(D)$ for the set of Weil differentials that vanish on $\mathcal{A}_F(D)+F$.
It can be shown that $\Omega_F$ forms a vector space over $\mathbb{F}_q$, and $\Omega_F(D)$ is a linear subspace for any $D \in \mathrm{Div}(F)$.
\end{definition}

Using the concept of Weil differentials, we introduce the notion of canonical divisors based on the following proposition.
\begin{proposition}[Lemma 1.5.10 of \cite{stichtenoth2009algebraic}]
    Given a function field $F/\mathbb{F}_q$ and $0 \neq \omega \in \Omega_F$, let us consider
    \begin{equation}
        M(\omega) \coloneq \{D \in \mathrm{Div}(F): \omega \text{ vanishes on } \mathcal{A}_F(D)+F\}.
    \end{equation}
    Then, there exists a uniquely determined divisor $(\omega) \in M(\omega)$ such that
    \begin{equation}
        M(\omega) = \{D \in \mathrm{Div}(F): D \leq (\omega)\},
    \end{equation}
    where $D \leq (\omega)$ means that $(\omega)-D\geq 0$ is a positive divisor as in Definition~\ref{def:positive_divisor}.
\end{proposition}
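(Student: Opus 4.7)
The plan is to exhibit $(\omega)$ as the maximum element of $M(\omega)$; its uniqueness will then be automatic. Two immediate observations come first: $M(\omega) \neq \emptyset$ by Definition~\ref{differentialDef}, and $M(\omega)$ is downward closed, since $D' \leq D$ gives $\mathcal{A}_F(D') \subseteq \mathcal{A}_F(D)$ (indeed $\nu_P(\alpha) \geq -\nu_P(D')$ implies $\nu_P(\alpha) \geq -\nu_P(D)$ when $\nu_P(D') \leq \nu_P(D)$), so any Weil differential vanishing on $\mathcal{A}_F(D)+F$ also vanishes on $\mathcal{A}_F(D')+F$.

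The central technical step is showing $M(\omega)$ is closed under the pointwise maximum $D_1 \vee D_2$, defined by $\nu_P(D_1 \vee D_2) \coloneq \max\{\nu_P(D_1), \nu_P(D_2)\}$. The plan is to prove the adele identity
\[
\mathcal{A}_F(D_1 \vee D_2) = \mathcal{A}_F(D_1) + \mathcal{A}_F(D_2)
\]
by a place-wise partition argument: given $\alpha \in \mathcal{A}_F(D_1 \vee D_2)$, define $\alpha^{(1)}$ by $\alpha^{(1)}_P = \alpha_P$ at places where $\nu_P(D_1) \geq \nu_P(D_2)$ and $\alpha^{(1)}_P = 0$ elsewhere, and set $\alpha^{(2)} \coloneq \alpha - \alpha^{(1)}$. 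One checks $\alpha^{(i)} \in \mathcal{A}_F(D_i)$ directly from the valuation inequalities. Consequently, if $\omega$ vanishes on each $\mathcal{A}_F(D_i)+F$, it vanishes on $\mathcal{A}_F(D_1 \vee D_2)+F$, so $D_1 \vee D_2 \in M(\omega)$.

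The main obstacle, where $\omega \neq 0$ is used essentially, is bounding the coefficients $\nu_P(D)$ uniformly over $D \in M(\omega)$. The plan is first to bound $\deg D$: since $\omega$ is a nonzero $\mathbb{F}_q$-linear functional on $\mathcal{A}_F$ vanishing on $\mathcal{A}_F(D)+F$, the quotient $\mathcal{A}_F/(\mathcal{A}_F(D)+F)$ must be nontrivial, and a finite-codimension argument on adele quotients (the precursor to Riemann's inequality from which the genus arises) shows that if $\deg D$ were unbounded over $M(\omega)$ then $\mathcal{A}_F(D)+F$ would eventually exhaust $\mathcal{A}_F$, forcing $\omega = 0$. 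Fixing any $D_0 \in M(\omega)$ and applying join closure, one has $D \vee D_0 \in M(\omega)$ with $\nu_{P'}(D \vee D_0) \geq \nu_{P'}(D_0)$ at all places $P' \neq P$ and $\nu_P(D \vee D_0) \geq \nu_P(D)$. The degree bound then forces each $\nu_P(D)$ to be bounded above, and also shows $\nu_P^{\ast} \coloneq \sup_{D \in M(\omega)} \nu_P(D)$ equals $\nu_P(D_0)$, and in particular $0$, for all but finitely many $P$.

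Finally, define $(\omega) \coloneq \sum_P \nu_P^{\ast} \cdot P$; this is a bona fide divisor because only finitely many $\nu_P^{\ast}$ are nonzero. At each of those finitely many places, $\nu_P^{\ast}$ is an attained integer supremum, realized by some $D^{(P)} \in M(\omega)$. Applying join closure to these finitely many $D^{(P)}$ yields an element of $M(\omega)$ that majorises $(\omega)$, and by the very definition of $\nu_P^{\ast}$ this element equals $(\omega)$, so $(\omega) \in M(\omega)$. Every $D \in M(\omega)$ satisfies $D \leq (\omega)$ by construction of $\nu_P^{\ast}$, while downward closedness gives the reverse inclusion, establishing $M(\omega) = \{D \in \mathrm{Div}(F) : D \leq (\omega)\}$ and, with it, the uniqueness of $(\omega)$ as the maximum.
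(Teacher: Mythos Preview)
The paper does not supply its own proof of this proposition; it is stated with a citation to Stichtenoth's Lemma~1.5.10 and used as a black box. Your argument is correct and is essentially the standard proof found there: nonemptiness of $M(\omega)$, downward closure, closure under the join $D_1 \vee D_2$ via the identity $\mathcal{A}_F(D_1 \vee D_2) = \mathcal{A}_F(D_1) + \mathcal{A}_F(D_2)$, and a degree bound (coming from the fact, established in Stichtenoth prior to this lemma, that $\mathcal{A}_F(D)+F = \mathcal{A}_F$ once $\deg D$ is sufficiently large) forcing the existence of a maximum. The only cosmetic difference is that Stichtenoth selects a divisor of maximal degree in $M(\omega)$ and uses join closure to show it dominates every other element, whereas you build $(\omega)$ directly as the place-wise supremum and then realise it inside $M(\omega)$ by a finite join; the two constructions yield the same divisor.
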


\begin{definition}[Canonical Divisors]
\label{def:canonical_divisor}
    A divisor $W \in \mathrm{Div}(F)$ is a canonical divisor if $W = (\omega)$ for some $0 \neq \omega \in \Omega_F$.
\end{definition}
\begin{remark}[Remark~1.5.12 of~\cite{stichtenoth2009algebraic}]\label{canonicalDivisors}
We may recharacterise the space $\Omega_F(D)$ as
\begin{equation}
    \Omega_F(D) = \{\omega \in \Omega_F: \omega = 0 \text{ or } (\omega) \geq D\}.
\end{equation}
\end{remark}

With this, we may at last state the most important result of this section.

\begin{theorem}[Riemann-Roch Theorem --- Theorem~1.5.15 of~\cite{stichtenoth2009algebraic}]\label{riemannRochTheorem} Let $F/\mathbb{F}_q$ be a function field of genus $g$. Let $A \in \mathrm{Div}(F)$ be any divisor of $F$ and $W$ any canonical divisor. Then
\begin{equation}
    l(A) = \deg(A) + 1 - g + l(W-A),
\end{equation}
where $\deg(A)$ is defined in Definition~\ref{def:degree_divisor}, and $l(A)$ and $l(W-A)$ are defined in Definition~\ref{def:riemann_space}.
Moreover, if $\deg(A) \geq 2g-1$, we have
\begin{align}
    l(A) = \deg(A) + 1 - g.    
\end{align}
\end{theorem}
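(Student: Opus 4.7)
The plan is to prove the Riemann-Roch theorem in the classical way, by (i) identifying the ``index of specialty'' with a dimension of an adele quotient, (ii) identifying that adele quotient, via duality, with the space of Weil differentials $\Omega_F(A)$, and (iii) identifying $\Omega_F(A)$ with a Riemann-Roch space $\mathcal{L}(W-A)$ for any canonical divisor $W$. The corollary for $\deg(A)\geq 2g-1$ then falls out of a short degree argument.

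First I would define the \emph{index of specialty} of a divisor $A$ by $i(A) \coloneq l(A) - \deg(A) + g - 1$. The definition of the genus as $g = \max\{\deg(D) - l(D) + 1\}$ immediately gives $i(A) \geq 0$, and in fact one can show that $i(A) = \dim_{\mathbb{F}_q}(\mathcal{A}_F / (\mathcal{A}_F(A) + F))$. To see this, one compares $i(A)$ and $i(A + P)$ for a place $P$: a short linear-algebra calculation in the exact sequence $0 \to \mathcal{L}(A+P)/\mathcal{L}(A) \to \mathcal{A}_F(A+P)/\mathcal{A}_F(A) \to \mathcal{A}_F(A+P)/(\mathcal{A}_F(A) + \mathcal{L}(A+P)) \to 0$ and comparison with the quotient $(\mathcal{A}_F(A)+F)/(\mathcal{A}_F(A)+F)$ yields a recursion showing that the two quantities $i(A)$ and $\dim(\mathcal{A}_F/(\mathcal{A}_F(A)+F))$ change in the same way, so they agree once we check one case where both are visibly zero (which holds for $\deg(A)$ sufficiently large by the definition of $g$).

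Next, by Definition~\ref{differentialDef}, a Weil differential $\omega \in \Omega_F(A)$ is exactly an $\mathbb{F}_q$-linear functional on $\mathcal{A}_F$ that vanishes on $\mathcal{A}_F(A)+F$, i.e.\ an element of $(\mathcal{A}_F/(\mathcal{A}_F(A)+F))^*$. Here I would invoke (and verify) the fact that the quotient $\mathcal{A}_F/(\mathcal{A}_F(A)+F)$ is finite-dimensional over $\mathbb{F}_q$, so taking $\mathbb{F}_q$-duals preserves dimension. Combined with step one, this gives $\dim_{\mathbb{F}_q} \Omega_F(A) = i(A) = l(A) - \deg(A) + g - 1$.

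Then I would give $\Omega_F$ the structure of an $F$-vector space via $(x\omega)(\alpha) \coloneq \omega(x\alpha)$ and show that it is one-dimensional over $F$; this is the step where the proof is at its most delicate, as one must construct enough Weil differentials to produce a nonzero $\omega_0$. Fixing such $\omega_0$ with $(\omega_0) = W$, the map
\begin{equation}
    \varphi : \mathcal{L}(W - A) \longrightarrow \Omega_F(A), \qquad x \longmapsto x\omega_0
\end{equation}
is $\mathbb{F}_q$-linear and injective (because $\Omega_F$ is a one-dimensional $F$-vector space and $x\omega_0 = 0$ forces $x=0$). Surjectivity follows by writing an arbitrary $\omega \in \Omega_F(A)$ as $\omega = x\omega_0$ for some $x \in F$, then checking via Remark~\ref{canonicalDivisors} that the condition $(\omega) \geq A$ is equivalent to $(x) \geq A - W$, i.e.\ $x \in \mathcal{L}(W-A)$. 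Hence $l(W - A) = \dim_{\mathbb{F}_q}\Omega_F(A) = l(A) - \deg(A) + g - 1$, which rearranges to the Riemann-Roch formula. For the final clause, observe that $\deg(W) = 2g - 2$ (apply the main identity with $A = 0$ and $A = W$), so $\deg(A) \geq 2g-1$ gives $\deg(W-A) \leq -1$; any nonzero $x \in \mathcal{L}(W-A)$ would yield $\deg(W-A) = \deg(W - A + (x)) - \deg((x)) \geq 0$, a contradiction, so $l(W-A) = 0$ and the simplified identity holds.

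The main obstacle is the one-dimensionality of $\Omega_F$ as an $F$-module: it requires producing a nonzero Weil differential in the first place, which demands showing the adele quotient is nontrivial for some large divisor, and then ruling out a second $F$-linearly independent differential by a careful approximation argument in $\mathcal{A}_F$. Every other step is either a routine finiteness check or manipulation of the defining inequalities among divisors.
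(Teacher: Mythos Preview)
The paper does not prove this theorem; it is stated in the preliminaries as a citation of Theorem~1.5.15 of Stichtenoth, with no proof given. Your outline is essentially the proof from that reference (index of specialty via adele quotients, duality with $\Omega_F(A)$, one-dimensionality of $\Omega_F$ over $F$, then the isomorphism $\mathcal{L}(W-A)\cong\Omega_F(A)$), and the steps you sketch are correct, including your identification of the one-dimensionality of $\Omega_F$ as the crux.
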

We remark that the simpler result
\begin{align}
\label{eq:riemannTheorem}
    l(A) \geq \deg(A) + 1 - g
\end{align}
is simply called Riemann's theorem. Also, by setting $A=0$ followed by $A=W$ in the Riemann-Roch theorem, and using $l(0) = 1$, one finds that  for any canonical divisor $W$ 
\begin{align}
    l(W) &= g,\\
\label{eq:degree_canonical_divisor}
    \deg(W) &= 2g-2.
\end{align}

Now that we have seen the Riemann-Roch theorem, we finish this section with a small number of further definitions and results that will be important for us. The first result below is sometimes called the `Duality Theorem'.

\begin{theorem}[Duality Theorem --- Theorem 1.5.14 of \cite{stichtenoth2009algebraic}]\label{dualityThm}
    Let $A \in \mathrm{Div}(F)$ be any divisor and $W$ any canonical divisor. Then the spaces $\mathcal{L}(W-A)$ and $\Omega_F(A)$ are isomorphic as vector spaces over $\mathbb{F}_q$.
\end{theorem}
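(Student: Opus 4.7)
The plan is to exhibit an explicit $\mathbb{F}_q$-linear isomorphism between $\mathcal{L}(W-A)$ and $\Omega_F(A)$. Fix any nonzero $\omega_0 \in \Omega_F$ with $(\omega_0) = W$, which exists by Definition~\ref{def:canonical_divisor}. First I would equip $\Omega_F$ with an $F$-module structure by defining, for each $x \in F$ and $\omega \in \Omega_F$, the Weil differential $x\omega$ via $(x\omega)(\alpha) \coloneq \omega(x\alpha)$, where $x\alpha$ denotes the adele obtained by scaling each component of $\alpha$ by $x$. To verify that $x\omega$ is again a Weil differential, I use the identity $x\mathcal{A}_F(D) = \mathcal{A}_F(D-(x))$ valid for every divisor $D$ (which follows immediately from $\nu_P(x\alpha_P) = \nu_P(x) + \nu_P(\alpha_P)$): thus if $\omega$ vanishes on $\mathcal{A}_F(D) + F$ then $x\omega$ vanishes on $\mathcal{A}_F(D+(x)) + F$. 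The candidate map is then
\begin{equation}
    \phi : \mathcal{L}(W-A) \to \Omega_F(A), \qquad x \mapsto x\omega_0,
\end{equation}
extended by $\phi(0) = 0$. It is $\mathbb{F}_q$-linear since $\mathbb{F}_q \subseteq F$ and the action is $F$-linear in $x$.

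The central technical ingredient is the divisor identity $(x\omega) = (x) + (\omega)$ for every nonzero $x \in F$ and nonzero $\omega \in \Omega_F$. Unwinding the module action together with the displacement formula above, $x\omega \in \Omega_F(D)$ holds iff $\omega$ vanishes on $\mathcal{A}_F(D-(x)) + F$, which by Remark~\ref{canonicalDivisors} is equivalent to $(\omega) \geq D - (x)$; maximising over $D$ yields the claimed identity. With this in hand, well-definedness of $\phi$ is immediate: if $x \in \mathcal{L}(W-A)$ then $(x) + W \geq A$, so $(x\omega_0) = (x) + W \geq A$, placing $x\omega_0 \in \Omega_F(A)$ by Remark~\ref{canonicalDivisors}.

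For bijectivity I would invoke the structural fact that $\Omega_F$ is one-dimensional as a vector space over $F$ (Proposition~1.5.9 of Stichtenoth~\cite{stichtenoth2009algebraic}). Injectivity of $\phi$ is then immediate, since $x\omega_0 = 0$ with $\omega_0 \neq 0$ forces $x = 0$. For surjectivity, any nonzero $\omega \in \Omega_F(A)$ can be uniquely written as $\omega = x\omega_0$ for some $x \in F$, and the divisor identity yields $(x) + W = (\omega) \geq A$, placing $x \in \mathcal{L}(W-A)$. The main obstacle in the entire argument is precisely the one-dimensionality of $\Omega_F$ over $F$: once it is available, every other step reduces to formal manipulation of the defining conditions and valuation bookkeeping. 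A standard proof of one-dimensionality proceeds by contradiction, comparing two bounds on $\dim_{\mathbb{F}_q}\Omega_F(D)$ for a sufficiently negative divisor $D$: a lower bound of roughly $2\deg(-D)$ obtained from two hypothetically independent differentials via the map constructed above, against an upper bound of roughly $\deg(-D)$ obtained from Riemann's theorem~\eqref{eq:riemannTheorem}.
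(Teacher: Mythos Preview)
The paper does not prove this theorem; it is quoted as a preliminary result from Stichtenoth~\cite{stichtenoth2009algebraic} and used without proof. Your sketch is correct and is exactly the standard argument in Stichtenoth: endow $\Omega_F$ with the $F$-module structure $(x\omega)(\alpha)=\omega(x\alpha)$, verify the divisor identity $(x\omega)=(x)+(\omega)$, and then use the one-dimensionality of $\Omega_F$ over $F$ (Proposition~1.5.9 there) to conclude that $x\mapsto x\omega_0$ is an $\mathbb{F}_q$-linear isomorphism $\mathcal{L}(W-A)\to\Omega_F(A)$. There is nothing further to compare.
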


Finally, we introduce local components of differentials, followed by presenting a proposition showing how to express the value of a differential acting on an adele in terms of its local components.

\begin{definition}[Local Components of Differentials]
\label{def:local_components_of_differentials}
Given a place $P$ and $x \in F$, one defines
\begin{align}
    \iota_P(x) \in \mathcal{A}_F
\end{align}
as the adele whose value is $x$ at $P$ and $0$ elsewhere. If we are given a Weil differential $\omega \in \Omega_F$, its local component at $P$ is the $\mathbb{F}_q$-linear map $\omega_P: F \to \mathbb{F}_q$ given by 
\begin{equation}
    \omega_P(x) \coloneq \omega(\iota_P(x)).
\end{equation}
\end{definition}

\begin{proposition}[Proposition 1.7.2 of \cite{stichtenoth2009algebraic}]\label{localComponentFormula}
    Given a Weil differential $\omega$ and an adele $\alpha$, $\omega_P(\alpha_P) = 0$ for all but finitely many places $P$ and, moreover
    \begin{equation}
        \omega(\alpha) = \sum_{P \in \mathbb{P}_F}\omega_P(\alpha_P).
    \end{equation}
\end{proposition}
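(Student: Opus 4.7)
The plan is to exploit the fact that, by the definition of a Weil differential, $\omega$ vanishes on the subspace $\mathcal{A}_F(D)+F$ for some divisor $D \in \mathrm{Div}(F)$. Fix such a $D$. The entire proof will then reduce to finding a convenient decomposition $\alpha = \beta + (\alpha - \beta)$, where $\beta$ is a finite sum of adeles of the form $\iota_P(\alpha_P)$, and the remainder $\alpha - \beta$ lies in $\mathcal{A}_F(D)$.

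First I would establish the finite support claim. Let $S \subseteq \mathbb{P}_F$ be the set of places where $\nu_P(\alpha_P) + \nu_P(D) < 0$. This set is finite: for all but finitely many $P$, $\nu_P(D) = 0$ because $D$ has finite support in the sense of Definition~\ref{def:divisor}, and for all but finitely many $P$, $\alpha_P \in \mathcal{O}_P$ (so $\nu_P(\alpha_P) \geq 0$) by the definition of an adele in Definition~\ref{def:adeles}. For any $P \notin S$, one checks directly that $\iota_P(\alpha_P) \in \mathcal{A}_F(D)$: at the place $P$ itself, $\nu_P(\iota_P(\alpha_P)) + \nu_P(D) = \nu_P(\alpha_P) + \nu_P(D) \geq 0$ by choice of $S$, and at every other place $Q \neq P$, the $Q$-component of $\iota_P(\alpha_P)$ is $0$, so $\nu_Q(\iota_P(\alpha_P)) = \infty$ and the condition holds trivially. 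Since $\omega$ vanishes on $\mathcal{A}_F(D)$, we obtain $\omega_P(\alpha_P) = \omega(\iota_P(\alpha_P)) = 0$ for every $P \notin S$. This proves the first claim and simultaneously shows that $\sum_{P \in \mathbb{P}_F}\omega_P(\alpha_P) = \sum_{P \in S}\omega_P(\alpha_P)$.

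Next I would set $\beta \coloneq \sum_{P \in S}\iota_P(\alpha_P) \in \mathcal{A}_F$, a well-defined adele since the sum is finite. The key verification is that $\alpha - \beta \in \mathcal{A}_F(D)$. At a place $P \in S$, the adeles $\alpha$ and $\beta$ have the same $P$-component $\alpha_P$, so $(\alpha - \beta)_P = 0$ and $\nu_P(\alpha - \beta) = \infty$, giving $\nu_P(\alpha - \beta) + \nu_P(D) \geq 0$. At a place $P \notin S$, we have $(\alpha - \beta)_P = \alpha_P$ and $\nu_P(\alpha_P) + \nu_P(D) \geq 0$ by the definition of $S$. Hence $\alpha - \beta \in \mathcal{A}_F(D)$, and by the defining property of $\omega$, $\omega(\alpha - \beta) = 0$.

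Combining everything via $\mathbb{F}_q$-linearity of $\omega$ yields
\begin{equation}
    \omega(\alpha) = \omega(\beta) = \sum_{P \in S}\omega(\iota_P(\alpha_P)) = \sum_{P \in S}\omega_P(\alpha_P) = \sum_{P \in \mathbb{P}_F}\omega_P(\alpha_P),
\end{equation}
which completes the proof. The only subtle point is ensuring that the finite set $S$ is chosen large enough to control both the pole behaviour of $\alpha$ and the pole behaviour encoded by $D$ simultaneously; once $S$ is defined as the places where $\nu_P(\alpha_P)+\nu_P(D)<0$, everything else is a straightforward bookkeeping exercise using the defining vanishing property of a Weil differential. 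There is no deep obstacle — the result is essentially a packaging of the finiteness built into the definitions of divisors and adeles.
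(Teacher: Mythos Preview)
Your proof is correct. The paper does not supply its own proof of this proposition; it simply quotes it as Proposition~1.7.2 of \cite{stichtenoth2009algebraic}, and your argument is essentially the standard one found there.
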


\subsubsection{The Construction of Algebraic Geometry Codes, their Parameters and Decoding}\label{AGPrelims3}

Algebraic geometry codes are then constructed by evaluating the elements of some Riemann-Roch space via the residue class map on $n$ rational places.

\begin{definition}[Algebraic Geometry Codes]
\label{def:AG_code}
    Let $P_1, ..., P_n$ be pairwise distinct rational places of $F$ and set $D \coloneq P_1 + ... + P_n$. Let $A$ be a divisor such that $\supp(A) \cap \supp(D) = \emptyset$. Then we define the algebraic geometry code
    \begin{equation}
        C_{\mathcal{L}}(D,A) \coloneq \{(x(P_1), ..., x(P_n)): x \in \mathcal{L}(A)\} \subseteq \mathbb{F}_q^n.
    \end{equation}
\end{definition}
This definition makes sense for the following two reasons. First, $\supp(A) \cap \supp(D) = \emptyset$, and so by Definition~\ref{def:riemann_space} of the Riemann-Roch space, $x \in \mathcal{L}(A) \implies \nu_{P_i}(x) \geq 0$ for all $i=1, ..., n$, i.e., $x \in \mathcal{O}_{P_i}$ due to Theorem~\ref{thm:places_valuation_rings_discrete_valuations}; therefore, the value of the residue class map $x(P_i)$ in Definition~\ref{residueClassFieldAndMapDef} is well defined. Secondly, because $P_i$ are places of degree one (Definition~\ref{def:degreee_place}), the residue class field is $\mathbb{F}_q$, making the codewords indeed vectors in $\mathbb{F}_q^n$. The following theorem establishes the properties of $C_{\mathcal{L}}(D,A)$ that we need.
\begin{theorem}[From Corollary 2.2.3, Theorem 2.2.7 and Theorem 2.2.8 of \cite{stichtenoth2009algebraic}]\label{AGCodesParams}
    Let the function field $F$ have genus $g$. Furthermore, suppose $g \leq a \coloneq \deg(A) < n$. Then $C_{\mathcal{L}}(D,A)$ is a linear code of length $n$ over $\mathbb{F}_q$ with dimension
    \begin{equation}
        k \geq a+1-g.
    \end{equation}
    If, further, $a \geq 2g-1$, we have
    \begin{equation}
        k=a+1-g.
    \end{equation}
    Also under the hypotheses $g \leq a < n$ and $a \geq 2g-1$, the dual code $C_{\mathcal{L}}(D,A)^\perp$ has distance
    \begin{equation}
        d^\perp \geq a-(2g-2).
    \end{equation}
\end{theorem}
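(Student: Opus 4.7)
The plan is to reduce both statements to calculations in Riemann-Roch spaces, using throughout the triviality of $\mathcal{L}(B)$ whenever $\deg(B) < 0$ --- a standard consequence of $\deg((x)) = 0$ for $0 \neq x \in F$, combined with the positivity condition in Definition \ref{def:riemann_space}.

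\textbf{Dimension.} I would introduce the $\mathbb{F}_q$-linear evaluation map
\[
\mathrm{ev} : \mathcal{L}(A) \to \mathbb{F}_q^n, \qquad x \mapsto (x(P_1), \ldots, x(P_n)),
\]
which is well defined because each $P_i$ is rational with residue class field $\mathbb{F}_q$ (Definitions \ref{residueClassFieldAndMapDef} and \ref{def:degreee_place}) and whose image is $C_{\mathcal{L}}(D, A)$ by Definition \ref{def:AG_code}. An element $x \in \mathcal{L}(A)$ lies in the kernel iff $\nu_{P_i}(x) \geq 1$ for every $i$, iff $x \in \mathcal{L}(A - D)$. Since $\deg(A - D) = a - n < 0$, the kernel is trivial and $k = l(A)$; Riemann's inequality \eqref{eq:riemannTheorem} yields $k \geq a + 1 - g$, with equality under the stronger hypothesis $a \geq 2g - 1$ by the Riemann-Roch Theorem \ref{riemannRochTheorem}.

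\textbf{Dual distance.} The plan is to realise $C_{\mathcal{L}}(D, A)^\perp$ as another AG code $C_{\mathcal{L}}(D, B)$ with $\deg(B) = n + (2g-2) - a$, so that the same triviality principle controls the distance. For any $\omega \in \Omega_F(A - D)$ and any $x \in \mathcal{L}(A)$, decomposing $x = x(P_i) + (x - x(P_i))$ at each $P_i$ (with $x - x(P_i) \in P_i$) and using that $\omega$ annihilates $\mathcal{A}_F(A - D)$ and the space of principal adeles, combined with the local-component formula in Proposition \ref{localComponentFormula}, one obtains the orthogonality
\[
\sum_{i=1}^n x(P_i)\,\omega_{P_i}(1) = 0,
\]
so $\omega \mapsto (\omega_{P_i}(1))_i$ sends $\Omega_F(A - D)$ into $C_{\mathcal{L}}(D, A)^\perp$. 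Using the Duality Theorem \ref{dualityThm} to identify $\dim \Omega_F(A - D) = l(W - A + D)$ for a canonical divisor $W$, and applying Riemann-Roch to $W - A + D$ (whose degree $n - a + (2g-2)$ is at least $2g - 1$ under $a < n$), one computes this dimension as $n - (a + 1 - g) = n - k$, matching the codimension; the kernel of the map is $\Omega_F(A)$, which is trivial under $a \geq 2g - 1$ because $\deg(W - A) < 0$. Choosing a specific $\eta \in \Omega_F$ with $\nu_{P_i}(\eta) = -1$ for every $i$ then lets one rewrite the dual as $C_{\mathcal{L}}(D, B)$ with $B = D - A + (\eta)$ and $\deg(B) = n - a + (2g-2)$ via \eqref{eq:degree_canonical_divisor}.

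\textbf{Closing the bound, and main obstacle.} Once the dual is presented as $C_{\mathcal{L}}(D, B)$, a non-zero codeword of weight $w$ lifts to a non-zero $y \in \mathcal{L}(B - \sum_{i : c_i = 0} P_i)$; the defining divisor has degree $w - (a - (2g - 2))$, and negativity would force $y = 0$, hence $d^\perp \geq a - (2g-2)$. The main obstacle I foresee is the middle step: simultaneously showing that the differential construction surjects onto the full dual (which demands a careful Riemann-Roch dimension count and the vanishing analysis of $\omega$ on $\mathcal{A}_F(A - D)$) and producing an explicit Weil differential $\eta$ with the prescribed simple poles at each $P_i$. This is the one place where the machinery of adeles and canonical divisors genuinely has to be combined; once the dual identification is in place, both the dimension and the distance statements drop out uniformly from Riemann-Roch and the negative-degree vanishing of $\mathcal{L}$.
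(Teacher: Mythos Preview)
The paper does not supply its own proof of this theorem; it is quoted directly from Stichtenoth with a citation (Corollary 2.2.3, Theorems 2.2.7 and 2.2.8), so there is no in-paper argument to compare against. Your sketch is essentially the standard textbook route and is correct in outline: the dimension claim is exactly right, and the dual realisation via $\Omega_F(A-D)$ together with the Riemann--Roch count does give $C_{\mathcal{L}}(D,A)^\perp = C_\Omega(D,A)$ with the desired codimension.

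One remark on efficiency: you do not actually need the full identification of the dual with a second $C_{\mathcal{L}}(D,B)$ (and hence not the explicit differential $\eta$ with prescribed simple poles) to extract the distance bound. Once you know that every dual codeword is $(\omega_{P_i}(1))_i$ for some $\omega \in \Omega_F(A-D)$, a nonzero codeword of weight $w$ forces $\nu_{P_i}((\omega)) \geq 0$ at the $n-w$ zero positions, so $(\omega) \geq A - D + \sum_{i : c_i = 0} P_i$; taking degrees gives $2g-2 \geq a - w$, i.e., $w \geq a - (2g-2)$. This sidesteps the ``main obstacle'' you flagged and is closer to how Stichtenoth's Theorem 2.2.7 proceeds.
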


To construct algebraic geometry codes, it is therefore necessary to have function fields $F$ of genus $g$ with a large number of rational places compared to their genus. Indeed, to establish a non-zero distance on the dual code $C_{\mathcal{L}}(D,A)^\perp$ (which will be crucial for our purpose, magic state distillation), we need $2g-1 < n$, where $n$ is a number of rational places used to build the code. Given that the Hasse-Weil bound~\cite[Theorem~5.2.3]{stichtenoth2009algebraic} says that the number $N(F)$ of rational places of a function field $F/\mathbb{F}_q$ of genus $g$ always satisfies
\begin{equation}
    |N(F)-q-1| \leq 2gq^{1/2},
\end{equation}
the following definition makes sense.
\begin{definition}[Maximum Number of Rational Places for a given $q$ and $g$]
    For a given prime power $q$ and genus $g$, we denote by $N_q(g)$ the maximum number of rational places that a function field $F/\mathbb{F}_q$ can have if it has genus $g$.
\end{definition}
As we do in this work, it is common to build infinite families of algebraic geometry codes from function fields over a fixed finite field $\mathbb{F}_q$ with growing genus. To capture the number of rational places of such a field, we define the following important quantity.
\begin{definition}[Ihara's Constant]
For a given prime power $q$, Ihara's constant is defined as
\begin{equation}
    A(q) = \limsup_{g \to \infty} \frac{N_q(g)}{g}.
\end{equation}
\end{definition}
The way to interpret this is that for any fixed $q$ and $\delta > 0$, we can find an infinite sequence of genera $g_i\to \infty $ and an infinite family of function fields $F_i/\mathbb{F}_q$ of genus $g_i$ with a number of rational places at least $(A(q)-\delta)g_i$. For prime powers $q$ that are square, the following is true \cite{yasutaka1982some,garcia1995tower,garcia1996asymptotic}:
\begin{equation}
    A(q) = q^{1/2}-1,
\end{equation}
and so in particular
\begin{equation}
    A(1024) = 31.
\end{equation}

Let us conclude with a brief discussion on efficient classical decoding. It will be useful in this construction to be able to efficiently decode the dual codes of the algebraic geometry codes defined above. We denote the dual code of $C_{\mathcal{L}}(D,A)$ as
\begin{equation}
    C_\Omega(D,A) = C_{\mathcal{L}}(D,A)^\perp = \left\{v \in \mathbb{F}_q^n:\sum_{i=1}^nx_iv_i = 0\text{ for all }x \in C_{\mathcal{L}}(D,A)\right\}.
\end{equation}
We set a positive integer $t > 0$, which we call the decoding radius.
If $t$ and $A_1 \in \mathrm{Div}(F)$ satisfy
\begin{align}
    &\supp A_1 \cap \supp D = \emptyset,\label{decodingConditionFirst}\\
    &\deg A_1 < \deg A - (2g-2) - t,\\
    &l(A_1) > t,\label{decodingConditionLast}
\end{align}
then there is an efficient classical algorithm to decode the (classical) code $C_\Omega(D,A)$ from errors of weight at most $t$, as presented by~\cite[Decoding Algorithm~8.5.4]{stichtenoth2009algebraic}.
Note that if the weight of errors exceeds $t$, the decoder may not be able to find the correct codeword, but we assume that the decoder, even in such a case, still outputs some codeword that may have a logical error; that is, we work on a setting without post-selection.
The runtime of this decoder is $O(\mathrm{poly}(n))$, where the dominant step is to solve a system of linear equations as presented in Section~8.5 of~\cite{stichtenoth2009algebraic}.

\subsection{Task of Magic State Distillation and Error Model}
\label{sec:task}
We present the definition of magic state distillation analysed herein.
The goal of magic state distillation is to transform given noisy magic states into a smaller number of less noisy magic states via stabiliser operations only (i.e., operations composed of preparation of $\ket{0}$, $Z$-basis measurements, and Clifford gates).

In this paper, we consider magic state distillation on qubits.
We may have several possible choices of magic states, such as the $T$ state $\ket{T}=T\ket{+}$ and the $CCZ$ state $\ket{CCZ}=CCZ\ket{+}^{\otimes 3}$.
Our primary focus is on $CCZ$ gates (or equivalently, Toffoli gates), and we consider $\ket{CCZ}$ as the target magic state to be distilled, which enables the implementation of the $CCZ$ gate using stabiliser operations only via a gate teleportation procedure.

Then, the magic state distillation deals with the state transformation from $\zeta$ noisy $CCZ$ states into $\xi(\zeta)$ $CCZ$ states of a lower error rate.
We may write $\xi(\zeta)$ as simply $l$ if the dependency on $\zeta$ is obvious in context.
Following the convention of magic state distillation in~\cite{bravyi2005universal,bravyi2012magic}, we assume that stabiliser operations on qubits have no error, and the error source is the preparation operations to prepare initial noisy $CCZ$ states.
Note that this setting is well motivated by typical situations in fault-tolerant quantum computation, where fault-tolerant logical stabiliser operations are available by transversal implementations with their logical error rate arbitrarily suppressed, and logical magic states are prepared from physical magic states by state injection with their initial logical error rate in the same order as physical error rate, which needs to be reduced by magic state distillation.
To deal with the general error model in fault-tolerant quantum computation as in~\cite{gottesman2010,yamasaki2024time}, we assume that the $CCZ$-state preparation operations undergo the local stochastic error model, which allows correlated errors and is more general than the independent and identically distributed (IID) error model in the conventional error analyses of magic state distillation~\cite{bravyi2005universal,bravyi2012magic}.
In particular, we say that the preparation operations undergo the local stochastic error model if the following conditions are satisfied: (i) among the $\zeta$ locations of the $CCZ$-state preparation operations labelled $1,\ldots,\zeta$, a subset $F\subset\{1,\ldots,\zeta\}$ of faulty locations are randomly chosen with probability $p(F)$, and the $\zeta$ preparation operations to prepare noiseless $\ket{CCZ}^{\otimes \zeta}$ are followed by noise at the locations in $F$ represented by an arbitrary completely positive and trace-preserving (CPTP) map $\mathcal{E}_F$ acting on the $3\zeta$ qubits of $F$ (i.e., $\mathcal{E}_F$ may be correlated); (ii) each location $i\in\{1,\ldots,\zeta\}$ has a parameter $p_{i}$, and for any
set $S$ of locations, the probability $\Pr\{F\supset S\}$ of having faults at every location in $S$ (i.e., the probability of the randomly chosen set $F$ including $S$) is at most $\prod_{i\in S}p_{i}$. 
An upper bound of $p_{i}$ over all locations, i.e.,
\begin{align}
    p_\mathrm{ph}\coloneqq\max_i\{p_{i}\}, 
\end{align}
is called the physical error rate, so that we have
\begin{align}
\label{eq:physical_error_rate}
    \Pr\{F\supset S\}\leq p_\mathrm{ph}^{|S|},
\end{align}
where $|S|$ is the cardinality of the set $S$.
Note that in the IID error model, the initial states would be in the form $\rho^{\otimes \zeta}$
with $1-\bra{CCZ}\rho\ket{CCZ}\leq p_\mathrm{ph}$,
and then by performing twirling~\cite{bravyi2005universal}, we can convert each $\rho$ into 
$(1-p_\mathrm{ph})\ket{CCZ}\bra{CCZ}+p_\mathrm{ph}\rho_\perp$, where $\rho_\perp$ is a state in the orthogonal complement of $\ket{CCZ}$; in the same way, even if the errors are correlated, twirling transforms coherent errors into stochastic errors, which are in the scope of our error model.
With twirling, we can deal with the IID error model as a special case of the local stochastic error model at physical error rate $p_\mathrm{ph}$.

Within this error model, the task of magic state distillation is defined as follows.
Given any target error rate $\epsilon>0$ and $\zeta$ noisy $CCZ$ states (i.e. a mixed state $\rho_\zeta$ of $3\zeta$ qubits prepared by $\zeta$ $CCZ$-state preparation operations undergoing the local stochastic error model at physical error rate $p_\mathrm{ph}$), the task of magic state distillation is to transform $\rho_\zeta$ by stabiliser operations $\mathcal{E}_\zeta$ into a $3\zeta$-qubit state $\mathcal{E}_\zeta(\rho_\zeta)$ with its infidelity to the $\zeta$ copies of $CCZ$ states suppressed below $\epsilon$, i.e.,
\begin{align}
\label{eq:target_infidelty}
    1-\bra{CCZ}^{\otimes \xi(\zeta)}\left(\mathcal{E}_\zeta(\rho_\zeta)\right)\ket{CCZ}^{\otimes \xi(\zeta)}\leq\epsilon.
\end{align}
The overhead of a family $\{\mathcal{E}_\zeta\}_\zeta$ of protocols for magic state distillation is defined as the required number of initial magic states per final magic state at target error rate $\epsilon$, i.e.,
\begin{align}
    \inf\left\{\frac{\zeta}{\xi(\zeta)}:\text{A protocol $\mathcal{E}_\zeta$ achieves~\eqref{eq:target_infidelty} for some $\zeta$}\right\}.
\end{align}
In particular, the family of protocols achieves a polylogarithmic overhead if there is $\gamma>0$ such that
\begin{align}
    \inf\left\{\frac{\zeta}{\xi(\zeta)}:\text{A protocol $\mathcal{E}_\zeta$ achieves~\eqref{eq:target_infidelty} for some $\zeta$}\right\}=\mathcal{O}\left(\log^\gamma\left(\frac{1}{\epsilon}\right)\right)
\end{align}
as $\epsilon\to 0$.
The family of protocols achieves a constant overhead if
\begin{align}
    \inf\left\{\frac{\zeta}{\xi(\zeta)}:\text{A protocol $\mathcal{E}_\zeta$ achieves~\eqref{eq:target_infidelty} for some $\zeta$}\right\}=\mathcal{O}\left(1\right)
\end{align}
as $\epsilon\to 0$.
Note that $\ket{T}$ and $\ket{CCZ}$ may be exactly converted into each other at finite rates~\cite{beverland2020lower}; that is, the particular choice of the target magic state among these magic states does not affect the asymptotic scaling of overhead up to constant factors.

\section{Triorthogonal Codes on Prime-Power Qudits}\label{mainCodeSection}

This section is devoted to constructing and analysing the properties of the quantum code that will go on to be the central element of our magic state distillation protocols addressed in Section \ref{MSDProtocolMainSection}. These codes are so-called triorthogonal codes on $q$-dimensional qudits. Throughout, $q=p^s$ where $p=2$ and $s=10$, but we often write simply $q$ for brevity, and several of the results of this section will hold for other values of $s$, as we will indicate.

The triorthogonality framework for qubits was initially introduced by Bravyi and Haah in \cite{bravyi2012magic} and extended to prime-dimensional qudits by Krishna and Tillich in \cite{krishna2019towards}. The idea is to construct a matrix with certain nice properties, from which a quantum code may be built. As a result, the quantum code has certain nice properties that make it amenable to use for magic state distillation. The definition that we present for our triorthogonal matrix over $\mathbb{F}_q$ is one possible definition of a triorthogonal matrix over $\mathbb{F}_q$, but this definition is carefully chosen for our purposes.

\begin{definition}[Triorthogonal Matrix]\label{secondTriorthogDef}
    For us, a matrix $G \in \mathbb{F}_{q}^{m \times n}$, where $q=2^{s}$, with rows $(g^a)_{a=1}^m$, is called triorthogonal if for all $a,b,c \in \{1, ..., m\}$,
    \begin{equation}\label{triorthogonalityDef1}
        \sum_{i=1}^n(g^a_i)^4(g^b_i)^2(g^c_i) = \begin{cases}
            1 &\text{ if } 1 \leq a=b=c \leq k\\
            0 &\text{ otherwise}
        \end{cases}
    \end{equation}
    and
    \begin{equation}\label{triorthogonalityDef2}
        \sum_{i=1}^n \sigma_ig_i^ag_i^b = \begin{cases}
            \tau_a &\text{ if } 1 \leq a=b \leq k\\
            0 &\text{ otherwise}
        \end{cases}
    \end{equation}
    for some integer $k \in \{1, ..., m\}$ and for some $\sigma_i, \tau_a \in \mathbb{F}_{q}$, where $\sigma_i, \tau_a \neq 0$. In these expressions, all arithmetic takes place over $\mathbb{F}_{q}$.
\end{definition}

These properties are designed to ensure both that the quantum code we go on to define forms a valid quantum code on qudits (of dimension $q$), and also that it supports a non-Clifford transversal gate. This means that there is a one qubit non-Clifford gate $U$ such that acting on the $n$ physical qudits with $U^{\otimes n}$ executes the encoded gate $\overline{U^{\otimes k}}$ (on the $k$ encoded qudits). In Section \ref{nonCliffordGate}, we will analyse certain diagonal gates, one of which will go on to be our transversal gate $U$. In Section \ref{quantumCodefromMatrix}, we will show how to construct a triorthogonal quantum code from a triorthogonal matrix, and prove the properties of it that we require. This code will go on to be used to construct the magic state distillation protocols in Section \ref{MSDProtocolMainSection}. Note that the triorthogonal matrix that is used to build the quantum code will itself be developed in Section \ref{mainConstruction}.

As a final comment, we note that certain parts of the upcoming section are more original, and other parts are only minor generalisations of the triorthogonality formalism of \cite{bravyi2012magic} and \cite{krishna2019towards}. When the latter is true, we will comment on it.

\subsection{The Non-Clifford Transversal Gate}\label{nonCliffordGate}
 
Throughout, we primarily think of $q=p^s$ where $p=2$ and $s=10$, but the results of Section \ref{nonCliffordGate} in fact hold for $p=2$ and $s \geq 5$. Let us study gates acting on a single qudit of dimension $q$ of the following form.
\begin{equation}\label{initialgeneralGateDef}
    U_\beta^{(n)} \coloneq \sum_{\gamma \in \mathbb{F}_{q}}\exp\left[i\pi\tr(\beta\gamma^n)\right]\ket{\gamma}\bra{\gamma}\text{ where } \beta \in \mathbb{F}_{q} \text{ and } n \in \mathbb{Z}_{> 0}.
\end{equation}
Throughout, we will use the fact that $(U_\beta^{(n)})^\dagger = U_\beta^{(n)}$, and that $U_\beta^{(n)}$ is diagonal. The gate of interest to us will be $U_1^{(7)}$, and this section's analysis will prove the following.

\begin{lemma}\label{gatelemma}
    With $q=p^s$, where $p=2$ and $s \geq 5$, the gate $U_1^{(7)}$, as defined in Equation \eqref{initialgeneralGateDef}, is non-Clifford and in the third level of the Clifford hierarchy.
\end{lemma}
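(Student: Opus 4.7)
The plan is to pass to the $s$-qubit picture via the isomorphism $\theta$ of Section~\ref{primePowerQuditsPrelims}, which restricts to an isomorphism of diagonal gates in each level of the Clifford hierarchy, and then analyze the diagonal gate as a Boolean polynomial. Since $\exp[i\pi x]=(-1)^{x}$ for $x\in\mathbb{F}_{2}$, $U_{1}^{(7)}$ is diagonal with entries $(-1)^{\tr(\eta^{7})}$. Fixing a self-dual basis $\{\alpha_{i}\}_{i=0}^{s-1}$ of $\mathbb{F}_{q}$ over $\mathbb{F}_{2}$ and writing $\eta=\sum_{i}e_{i}\alpha_{i}$ with $e_{i}\in\mathbb{F}_{2}$, the image $\theta(U_{1}^{(7)})$ is the $s$-qubit diagonal gate with phase $(-1)^{p(\mathbf{e})}$, where $p(\mathbf{e})=\tr\bigl((\sum_{i}e_{i}\alpha_{i})^{7}\bigr)$. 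The lemma then reduces to showing that the Boolean degree of $p$ is exactly $3$.

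For the upper bound, I use $e_{i}^{2}=e_{i}$ together with $\mathbb{F}_{2}$-linearity of Frobenius to write $\eta^{2^{a}}=\sum_{i}e_{i}\alpha_{i}^{2^{a}}$ for each $a\in\{0,1,2\}$, so that $\eta^{7}=\eta\cdot\eta^{2}\cdot\eta^{4}=\sum_{i,j,k}e_{i}e_{j}e_{k}\,\alpha_{i}\alpha_{j}^{2}\alpha_{k}^{4}$ and hence
\[
p(\mathbf{e})=\sum_{i,j,k}e_{i}e_{j}e_{k}\,\tr(\alpha_{i}\alpha_{j}^{2}\alpha_{k}^{4}),
\]
a Boolean polynomial of degree at most $3$. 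A short induction on Boolean degree then places $U_{1}^{(7)}$ in the third level of the Clifford hierarchy: conjugating any diagonal $(-1)^{q(\mathbf{e})}$ gate by $X^{\mathbf{b}}$ yields $D_{\mathbf{b}}X^{\mathbf{b}}$, where $D_{\mathbf{b}}$ carries the finite-difference phase $q(\mathbf{e})+q(\mathbf{e}\oplus\mathbf{b})$ of Boolean degree $\leq \deg q-1$, while $Z$-strings commute through trivially. The induction gives that a diagonal $\pm 1$ gate with Boolean-degree-$k$ phase lies in $\mathcal{C}^{(k)}$.

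For the lower bound, I need some coefficient $T(a,b,c):=\sum_{\sigma\in S_{3}}\tr\bigl(\alpha_{\sigma(a)}\alpha_{\sigma(b)}^{2}\alpha_{\sigma(c)}^{4}\bigr)$ of $e_{a}e_{b}e_{c}$ (with $a,b,c$ distinct) in the algebraic normal form of $p$ to be nonzero in $\mathbb{F}_{2}$. Equivalently, by the third-order finite-difference test,
\[
T(a,b,c)=\sum_{S\subseteq\{\alpha_{a},\alpha_{b},\alpha_{c}\}}\tr\Bigl(\bigl(\textstyle\sum_{x\in S}x\bigr)^{7}\Bigr),
\]
which one sees by expanding $(x+y+z)^{7}$ in characteristic two: by Lucas' theorem only ordered exponent tuples that are permutations of $(1,2,4)$ yield odd multinomials with all three variables present, so the alternating sum over subsets collapses to exactly the six-term expression for $T(a,b,c)$. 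The main obstacle is to verify $T(a,b,c)\neq 0$ for at least one triple. For $s=10$ this is a direct computation in the explicit self-dual basis \eqref{sdbFirst}--\eqref{sdbLast}; more generally, the hypothesis $s\geq 5$ is precisely what guarantees that the Frobenius images $\{\alpha_{i}^{2^{r}}\}_{r,i}$ spread sufficiently in $\mathbb{F}_{2^{s}}$ that no algebraic identity forces every $T(a,b,c)$ to cancel. Combined with the upper bound, this shows $U_{1}^{(7)}\in\mathcal{C}_{1,q}^{(3)}\setminus\mathcal{C}_{1,q}$, completing the proof.
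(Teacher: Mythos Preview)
Your approach via Boolean phase polynomials is genuinely different from the paper's, and the upper-bound half is clean: writing $\eta^{7}=\eta\cdot\eta^{2}\cdot\eta^{4}$ and using $\mathbb{F}_{2}$-linearity of Frobenius gives the degree-$\leq 3$ bound immediately, and the induction placing a diagonal $\pm 1$ gate with a degree-$k$ phase polynomial in the $k$-th level of the hierarchy is standard and correct. The paper instead computes $U_1^{(7)}X^\gamma U_1^{(7)}$ explicitly and recognises the result as a product of $X^\gamma$, a $Z$-Pauli, and the Clifford gates $U^{(3)}_{\sqrt{\gamma}+\gamma^4}$ and $U^{(5)}_{\gamma^2}$ established earlier in the section; this is the same fact seen through direct commutator arithmetic rather than degree bookkeeping.

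The non-Clifford half, however, has a real gap. You assert that ``the hypothesis $s\geq 5$ is precisely what guarantees \ldots\ that no algebraic identity forces every $T(a,b,c)$ to cancel,'' but this is not a proof --- it is exactly the statement to be shown, and saying ``for $s=10$ this is a direct computation'' does not cover the general claim. The paper closes the loop concretely: at $\gamma=1$ one finds $U_1^{(7)}X^{1}U_1^{(7)}=X^{1}U_1^{(5)}Z^{1}$, and the earlier verification that $U_1^{(5)}$ is non-Pauli (which is where $s\geq 5$ actually enters, via $\gamma^{-15}$ taking more than one value in $\mathbb{F}_{2^s}^{*}$) forces $U_1^{(7)}$ out of the Clifford group. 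Your route can be completed in the same spirit. The third derivative $T(x,y,z)$ is $\mathbb{F}_{2}$-trilinear and alternating (since $D_{x}D_{x}=0$ in characteristic $2$), so it suffices to exhibit a single nonzero value on arbitrary field elements. Using $\tr(u^{2})=\tr(u)$ one simplifies $T(1,b,c)=\tr(b^{4}c+bc^{4})=\tr\bigl((b^{4}+b^{1/4})c\bigr)$, which is a nonzero linear functional in $c$ whenever $b^{16}\neq b$; since the solutions of $b^{16}=b$ in $\mathbb{F}_{2^{s}}$ form the subfield $\mathbb{F}_{2^{\gcd(4,s)}}$, such $b$ exist for every $s\geq 5$. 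Without this (or an equivalent) computation, the non-Clifford conclusion remains unproved.
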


Naturally, one first notes that $U_\beta^{(1)} = Z^\beta$. Next, we see that
\begin{equation}
    U_\beta^{(2)} = \sum_{\gamma \in \mathbb{F}_{q}}\exp\left[i\pi\tr(\beta\gamma^2)\right]\ket{\gamma}\bra{\gamma} = \sum_{\gamma \in \mathbb{F}_{q}}\exp\left[i\pi\tr(\sqrt{\beta}\gamma)\right]\ket{\gamma}\bra{\gamma} = Z^{\sqrt{\beta}}
\end{equation}
where we have used Proposition \ref{traceProps} in the second equality. Also, we note that for all $\beta \in \mathbb{F}_{q}$, $\sqrt{\beta}$ is well-defined in $\mathbb{F}_{q}$\footnote{This is true for all finite fields $\mathbb{F}_{2^s}$, which may be seen from the well-known fact that the multiplicative group of $\mathbb{F}_q$ is exactly the cyclic group of order $q-1$. In fact, since $\beta^q = \beta$ for all $\beta \in \mathbb{F}_q$, we have $\sqrt{\beta} = \beta^{\frac{q}{2}}$.}. We have therefore shown that $U_\beta^{(1)}$ and $U_\beta^{(2)}$ are Paulis for all $\beta \in \mathbb{F}_{q}$. The same is not true for $n=3$. Consider
\begin{equation}
    U_\beta^{(3)} = \sum_{\gamma \in \mathbb{F}_{q}}\exp\left[i\pi\tr(\beta\gamma^3)\right]\ket{\gamma}\bra{\gamma}.
\end{equation}
Because this is a diagonal gate, it will certainly commute with $Z^\gamma$ for all $\gamma \in \mathbb{F}_{q}$. However, for $\eta \in \mathbb{F}_{q}$,
\begin{align}
    U_\beta^{(3)}X^\gamma U_\beta^{(3)}\ket{\eta} &= \exp\left[i\pi\tr(\beta\eta^3)\right]U_\beta^{(3)}X^\gamma\ket{\eta}\\
    &=\exp\left[i\pi\tr(\beta\eta^3)\right]U_\beta^{(3)}\ket{\eta+\gamma}\\
    &=\exp\left[i\pi\tr(\beta\eta^3)\right]\exp\left[i\pi\tr(\beta(\eta+\gamma)^3)\right]\ket{\eta+\gamma}\\
    &=\exp\left[i\pi\tr(\beta\eta^3 + \beta\eta^3 + \beta\eta^2\gamma + \beta\eta\gamma^2+\beta\gamma^3)\right]\ket{\eta+\gamma}\label{expansion3Clifford}\\
    &= \exp\left[i\pi\tr(\beta\gamma^3)\right]X^\gamma Z^{\sqrt{\beta\gamma}+\beta\gamma^2}\ket{\eta}.
\end{align}
Equation \eqref{expansion3Clifford} makes use of the fact that $\tr:\mathbb{F}_{q} \to \mathbb{F}_2$ is a linear map, and so we have $\exp\left[i\pi(\tr(x) + \tr(y))\right] = \exp\left[i\pi\tr(x+y)\right]$, because the prefactor $i\pi$ and the presence of the exponent means we may evaluate the addition $\tr(x)+\tr(y)$ modulo 2. Equation \eqref{expansion3Clifford} also makes use of the fact that $(\eta+\gamma)^3 = \eta^3 + \eta^2\gamma + \eta\gamma^2 + \gamma^3$ in finite fields $\mathbb{F}_{2^s}$. We have shown that
\begin{equation}
    U_\beta^{(3)}X^\gamma U_{\beta}^{(3)} = \exp\left[i\pi\tr(\beta\gamma^3)\right]X^\gamma Z^{\sqrt{\beta\gamma}+\beta\gamma^2}
\end{equation}
from which we deduce that $U_\beta^{(3)}$ is Clifford for all $\beta \in \mathbb{F}_{q}$ (again, because it is diagonal it will automatically commute with all $Z^\gamma$). We would also like to show that $U_\beta^{(3)}$ is not a Pauli for all $\beta \neq 0$. Indeed, if $U_\beta^{(3)}$ were a Pauli, then $U_\beta^{(3)}X^\gamma U_\beta^{(3)}$ would be equal to $\pm X^\gamma$ for all $\gamma \in \mathbb{F}_{q}$. Therefore, if $\sqrt{\beta\gamma} + \beta\gamma^2 \neq 0$ for some $\gamma \in \mathbb{F}_{q}$, then $U_\beta^{(3)}$ is not a Pauli. Suppose that, for some $\beta$, $\sqrt{\beta\gamma} = \beta\gamma^2$ for every $\gamma \in \mathbb{F}_{q}$. If $\beta \neq 0$, this implies $\beta = \gamma^{-3}$ for every $\gamma \in \mathbb{F}_{q}^*$\footnote{Here, as is often the case, $\mathbb{F}_q^*$ denotes the multiplicative group of $\mathbb{F}_q$, i.e. its non-zero elements.}, which is impossible as $\gamma^{-3}$ takes multiple values in $\mathbb{F}_q$. Thus, $U_\beta^{(3)}$ is a Clifford for every $\beta$, but not a Pauli, for every $\beta \neq 0$.

Next, we find
\begin{equation}
    U_\beta^{(4)} = \sum_{\gamma \in \mathbb{F}_{q}}\exp\left[i\pi\tr(\beta\gamma^4)\right]\ket{\gamma}\bra{\gamma} = \sum_{\gamma \in \mathbb{F}_{q}}\exp\left[i\pi\tr(\beta^{1/4}\gamma)\right]\ket{\gamma}\bra{\gamma} = Z^{\beta^{1/4}}
\end{equation}
again using Proposition \ref{traceProps} and the fact that fourth roots are again defined in $\mathbb{F}_{q}$. We then find that $U_\beta^{(5)}$ is, like $U_\beta^{(3)}$, a Clifford for all $\beta \in \mathbb{F}_{q}$, and not a Pauli when $\beta \neq 0$. Indeed,
\begin{align}
    U_\beta^{(5)}X^\gamma U_\beta^{(5)}\ket{\eta} &=\exp\left[i\pi\tr(\beta\eta^5)\right]\exp\left[i\pi\tr(\beta(\eta+\gamma)^5)\right]\ket{\eta+\gamma}\\
    &=\exp\left[i\pi\tr(\beta\eta^4\gamma + \beta\eta\gamma^4 + \beta\gamma^5\right]\ket{\eta+\gamma}\label{expansion5Clifford}\\
    &=\exp\left[i\pi\tr(\beta\gamma^5)\right]X^\gamma Z^{(\beta\gamma)^{1/4}+\beta\gamma^4}\ket{\eta}
\end{align}
where in Equation \eqref{expansion5Clifford} we have used that $(\eta+\gamma)^5 = \eta^5 + \eta^4\gamma + \eta\gamma^4 + \gamma^5$ in finite fields $\mathbb{F}_{2^s}$, because the binomial coefficients $\begin{pmatrix}5 \\ k \end{pmatrix}$ are odd for $k=0,1,4,5$ and even for $k=2,3$. By deducing that
\begin{equation}
    U_\beta^{(5)}X^\gamma U_\beta^{(5)} = \exp\left[i\pi\tr(\beta\gamma^5)\right]X^\gamma Z^{(\beta\gamma)^{1/4}+\beta\gamma^4},
\end{equation}
we again see that $U_\beta^{(5)}$ is Clifford for all $\beta \in \mathbb{F}_{q}$. Moreover, just as for the case $n=3$, if $(\beta\gamma)^{1/4} + \beta\gamma^4 \neq 0$ for some $\gamma \in \mathbb{F}_{q}$, then $U_\beta^{(5)}$ is not Pauli. Suppose that, for some $\beta$, $(\beta\gamma)^{1/4}+\beta\gamma^4 = 0$ for every $\gamma \in \mathbb{F}_{q}$. If $\beta \neq 0$, this implies that $\beta^3 = \gamma^{-15}$ for all $\gamma \in \mathbb{F}_{q}^*$. Because $\gamma^{-15}$ takes multiple values in $\mathbb{F}_{q}^*$\footnote{Again, this is seen by noting that the multiplicative group of $\mathbb{F}_{q}$ is a cyclic group of order $q-1$ where $q=2^s$ and $s \geq 5$.}, and $\beta$ is fixed, this is impossible. Thus, $U_\beta^{(5)}$ is a Clifford for all $\beta$, and not a Pauli for all $\beta \neq 0$.

The case of $n=6$ may be identified quickly as
\begin{equation}
    U_\beta^{(6)} = \sum_{\gamma \in \mathbb{F}_{q}}\exp\left[i\pi\tr(\beta\gamma^6)\right]\ket{\gamma}\bra{\gamma} = \sum_{\gamma \in \mathbb{F}_{q}}\exp\left[i\pi\tr(\sqrt{\beta}\gamma^3)\right]\ket{\gamma}\bra{\gamma} = U_{\sqrt{\beta}}^{(3)}
\end{equation}
which we know to be Clifford for all $\beta$ and non-Pauli for all $\beta \neq 0$.

Finally, we move to the most important case of $n=7$. This is
\begin{equation}
    U_\beta^{(7)} = \sum_{\gamma \in \mathbb{F}_{q}}\exp\left[i\pi\tr(\beta\gamma^7)\right]\ket{\gamma}\bra{\gamma}.
\end{equation}
Here, we have
\begin{align}
    U_\beta^{(7)}X^\gamma U_\beta^{(7)}\ket{\eta} &= \exp\left[i\pi\tr(\beta\eta^7)\right]\exp\left[i\pi\tr(\beta(\eta+\gamma)^7)\right]\ket{\eta+\gamma}\\
    &= \exp\left[i\pi\tr(\beta\eta^6\gamma + \beta\eta^5\gamma^2 + \beta\eta^4\gamma^3 + \beta\eta^3\gamma^4 + \beta\eta^2\gamma^5 + \beta\eta\gamma^6 + \beta\gamma^7\right]\ket{\eta+\gamma}\\
    &=\exp\left[i\pi\tr(\beta\gamma^7)\right]X^\gamma U^{(3)}_{\sqrt{\beta\gamma} + \beta\gamma^4}U^{(5)}_{\beta\gamma^2}Z^{\beta^{1/4}\gamma^{3/4} + \beta^{1/2}\gamma^{5/2} + \beta\gamma^6}\ket{\eta}
\end{align}
from which we deduce
\begin{equation}
    U_\beta^{(7)}X^\gamma U_\beta^{(7)} = \exp\left[i\pi\tr(\beta\gamma^7)\right]X^\gamma U^{(3)}_{\sqrt{\beta\gamma} + \beta\gamma^4}U^{(5)}_{\beta\gamma^2}Z^{\beta^{1/4}\gamma^{3/4}+\beta^{1/2}\gamma^{5/2} + \beta\gamma^6}.
\end{equation}
Most interesting to us is the case $\beta = 1$, where we have
\begin{equation}
    U_1^{(7)}X^\gamma U_1^{(7)} = \exp\left[i\pi\tr(\gamma^7)\right]X^\gamma U^{(3)}_{\sqrt{\gamma}+\gamma^4}U^{(5)}_{\gamma^2}Z^{\gamma^{3/4}+\gamma^{5/2}+\gamma^6}.
\end{equation}
We therefore find that $U_1^{(7)}$ is in the third level of the Clifford hierarchy, since we know the right-hand side is Clifford, and we know that $U_1^{(7)}$ will commute with $Z^\gamma$ for every $\gamma$ since it is diagonal. We would like to prove further that $U_1^{(7)}$ is non-Clifford, i.e. it is exactly in the third level of the Clifford hierarchy. For this, it suffices to consider $\gamma = 1$, for which
\begin{equation}
    U^{(7)}_1X^1U_1^{(7)} = X^1U^{(5)}_1Z^1.
\end{equation}
We have shown that $U_1^{(5)}$ is not a Pauli, and thus we find that $U_1^{(7)}$ is not Clifford, thus establishing Lemma \ref{gatelemma}.

\subsection{The Triorthogonal Code from the Triorthogonal Matrix}\label{quantumCodefromMatrix}

In this section, we will construct the quantum code from which our magic state distillation protocol will be built in Section \ref{MSDfromCode}, and we will prove the properties of it which we shall need. In turn, the quantum code built in this section will be constructed from the triorthogonal matrix constructed in Section \ref{mainConstruction}. Our definition of a triorthogonal matrix will be as in Definition \ref{secondTriorthogDef}. This definition is one example of a possible generalisation of the definition of triorthogonal matrices made for binary matrices in \cite{bravyi2012magic} and the definition made for matrices over $\mathbb{F}_p$ in \cite{krishna2019towards}, where $p$ is a prime. In all cases, there are polynomial expressions of the rows analogous to Equations \eqref{triorthogonalityDef1} and \eqref{triorthogonalityDef2}, and these equations are designed to ensure that the quantum code we construct is a valid code and supports a non-Clifford transversal gate.

We adopt similar notation to \cite{bravyi2012magic}, as we now detail. Given the triorthogonal matrix $G \in \mathbb{F}_q^{m \times n}$, the submatrix formed by its first $k$ rows is denoted $G_1$, whereas the submatrix formed by the latter $m-k$ rows is denoted $G_0$. Then, we denote the vector space over $\mathbb{F}_q$ generated by the rows of $G$ as $\mathcal{G}\subseteq \mathbb{F}_{q}^n$, the vector space generated by the rows of $G_1$ as $\mathcal{G}_1$, and the vector space generated by the rows of $G_0$ as $\mathcal{G}_0$. We emphasise that these spaces are all $\mathbb{F}_{q}$-linear vector spaces.

Now, the quantum code that we need for our construction is $CSS(X,\mathcal{G}_0;Z,\mathcal{G}^\perp)$, where
\begin{equation}
    \mathcal{G}^\perp = \{y \in \mathbb{F}_{q}^n: \sum_{i=1}^nx_iy_i = 0 \text{ for all } x \in \mathcal{G}\}
\end{equation}
and the arithmetic in this expression takes place over $\mathbb{F}_{q}$. This is a well-defined CSS code, since $\mathcal{G}_0 \subseteq \mathcal{G} = (\mathcal{G}^\perp)^\perp$. We note that the definition of the quantum code $CSS(X,\mathcal{G}_0;Z,\mathcal{G}^\perp)$ is exactly that which is employed in both \cite{bravyi2012magic} and \cite{krishna2019towards}, although note this comes with a different definition of the triorthogonal matrix. In particular, the presence of the elements $\sigma_i$ in Equation \eqref{triorthogonalityDef2} is the main cause of complication in proving the quantum code's properties relative to the proofs of \cite{bravyi2012magic} and \cite{krishna2019towards}.

To start proving these properties, the following will be useful.

\begin{lemma}\label{matrixProps}The following hold for the triorthogonal matrix $G$.
    \begin{enumerate}
        \item The $k$ rows of $G_1$ are linearly independent over $\mathbb{F}_{q}$.
        \item $\mathcal{G}_0\cap\mathcal{G}_1 = \{0\}$.
    \end{enumerate}
\end{lemma}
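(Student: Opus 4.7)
The plan is to leverage only the second triorthogonality condition \eqref{triorthogonalityDef2}, which can be read as saying that the first $k$ rows are orthogonal with respect to the weighted bilinear form
\begin{equation}
    \langle x, y\rangle_\sigma \coloneq \sum_{i=1}^n \sigma_i x_i y_i \quad \text{on } \mathbb{F}_q^n,
\end{equation}
and moreover each has nonzero self-pairing $\tau_a$. Both claims of the lemma then follow by pairing suitable linear combinations against $g^b$ for $b \leq k$.

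For part 1, I would suppose for contradiction that $\sum_{a=1}^k \lambda_a g^a = 0$ for some $\lambda_a \in \mathbb{F}_q$ not all zero. Fix any $b \in \{1, \ldots, k\}$ and compute $\langle \sum_{a=1}^k \lambda_a g^a, g^b \rangle_\sigma$ in two ways: it equals zero since the argument is zero, and by \eqref{triorthogonalityDef2} it also equals $\sum_{a=1}^k \lambda_a \delta_{ab} \tau_a = \lambda_b \tau_b$. Since $\tau_b \neq 0$, this forces $\lambda_b = 0$ for each $b \leq k$, contradicting the assumption.

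For part 2, I would take $v \in \mathcal{G}_0 \cap \mathcal{G}_1$ and write it simultaneously as $v = \sum_{a=1}^k \mu_a g^a$ and $v = \sum_{a=k+1}^m \nu_a g^a$. Fix $b \in \{1, \ldots, k\}$ and compute $\langle v, g^b \rangle_\sigma$ from each expression. The first representation gives $\mu_b \tau_b$ by the same calculation as above. The second representation gives zero, because every index $a$ appearing there satisfies $a > k \geq b$, so in particular $a \neq b$, and \eqref{triorthogonalityDef2} kills every term. Hence $\mu_b \tau_b = 0$, and $\tau_b \neq 0$ yields $\mu_b = 0$ for all $b$, so $v = 0$.

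There is no serious obstacle here; the lemma is essentially a formal consequence of the definition, and the only subtlety worth flagging is that the factors $\sigma_i$ in \eqref{triorthogonalityDef2} play no role in the argument beyond entering the definition of $\langle \cdot, \cdot\rangle_\sigma$. Unlike the analogous lemmas in \cite{bravyi2012magic,krishna2019towards}, where the bilinear form is simply the standard dot product, here one must work throughout with the $\sigma$-weighted form, but this does not alter the structure of the proof. It is also worth noting that the cubic condition \eqref{triorthogonalityDef1} is not needed for this lemma; it will enter only when we subsequently verify that the resulting CSS code supports the transversal gate analysed in Section \ref{nonCliffordGate}.
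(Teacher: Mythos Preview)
Your proof is correct and follows essentially the same approach as the paper: both argue by pairing a putative linear dependence against $g^b$ for $b \leq k$ using the $\sigma$-weighted bilinear form (the paper spells this out as ``multiply componentwise by $\sigma_i g_i^b$ and sum over $i$'', which is exactly your $\langle \cdot, g^b\rangle_\sigma$), then invoke \eqref{triorthogonalityDef2} and $\tau_b \neq 0$. Your explicit naming of the bilinear form and your remark that \eqref{triorthogonalityDef1} is unused are accurate and match the paper's own commentary.
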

Note that this lemma is the analogue of the first two points of Lemma 1 in \cite{bravyi2012magic}, and the proof is only slightly different due to the presence of $\sigma_i$.
\begin{proof}[Proof of Lemma \ref{matrixProps}]
For the first point, consider elements $\mu_a \in \mathbb{F}_{q}$ and an expression
\begin{equation}
    \sum_{a=1}^k\mu_ag^a = 0
\end{equation}
with arithmetic over $\mathbb{F}_{q}$. This means that $\sum_{a=1}^k\mu_ag^a_i=0$ for each $i=1, ..., n$, and multiplying by $\sigma_ig_i^b$ leads us to
\begin{equation}
    \sum_{a=1}^k\mu_a\sigma_ig_i^ag_i^b = 0
\end{equation}
for every $i=1, ..., n$ and $b=1, ..., k$. Summing over $i \in \{1, ..., n\}$ and employing Equation \eqref{triorthogonalityDef2} leads us to $\mu_b\tau_b = 0$ for each $b=1, ..., k$ $\implies \mu_b = 0$ for each $b=1, ..., k$.

The second point of the lemma is proved very similarly. If we consider an element in $x \in \mathcal{G}_0\cap\mathcal{G}_1$, it may be written as
\begin{equation}
    x = \sum_{a=1}^k\mu_ag^a = \sum_{a=k+1}^m\mu_ag^a
\end{equation}
for some $\mu_a \in \mathbb{F}_{q}$. Via the exact same steps - taking components, multiplying by $\sigma_ig_i^b$ for any $b =1, ..., k$, summing over $i \in \{1, ..., n\}$ and using Equation \eqref{triorthogonalityDef2}, one arrives at $\mu_a = 0$ for all $a=1, ..., k \implies x=0$.
\end{proof}
Just as it is done in \cite{bravyi2012magic}, Lemma \ref{matrixProps} allows us to determine the number of logical qudits encoded by the quantum code in question. From Equation \eqref{numEncodedQudits}, this is
\begin{equation}
    n-\dim\mathcal{G}_0 - \dim\mathcal{G}^\perp,
\end{equation}
where $\dim$ refers to the dimension of an $\mathbb{F}_{q}$-vector space over $\mathbb{F}_{q}$. We have 
\begin{equation}\label{GperpDim}
    \dim\mathcal{G}^\perp = n-\dim\mathcal{G} = n- (k + \dim\mathcal{G}_0),
\end{equation}
where the second equality uses both points in Lemma \ref{matrixProps}. The quantum code at hand thus encodes exactly $k$ logical qudits (of dimension $q$).

Turning to the distance of the quantum code, we use the standard formula for the distance of a CSS code
\begin{equation}
    d = \min\{d_X,d_Z\} \text{  where  } d_X = \min_{f \in \mathcal{G}\setminus \mathcal{G}_0}|f|\text{ 
 and  } d_Z = \min_{f \in \mathcal{G}_0^\perp \setminus \mathcal{G}^\perp}|f|.
\end{equation}
Here, the Hamming weight is defined as $|f| = |\{i:f_i \neq 0\}|$ for $f \in \mathbb{F}_{q}^n$. Note that here we are counting the distance of the quantum code in terms of Paulis on $q$-dimensional qudits. It turns out that only the $Z$-distance, $d_Z$, is relevant for the magic state distillation constructed out of the quantum code, although for completeness it is nice to have a complete picture of the quantum code's distance (and in fact this is necessary to establish Theorem \ref{quantumCodeTheorem}). Indeed, in both \cite{bravyi2012magic} and \cite{krishna2019towards}, it turns out that $d_X \geq d_Z$, so that in fact $d_Z$ is equal to the distance of the quantum code $d$. This will be true in our case as well, although the proof will be a little more involved. 

\begin{claim}\label{distanceClaim}
    Our quantum code satisfies $d_X \geq d_Z$, so in particular $d_Z$ equals the distance of the code.
\end{claim}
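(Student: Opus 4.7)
The plan is to construct an explicit weight-preserving injection $\phi : \mathcal{G} \setminus \mathcal{G}_0 \to \mathcal{G}_0^\perp \setminus \mathcal{G}^\perp$, which would immediately give $d_Z \leq d_X$. The map I have in mind is the coordinate-wise rescaling by the $\sigma_i$ appearing in Equation~\eqref{triorthogonalityDef2}: for $f \in \mathbb{F}_q^n$, set
\begin{equation}
(\phi(f))_i \coloneq \sigma_i f_i.
\end{equation}
Since every $\sigma_i$ is nonzero by the hypothesis on a triorthogonal matrix, the support of $\phi(f)$ coincides with that of $f$, so $|\phi(f)| = |f|$. The only real content of the argument is therefore to check that $\phi$ actually sends $\mathcal{G} \setminus \mathcal{G}_0$ into $\mathcal{G}_0^\perp \setminus \mathcal{G}^\perp$.

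For this, fix $f \in \mathcal{G}$ and write $f = \sum_{a=1}^m \mu_a g^a$ with $\mu_a \in \mathbb{F}_q$. For any row $g^b$ of $G$, I compute
\begin{equation}
\sum_{i=1}^n (\phi(f))_i g_i^b \;=\; \sum_{a=1}^m \mu_a \sum_{i=1}^n \sigma_i g_i^a g_i^b,
\end{equation}
and Equation~\eqref{triorthogonalityDef2} tells me the inner sum vanishes unless $1 \leq a = b \leq k$, in which case it equals $\tau_b$. So the right-hand side is $\mu_b \tau_b$ when $b \leq k$ and $0$ when $b > k$. The case $b > k$ shows at once that $\phi(f) \in \mathcal{G}_0^\perp$ for \emph{every} $f \in \mathcal{G}$. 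If in addition $f \notin \mathcal{G}_0$ then, by Lemma~\ref{matrixProps} together with the linear independence of the rows of $G_1$ modulo $\mathcal{G}_0$, at least one coefficient $\mu_b$ with $b \leq k$ is nonzero; the case $b \leq k$ above, combined with $\tau_b \neq 0$, then shows that $\phi(f)$ pairs nontrivially with $g^b \in \mathcal{G}$, so $\phi(f) \notin \mathcal{G}^\perp$.

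Combining the two directions, $\phi$ restricts to a map $\mathcal{G} \setminus \mathcal{G}_0 \to \mathcal{G}_0^\perp \setminus \mathcal{G}^\perp$, and since it preserves Hamming weight, the minimum weight in the target set is at most the minimum weight in the source, i.e.\ $d_Z \leq d_X$. Because the quantum code distance is $d = \min\{d_X, d_Z\}$, this forces $d = d_Z$, as claimed.

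The step I expect to be the main obstacle relative to the Bravyi–Haah and Krishna–Tillich arguments is precisely handling the nontrivial weights $\sigma_i$ and $\tau_a$: in those earlier works one can simply use $\phi = \mathrm{id}$ because the triorthogonality relations hold with $\sigma_i = \tau_a = 1$, whereas here the generalisation to $\mathbb{F}_{2^{10}}$ forces non-unit coefficients, and it is essential that all $\sigma_i$ (to preserve support) and all $\tau_a$ (to witness $\phi(f) \notin \mathcal{G}^\perp$) are nonzero — both of which are guaranteed by Definition~\ref{secondTriorthogDef}.
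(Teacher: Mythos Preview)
Your proof is correct and is essentially the same as the paper's: your map $\phi$ is exactly the paper's $m_\sigma$, and your pairing computation using Equation~\eqref{triorthogonalityDef2} is the core of both arguments. The only cosmetic difference is that the paper first establishes the structural identity $\mathcal{G}_0^\perp = m_\sigma(\mathcal{G}_1) + \mathcal{G}^\perp$ (Claim~\ref{g0pClaim}) and then deduces the set inclusion $m_\sigma(\mathcal{G}\setminus\mathcal{G}_0) \subseteq \mathcal{G}_0^\perp \setminus \mathcal{G}^\perp$, whereas you verify membership in $\mathcal{G}_0^\perp$ and non-membership in $\mathcal{G}^\perp$ directly---but the computations and the roles of the nonzero $\sigma_i,\tau_a$ are identical.
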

In this direction, consider the following.

\begin{definition}
    Given the elements $(\sigma_i)_{i=1}^n$ from Equation \eqref{triorthogonalityDef2}, we define the map $m_\sigma : \mathbb{F}_{q}^n \to \mathbb{F}_{q}^n$ as
    \begin{equation}
        m_\sigma(v)_i = \sigma_iv_i
    \end{equation}
    i.e. $m_\sigma$ acts as componentwise multiplication by the $\sigma_i$. For any subset $C \subseteq \mathbb{F}_{q}^n$, $m_\sigma(C) \subseteq \mathbb{F}_{q}$ is defined as the set of images of members of $C$ under $m_\sigma$.
\end{definition}
Because $\sigma_i \neq 0$, $m_\sigma$ is invertible. In fact, it is easily seen that for any linear subspace $C \subseteq \mathbb{F}_{q}^n$, $m_\sigma(C)$ forms a linear subspace, and $m_\sigma : C \to m_\sigma(C)$ is an isomorphism of vector spaces. We further have that $|m_\sigma(v)| = |v|$ for all $v \in \mathbb{F}_{q}^n$ and, for this reason, if we consider any set of vectors $X \subseteq \mathbb{F}_{q}^n$, one has that $\min\{|f|: f \in X\} = \min\{|f|: f \in m_\sigma(X)\}$.

We may now rewrite $d_X$ and $d_Z$ into useful forms. Given the observation at the end of the previous paragraph, we have
\begin{equation}
    d_X = \min_{f \in m_\sigma(\mathcal{G}\setminus \mathcal{G}_0)}|f|.
\end{equation}
In addition, because $\mathcal{G} = \mathcal{G}_1 + \mathcal{G}_0$ and $\mathcal{G}_1 \cap \mathcal{G}_0 = \{0\}$ from Lemma \ref{matrixProps}, we have
\begin{equation}
    \mathcal{G}\setminus\mathcal{G}_0 = \{x+y: x \in \mathcal{G}_1\setminus\{0\}, y \in \mathcal{G}_0\}.
\end{equation}
Therefore,
\begin{align}
    m_\sigma(\mathcal{G}\setminus\mathcal{G}_0) &= \{m_\sigma(x) + m_\sigma(y): x \in \mathcal{G}_1\setminus\{0\}, y \in \mathcal{G}_0\}\\
    &= \{x+y: x \in m_\sigma(\mathcal{G}_1)\setminus\{0\}, y \in m_\sigma(\mathcal{G}_0)\}.\label{mwgg0}
\end{align}
We then treat $d_Z$, starting with the following claim.
\begin{claim}\label{g0pClaim}
    $\mathcal{G}_0^\perp = m_\sigma(\mathcal{G}_1)+\mathcal{G}^\perp$.
\end{claim}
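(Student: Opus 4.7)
The plan is to establish the claimed equality by first showing the easy inclusion $m_\sigma(\mathcal{G}_1)+\mathcal{G}^\perp \subseteq \mathcal{G}_0^\perp$ and then matching dimensions on both sides. For the inclusion, $\mathcal{G}^\perp \subseteq \mathcal{G}_0^\perp$ is immediate from $\mathcal{G}_0 \subseteq \mathcal{G}$. For $m_\sigma(\mathcal{G}_1) \subseteq \mathcal{G}_0^\perp$, I would take a row $g^a$ with $a \leq k$ and a row $g^b$ with $b > k$, and compute the inner product $m_\sigma(g^a)\cdot g^b = \sum_i \sigma_i g_i^a g_i^b$; this vanishes by the second triorthogonality condition \eqref{triorthogonalityDef2} (which gives $0$ unless $a=b\leq k$). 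Linearity then yields $m_\sigma(g^a) \in \mathcal{G}_0^\perp$ for every $a \leq k$, so $m_\sigma(\mathcal{G}_1) \subseteq \mathcal{G}_0^\perp$.

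To finish via a dimension count, I will first show that the sum $m_\sigma(\mathcal{G}_1)+\mathcal{G}^\perp$ is direct, i.e. $m_\sigma(\mathcal{G}_1) \cap \mathcal{G}^\perp = \{0\}$. Given $v \in m_\sigma(\mathcal{G}_1)\cap \mathcal{G}^\perp$, write $v = \sum_{a=1}^k \mu_a\, m_\sigma(g^a)$ for some $\mu_a \in \mathbb{F}_q$. Pairing with any $g^b$ for $b \leq k$ yields
\begin{equation}
    0 = v\cdot g^b = \sum_{a=1}^k \mu_a \sum_{i=1}^n \sigma_i g_i^a g_i^b = \mu_b \tau_b,
\end{equation}
again by \eqref{triorthogonalityDef2}. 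Since $\tau_b \neq 0$, all $\mu_b = 0$, so $v=0$.

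Now the dimensions assemble cleanly. Since $m_\sigma$ is an invertible linear map (the $\sigma_i$ are nonzero), $\dim m_\sigma(\mathcal{G}_1) = \dim \mathcal{G}_1 = k$ by the first part of Lemma~\ref{matrixProps}. From Equation~\eqref{GperpDim} we have $\dim \mathcal{G}^\perp = n - k - \dim \mathcal{G}_0$. Using the direct-sum fact just established,
\begin{equation}
    \dim\bigl(m_\sigma(\mathcal{G}_1) + \mathcal{G}^\perp\bigr) = k + (n - k - \dim\mathcal{G}_0) = n - \dim\mathcal{G}_0 = \dim \mathcal{G}_0^\perp.
\end{equation}
Combined with the inclusion from the first paragraph, this forces $\mathcal{G}_0^\perp = m_\sigma(\mathcal{G}_1) + \mathcal{G}^\perp$.

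There is no real obstacle here; the only subtlety is remembering that the twist by $m_\sigma$ is exactly what makes the second triorthogonality condition \eqref{triorthogonalityDef2} say the right thing — without the $\sigma_i$ factors one would not obtain $m_\sigma(\mathcal{G}_1) \subseteq \mathcal{G}_0^\perp$, which is why the usual Bravyi--Haah argument needs this mild adaptation. Everything else is linear algebra over $\mathbb{F}_q$.
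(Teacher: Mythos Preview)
Your proof is correct and follows essentially the same approach as the paper: establish the inclusion $m_\sigma(\mathcal{G}_1)+\mathcal{G}^\perp \subseteq \mathcal{G}_0^\perp$ via the second triorthogonality condition, show the intersection $m_\sigma(\mathcal{G}_1)\cap\mathcal{G}^\perp$ is trivial by the same condition, and then match dimensions using Equation~\eqref{GperpDim}. You have simply spelled out in more detail the steps that the paper compresses into a couple of sentences.
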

Note that this is a generalisation of the fourth point of Lemma 1 of \cite{bravyi2012magic}.
\begin{proof}
    Equation \eqref{triorthogonalityDef2} implies that $m_\sigma(\mathcal{G}_1) \subseteq \mathcal{G}_0^\perp$. By definition, $\mathcal{G}^\perp \subseteq \mathcal{G}_0^\perp$, and so $m_\sigma(\mathcal{G}_1) + \mathcal{G}^\perp \subseteq \mathcal{G}_0^\perp$. A further application of Equation \eqref{triorthogonalityDef2} gives us that $m_\sigma(\mathcal{G}_1) \cap \mathcal{G}^\perp = \{0\}$. Therefore, we have
    \begin{alignat}{2}
        \dim(m_\sigma(\mathcal{G}_1)+\mathcal{G}^\perp) &= \dim m_\sigma(\mathcal{G}_1) + \dim \mathcal{G}^\perp \\
        &=\dim\mathcal{G}_1 + \dim\mathcal{G}^\perp\hspace{1cm}&&\text{ (since } m_\sigma : \mathcal{G}_1 \to m_\sigma(\mathcal{G}_1) \text{ is an isomorphism)}\\
        &= k + (n-k-\dim\mathcal{G}_0)\hspace{1cm}&&\text{ (using Equation \eqref{GperpDim})}\\
        &= \dim\mathcal{G}_0^\perp
    \end{alignat}
    which establishes the claim.
\end{proof}
Claim \ref{g0pClaim}, and the fact that $m_\sigma(\mathcal{G}_1)\cap\mathcal{G}^\perp = \{0\}$, allow us to write
\begin{equation}
    \mathcal{G}_0^\perp\setminus\mathcal{G}^\perp = \{x+y:x \in m_\sigma(\mathcal{G}_1)\setminus\{0\}, y \in \mathcal{G}^\perp\}.\label{g0pgp}
\end{equation}
Finally, because $m_\sigma(\mathcal{G}_0)\subseteq \mathcal{G}^\perp$, we have $m_\sigma(\mathcal{G}\setminus\mathcal{G}_0) \subseteq \mathcal{G}_0^\perp\setminus\mathcal{G}^\perp$ from Equations \eqref{mwgg0} and \eqref{g0pgp}, which implies $d_X \geq d_Z$, as claimed. In Section \ref{mainConstruction}, a lower bound will be proved on $d_Z$, and this, therefore, gives us a lower bound on the distance of the quantum code.

Now that we have handled the distance of the code, we would like to move towards demonstrating that the code supports a non-Clifford transversal gate. We will show the following.
\begin{lemma}\label{transversalGateLemma}
    With $q=p^s$, where $p=2$ and $s$ is any positive integer, we consider a triorthogonal matrix $G \in \mathbb{F}_q^{m \times n}$ as in Definition \ref{secondTriorthogDef} and the quantum code $CSS(X,\mathcal{G}_0;Z,\mathcal{G}^\perp)$ defined above. Writing $U=U_1^{(7)}$ as defined in Equation \eqref{initialgeneralGateDef}, $U^{\otimes n}$ and the encoded gate $\overline{U^{\otimes k}}$ act in the same way on code states, i.e. the quantum code supports the gate $U$ transversally.
\end{lemma}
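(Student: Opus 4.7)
The plan is to identify a logical $Z$-basis of the code and verify the transversal action directly on these basis states. From Lemma~\ref{matrixProps}, the rows $(g^a)_{a=1}^k$ of $G_1$ form an $\mathbb{F}_q$-basis of $\mathcal{G}_1$ with $\mathcal{G}_1 \cap \mathcal{G}_0 = \{0\}$, so the standard construction of a logical basis for $CSS(X,\mathcal{G}_0;Z,\mathcal{G}^\perp)$ gives, for each $\mathbf{v} = \sum_{a=1}^k x_a g^a \in \mathcal{G}_1$ (with $(x_1,\ldots,x_k)\in\mathbb{F}_q^k$), the state $\ket{\bar{\mathbf{v}}} = |\mathcal{G}_0|^{-1/2}\sum_{h\in\mathcal{G}_0}\ket{\mathbf{v}+h}$, corresponding to the computational basis state $\ket{x_1,\ldots,x_k}$ on the $k$ logical qudits. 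Since $U=U_1^{(7)}$ is diagonal with eigenvalues $\exp[i\pi\tr(\gamma^7)]$, the identity $U^{\otimes n}\ket{\bar{\mathbf{v}}} = \overline{U^{\otimes k}}\ket{\bar{\mathbf{v}}}$ reduces to showing
\[
\sum_{i=1}^n \tr\!\bigl((\mathbf{v}_i + h_i)^7\bigr) \equiv \sum_{a=1}^k \tr(x_a^7) \pmod 2
\]
for every $\mathbf{v}\in\mathcal{G}_1$ and every $h\in\mathcal{G}_0$ (so that the phase factors out of the superposition and matches the encoded gate).

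Next I would expand the seventh power in characteristic $2$. Since $7 = 111_2$, Lucas' theorem gives $\binom{7}{j}\equiv 1 \pmod 2$ for all $0\leq j\leq 7$, so
\[
(\mathbf{v}_i + h_i)^7 = \mathbf{v}_i^7 + h_i^7 + \sum_{j=1}^{6} \mathbf{v}_i^{7-j} h_i^j.
\]
The goal is then to show that the $\mathbf{v}_i^7$ term contributes the claimed logical phase and that every other monomial vanishes under $\sum_i\tr(\cdot)$. Writing $\mathbf{v}_i = \sum_{a=1}^k x_a g_i^a$ and $h_i = \sum_{b=k+1}^{m} y_b g_i^b$, I would use the binary decomposition $7 = 4+2+1$ and the Frobenius identity $(y+z)^{2^j} = y^{2^j} + z^{2^j}$ to factor each monomial $\mathbf{v}_i^{\alpha} h_i^{\beta}$ (with $\alpha+\beta=7$) uniquely as a product of a fourth power, a square, and a linear factor, then expand multilinearly. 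This expresses $\sum_i \mathbf{v}_i^{\alpha} h_i^{\beta}$ as a linear combination of sums of the form $\sum_i (g_i^a)^4 (g_i^b)^2 g_i^c$, with the index ranges for $a,b,c$ dictated by how the exponents $4,2,1$ are split between $\mathbf{v}_i$ (indices in $\{1,\ldots,k\}$) and $h_i$ (indices in $\{k+1,\ldots,m\}$).

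Then I would apply the first triorthogonality condition \eqref{triorthogonalityDef1}: each such sum is $1$ when $a=b=c\in\{1,\ldots,k\}$ and $0$ otherwise. For every split with $\beta\geq 1$, at least one of the three indices must come from $\mathbf{v}$'s range $\{1,\ldots,k\}$ and at least one from $h$'s range $\{k+1,\ldots,m\}$, so the equality $a=b=c$ is impossible and the contribution is $0$; the $h_i^7$ term likewise has all three indices in $\{k+1,\ldots,m\}$ and vanishes. The only surviving piece is $\sum_i \mathbf{v}_i^7 = \sum_{a,b,c=1}^k x_a^4 x_b^2 x_c \sum_i (g_i^a)^4 (g_i^b)^2 g_i^c = \sum_{a=1}^k x_a^7$, and taking traces modulo $2$ gives the identity required in the first paragraph.

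The main obstacle is really just careful bookkeeping: the key observation—and the reason Definition~\ref{secondTriorthogDef} uses exactly the exponent pattern $(4,2,1)$—is that these are precisely the base-$2$ digits of $7$, so that the characteristic-$2$ expansion of $(\mathbf{v}_i+h_i)^7$ produces only monomials the triorthogonality condition is engineered to kill. Note that the second triorthogonality condition \eqref{triorthogonalityDef2} is not used here; it only entered through Lemma~\ref{matrixProps} and the distance analysis.
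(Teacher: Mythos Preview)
Your proof is correct and follows essentially the same route as the paper: both reduce the transversality claim to showing $\sum_i f_i^7 = \sum_{a=1}^k x_a^7$ in $\mathbb{F}_q$ for any $f \in \mathcal{G}$, and both do so by recognising that every monomial in the characteristic-$2$ expansion of a seventh power has the shape $(g_i^a)^4(g_i^b)^2(g_i^c)$, which is exactly what the triorthogonality condition~\eqref{triorthogonalityDef1} controls. The only organisational difference is that the paper treats $f = \sum_{a=1}^m u_a g^a$ as a single sum over all $m$ rows and invokes the full characteristic-$2$ multinomial formula for $(\sum_a y_a)^7$ (proved separately in an appendix), whereas you split $f = \mathbf{v} + h$ first, apply the binomial expansion (all $\binom{7}{j}$ odd by Lucas), and then use the binary decomposition $7 = 4+2+1$ together with Frobenius to reach the same monomials; your route is slightly more streamlined in that it avoids stating and proving the multinomial lemma.
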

To prove this, it is necessary for us to identify the natural choice of logical Pauli operators on the encoded qudits; this is again a small generalisation of \cite{bravyi2012magic} and \cite{krishna2019towards}.

\begin{claim}
    For $a=1 ..., k$, $\overline{X^\beta_a}$ and $\overline{Z^\gamma_a}$, which are respectively the encoded operators $X^\beta$ and $Z^\gamma$ acting on the $a$-th logical qudit, may be chosen to be
    \begin{align}
        \overline{X^\beta_a} &= X^{\beta g^a}\\
        \overline{Z^\gamma_a} &= Z^{\gamma \hat{g}^a}
    \end{align}
    where $g^a$ is the $a$-th row of the matrix $G$ and $\hat{g}^a$ is related to $g^a$ by
    \begin{equation}
        \hat{g}^a_i = \tau_a^{-1}\sigma_ig_i^a.
    \end{equation}
\end{claim}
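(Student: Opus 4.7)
The plan is to verify three properties that qualify the proposed operators as valid logical Pauli operators: (i) they commute with every stabiliser, and hence preserve the code space; (ii) they are not themselves stabilisers, and hence act non-trivially; and (iii) they satisfy the canonical Pauli commutation relations, so that $(\overline{X_a^\beta}, \overline{Z_b^\gamma})$ for $a,b \in \{1,\ldots,k\}$ behave as independent single-qudit Pauli pairs. The additive structure $\overline{X_a^{\beta_1 + \beta_2}} = \overline{X_a^{\beta_1}}\,\overline{X_a^{\beta_2}}$ (and analogously for $Z$) follows immediately from the definitions, so it suffices to check these three properties.

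For (i), same-type commutation is automatic. The $X$-type operator $X^{\beta g^a}$ commutes with the $Z$-stabiliser $Z^z$ for $z \in \mathcal{G}^\perp$ because $g^a \cdot z = 0$ (since $g^a \in \mathcal{G}$). The $Z$-type operator $Z^{\gamma \hat g^a}$ commutes with the $X$-stabiliser $X^x$ for $x = \sum_{b > k}\mu_b g^b \in \mathcal{G}_0$ provided $\hat g^a \cdot x = 0$; substituting the definition of $\hat g^a$ gives $\tau_a^{-1}\sum_{b > k}\mu_b \sum_i \sigma_i g_i^a g_i^b$, which vanishes term-by-term by Equation \eqref{triorthogonalityDef2} since $a \leq k < b$. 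For (ii), $\beta g^a \notin \mathcal{G}_0$ for $\beta \neq 0$ is immediate from Lemma \ref{matrixProps}, while for $\gamma \hat g^a$ I compute $\hat g^a \cdot g^a = \tau_a^{-1}\sum_i \sigma_i (g_i^a)^2 = \tau_a^{-1}\tau_a = 1$ by the diagonal case of Equation \eqref{triorthogonalityDef2}; since $g^a \in \mathcal{G}$, this shows $\hat g^a \notin \mathcal{G}^\perp$, and hence $\gamma \hat g^a \notin \mathcal{G}^\perp$ for $\gamma \neq 0$.

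For (iii), the commutation factor of $\overline{X_a^\beta}$ and $\overline{Z_b^\gamma}$ is $(-1)^{\tr(\beta\gamma\, g^a \cdot \hat g^b)}$, where
\begin{equation*}
    g^a \cdot \hat g^b \;=\; \tau_b^{-1}\sum_i \sigma_i g_i^a g_i^b \;=\; \delta_{ab}
\end{equation*}
by Equation \eqref{triorthogonalityDef2} for $a,b \in \{1,\ldots,k\}$. Hence $\overline{X_a^\beta}$ and $\overline{Z_b^\gamma}$ satisfy the desired single-qudit Pauli relation $X^\beta Z^\gamma = (-1)^{\tr(\beta\gamma)}Z^\gamma X^\beta$ when $a = b$ and commute when $a \neq b$, exactly as needed for independent logical qudits. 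The main (modest) obstacle is bookkeeping the $\sigma_i$ and $\tau_a$ factors; the definition $\hat g^a_i = \tau_a^{-1}\sigma_i g_i^a$ is engineered precisely so that these factors collapse against Equation \eqref{triorthogonalityDef2} to yield $\delta_{ab}$ on the logical sector.
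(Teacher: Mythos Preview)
Your proof is correct and follows essentially the same approach as the paper: both verify commutation with the stabilisers and the logical commutation relations by reducing to the triorthogonality condition \eqref{triorthogonalityDef2}, with the $\tau_a^{-1}\sigma_i$ factors in $\hat g^a$ engineered to produce $\delta_{ab}$. You are slightly more thorough in explicitly checking point (ii) (that the operators are not themselves stabilisers), which the paper leaves implicit.
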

We emphasise that again we have used the notation $X^v = \bigotimes_{i=1}^nX^{v_i}$, where $v \in \mathbb{F}_{q}^n$, and similarly for $Z$.
\begin{proof}
    $X$-stabilisers are of the form $X^v$ for $v \in \mathcal{G}_0$. Such a $v$ satisfies $\sum_{i=1}^nv_i\hat{g}^a_i = 0$ for all $a=1, ..., k$ by Equation \eqref{triorthogonalityDef2} - therefore $Z^{\gamma \hat{g}^a}$ commutes with the $X$-stabilisers for all $a$ and $\gamma$. Meanwhile, $Z$-stabilisers are of the form $Z^w$ for $w \in \mathcal{G}^\perp$, which makes $X^{\beta g^a}$ commute with all $Z$-stabilisers. Also, for $a,b = 1, ..., k$, we have $\sum_{i=1}^ng^a_i\hat{g}^b_i = \delta_{ab}$, again by Equation \eqref{triorthogonalityDef2}, which gives us
    \begin{equation}
        X^{\beta g^a}Z^{\gamma \hat{g}^b} = (-1)^{\tr(\beta\gamma\sum_{i=1}^n(g^a)_i(\hat{g}^b)_i)}Z^{\gamma \hat{g}^b}X^{\beta g^a} = (-1)^{\tr(\beta\gamma)\delta_{ab}}Z^{\gamma \hat{g}^b}X^{\beta g^a}.
    \end{equation}
    This proves the expected commutation relations for the logical operators:
    \begin{equation}
        \overline{X^\beta_a}\overline{Z^\gamma_b} = (-1)^{\tr(\beta\gamma)\delta_{ab}}\overline{Z^\gamma_b}\overline{X^\beta_a},
    \end{equation}
    thus establishing the claim.
\end{proof}
We are now in a position to be able to prove the transversality of the non-Clifford logical gate $U_1^{(7)}$ presented in the previous section. This is where this section departs most dramatically from \cite{bravyi2012magic} and \cite{krishna2019towards}. Indeed, let us start by considering the encoded all-zeros state
\begin{equation}
    \overline{\ket{0}^{\otimes k}} = \sum_{g \in \mathcal{G}_0}\ket{g}
\end{equation}
which we have left unnormalised. To see that this is the encoded all-zeros state, one may refer to the usual expression for logical computational basis states for a CSS code (see Chapter 8 of \cite{gottesman2016surviving}), or observe that this state is fixed by all stabilisers and all $\overline{Z^\gamma_a}$. Given $u \in \mathbb{F}_{q}^k$, one may determine the encoded computational basis state $\overline{\ket{u}}$ by acting with the appropriate $X$-logical operators, obtaining
\begin{equation}
    \overline{\ket{u}} = \sum_{g \in \mathcal{G}_0}\Ket{\sum_{a=1}^ku_ag^a + g}
\end{equation}
where all arithmetic in the ket takes place over $\mathbb{F}_{q}$. Abbreviating $U_1^{(7)}$ as $U$, what we would now like to do is show that acting on any state $\overline{\ket{u}}$ with $U$ transversally, i.e. acting with $U^{\otimes n}$, has the same effect as acting on the encoded qudits transversally with $U$, i.e. acting with the encoded gate $\overline{U^{\otimes k}}$. To this end, consider any element $f \in \mathcal{G}$. Write it as $f = \sum_{a=1}^mu_ag^a$ for some $u_a \in \mathbb{F}_{q}$, and where all arithmetic here takes place over $\mathbb{F}_{q}$. Let us consider
\begin{equation}
    U^{\otimes n}\ket{f} = \prod_{i=1}^n\exp\left[i\pi\tr(f_i^7)\right]\ket{f}.
\end{equation}
Now we have $f_i = \sum_{a=1}^mu_ag^a_i$, and so $f_i^7 = \left(\sum_{a=1}^mu_ag^a_i\right)^7$, which we may expand as
\begin{multline}
    f_i^7 = \sum_{a=1}^mu_a^7(g_i^a)^7 + \sum_{a \neq b} \left[u_a^6u_b(g_i^a)^6(g_i^b)+u_a^5u_b^2(g_i^a)^5(g_i^b)^2 + u_a^4u_b^3(g_i^a)^4(g_i^b)^3\right] + \\\sum_{\substack{a,b,c\\\text{pairwise distinct}}} u_a^4u_b^2u_c(g_i^a)^4(g_i^b)^2(g_i^c).\label{7thmultinomial}
\end{multline}
Note that usually one would expect more terms than this in Equation \eqref{7thmultinomial}. However, arithmetic here takes place over $\mathbb{F}_{q}$, so in particular addition is taken modulo 2, and one can see that the formula written is true by evaluating all possible multinomial coefficients and noting that the ones that are odd correspond exactly to the terms written. For completeness, we also include a derivation of this formula in Appendix \ref{multinomialDerivation}.

Using the triorthogonality property of our matrix given in Equation \eqref{triorthogonalityDef1}, it is then true that
\begin{equation}
    \sum_{i=1}^nf_i^7 = \sum_{a=1}^ku_a^7.
\end{equation}
Indeed, of the 5 terms on the right-hand side of Equation \eqref{7thmultinomial}, the second, third and fourth vanish by setting $a=b$, $a=c$ and $b=c$ respectively in Equation \eqref{triorthogonalityDef1}. The final term naturally vanishes with $a,b,c$ pairwise distinct in Equation \eqref{triorthogonalityDef1}. The first term is treated by setting $a=b=c$ in Equation \eqref{triorthogonalityDef1}.

With this, one finds that
\begin{align}
    U^{\otimes n}\ket{f} &= \exp\left[i\pi \sum_{i=1}^n\tr(f_i^7)\right]\ket{f}\label{modulo2SumTransversality}\\
    &=\exp\left[i\pi\tr\left(\sum_{i=1}^nf_i^7\right)\right]\ket{f}\\
    &=\exp\left[i\pi\tr\left(\sum_{a=1}^ku_a^7\right)\right]\ket{f}
\end{align}
where we emphasise that the sum can be taken inside the trace because the sum in Equation \eqref{modulo2SumTransversality} may be evaluated modulo 2, due to the prefactor $i\pi$ and the presence of $\exp$. One then calculates the desired expression as follows.
\begin{align}
    U^{\otimes n}\overline{\ket{u}} &= \sum_{g \in \mathcal{G}_0}U^{\otimes n}\Ket{\sum_{a=1}^ku_ag^a+g}\\
    &=\sum_{g \in \mathcal{G}_0}\exp\left[i\pi\tr\left(\sum_{a=1}^ku_a^7\right)\right]\Ket{\sum_{a=1}^ku_ag^a+g}\\
    &=\exp\left[i\pi\tr\left(\sum_{a=1}^ku_a^7\right)\right]\overline{\ket{u}}\\
    &=\overline{U^{\otimes k}}\;\overline{\ket{u}}.
\end{align}
Thus, as desired, it is shown that on code states, $U^{\otimes n}$ and the encoded gate $\overline{U^{\otimes k}}$ act in the same way, which establishes Lemma \ref{transversalGateLemma}.

\section{Magic State Distillation Protocols}\label{MSDProtocolMainSection}

In this section, we will present our magic state distillation protocols. The core of the construction is a subroutine for distilling a one-qudit magic state, for qudits of dimension $q=2^{10}$, but we show how this naturally corresponds to the distillation of a $10$-qubit magic state. In turn, this will enable distillation of the magic state for a $CCZ$ gate on qubits. The magic state distillation protocols are built from the quantum code for $q$-dimensional qudits constructed in Section \ref{mainCodeSection}, which in turn are built out of matrices constructed in Section \ref{mainConstruction}. The analysis of the protocols will take place in Section \ref{errorAnalysis}, where, in particular, we will show how constant-overhead magic state distillation is achieved.

We will start in Section \ref{MSDSummarySection} with a summary of our magic state distillation protocols. We then elaborate briefly on the importance of the twirling stage of the protocol in Section \ref{twirling}, before moving on to Section \ref{MSDfromCode}. In Section \ref{MSDfromCode}, we provide more detail on the bulk of our protocol, which is the distillation of the one-qudit magic state, and we go into detail on the translation to the distillation of a $10$-qubit magic state. Finally, in Section \ref{CCZConversion}, we go into detail on the conversion of the distillation of the $10$-qubit magic state to a more standard qubit magic state (the $\ket{CCZ}$ state which enables the execution of the qubit CCZ gate).

\subsection{Summary of Magic State Distillation Protocols}\label{MSDSummarySection}

To describe the protocol, we introduce a little notation based on our triorthogonal matrix. Let us write, as usual, $q=2^s$ where in general we think of $s=10$ (although some of our results hold for various other $s$). Recalling that $G_0$ denotes the submatrix formed by the latter $m-k$ rows of our triorthogonal matrix, and $\mathcal{G}_0$ denotes the linear subspace of $\mathbb{F}_q^n$ generated (over $\mathbb{F}_q$) by the rows of $G_0$, denote $d_{\mathrm{cl}}$ for the distance of the classical code $\mathcal{G}_0^\perp \subseteq \mathbb{F}_{q}^n$\footnote{It will go on to be shown in Section \ref{puncturingSection} that $d_{\mathrm{cl}}$ is linear, $d_{\mathrm{cl}} = \Theta(n)$. We show this because we want to show that the distance of the quantum code, $d_Z = \min\{|f|: f \in \mathcal{G}_0^\perp \setminus \mathcal{G}^\perp\}$ is $\Theta(n)$, but we in fact achieve this by using $d_Z \geq d_{\mathrm{cl}}$ and showing that $d_{\mathrm{cl}} = \Theta(n)$.},  which we note is the code with parity-check matrix $G_0$.

Further, consider a classical decoder for the code $\mathcal{G}_0^\perp$ with decoding radius $t = \Theta(d_{\mathrm{cl}})$. Then $t= \Theta(n)$ and $2t+1 \leq d_{\mathrm{cl}}$. Explicitly, suppose that some codeword $v$ of $\mathcal{G}_0^\perp$, is affected by an error $e \in \mathbb{F}_q^n$, where $|e| \leq t$, leaving $v+e$. The decoder is given the syndrome for the error, which is $G_0e$. The decoder will deterministically return the error $e$. If the decoder is given the syndrome of some error $e$ with $|e| > t$, it will return some vector $\tilde{e}$ that is not necessarily equal to $e$. Note that, under the assumption of instantaneous classical computation, an inefficient, exhaustive decoder achieving $2t+1 = d_{\mathrm{cl}}$ would suffice. Nevertheless, we do not require such an assumption since the code $\mathcal{G}_0^\perp$ may be expressed as the dual of an algebraic geometry code, for which efficient decoders have been extensively studied\footnote{The efficient classical decoding of $\mathcal{G}_0^\perp$ will be discussed further in Section \ref{concrete}.}.

Now, the main aim of our magic state distillation protocols is to distil the three-qubit state
\begin{equation}
    \ket{CCZ} = CCZ\left(\ket{+}\right)^{\otimes 3}.
\end{equation}
Via a standard gate teleportation procedure (see Section \ref{CCZConversion}), one copy of such a state may be consumed, using Clifford operations only, to execute a CCZ gate on an unknown three-qubit state, thus enabling universal quantum computation when combined with Clifford operations (note that CCZ is a non-Clifford gate, and quite a standard one at that). It is, therefore, desirable to be able to distil $\ket{CCZ}$ states, i.e. to generate a smaller number of lower error rate copies of them from a larger number of higher error rate copies.
As in the conventional protocols~\cite{bravyi2005universal,bravyi2012magic}, we apply a dephasing transformation to $\ket{CCZ}$ at the beginning of magic state distillation, which is also known as twirling.
The twirling for $\ket{CCZ}$ is performed by applying uniformly random combinations of all the stabilisers of $\ket{CCZ}$, which are exactly the elements of the group generated by $X_1\otimes CZ_{2,3}$, $X_2\otimes CZ_{1,3}$, and $X_3\otimes CZ_{1,2}$, where $X_i$ is an $X$ gate acting on the $i$th qubit, and $CZ_{i,j}$ is a $CZ$ gate acting on the pair $(i,j)$ of $i$th and $j$th qubits ($i,j\in\{1,2,3\}$).
The twirling map may be compactly written as
\begin{align}
\label{eq:twirling}
    \rho\to\frac{1}{8}\sum_{\mathbf{d} \in \mathbb{F}_2^3}S^{(\mathbf{d})}\rho S^{(\mathbf{d})}
\end{align}
where
\begin{equation}
    S^{(\mathbf{d})} \coloneq CCZ \; X^{\mathbf{d}} \; CCZ
\end{equation}
are the stabilisers of $\ket{CCZ}$. Because the $CCZ$ gate is in the third level of the Clifford hierarchy, the twirling map is implementable by stabiliser operations.
As we will see in Section~\ref{twirling}, the $CCZ$ states after twirling only have $Z$ errors.

The bulk of our protocol will not, however, be distilling the $\ket{CCZ}$ states; it will be distilling an $s$-qubit magic state, as described in Section~\ref{MSDfromCode} in detail.
We let $\ket{U_s}$ denote this $s$-qubit magic state to be distilled.
This may equally well be thought of as a one-qudit magic state, for a qudit of dimension $q$. To differentiate these, we denote the one-qudit state $\ket{M}$, but we note that $\ket{U_s}$ and $\ket{M}$ have the same coefficients in their computational bases --- the only difference is in different labelling of the computational basis states. For the qudit state $\ket{M}$, computational basis states are labelled by the elements of the field $\mathbb{F}_{q}$, and for the $s$-qubit state $\ket{U_{s}}$, computational basis states are labelled by bit strings of length $s$, i.e. elements of $\mathbb{F}_2^s$. The translation between elements of $\mathbb{F}_{q}$ and $\mathbb{F}_2^s$ goes via decomposition in a self-dual basis, as discussed in Section \ref{primePowerQuditsPrelims}. Where $U = U_1^{(7)}$ is the non-Clifford, diagonal gate in the third level of the Clifford hierarchy identified in Section \ref{nonCliffordGate},
\begin{equation}
    U = \sum_{\gamma \in \mathbb{F}_{q}}\exp\left[i\pi\tr(\gamma^7)\right]\ket{\gamma}\bra{\gamma},
\end{equation}
the state $\ket{M}$ is then
\begin{equation}
    \ket{M} = U\ket{+_q},
\end{equation}
where $\ket{+_q}$ is the one-qudit state satisfying $X^\gamma \ket{+_q} = \ket{+_q}$ for all $\gamma \in \mathbb{F}_{q}$:
\begin{equation}
    \ket{+_q} = \frac{1}{\sqrt{q}}\sum_{\gamma \in \mathbb{F}_{q}}\ket{\gamma}.
\end{equation}
This one-qudit gate $U$ is equivalent to the $s$-qubit gate $\theta(U)$, where $\theta$ is the map introduced in Section \ref{primePowerQuditsPrelims}, which does nothing other than to relabel computational basis states to change a one-qudit gate into an $s$-qubit gate. Imagined as matrices in terms of their computational bases, $U$ and $\theta(U)$ are exactly the same, and in fact we have
\begin{equation}
    \ket{U_s} = \theta(U)\left(\ket{+}^{\otimes s}\right).
\end{equation} 
We will go on to describe in Section \ref{CCZConversion} how the $s$-qubit gate $\theta(U)$ may be decomposed into a sequence of $Z$, $CZ$ and $CCZ$ gates, where the number of $CCZ$ gates used is some constant $C$. By using the gate teleportation procedure for the $CCZ$ gate described therein, $C$ copies of the state $\ket{CCZ}$ may therefore be used to generate one copy of the state $\ket{U_s}$ using only Clifford operations. It will also be described in Section \ref{CCZConversion} how the reverse conversion may be performed: how one (noisy) $\ket{U_s}$ state may be used to make one (noisy) $\ket{CCZ}$ state, with only Clifford operations.

Now, our protocol for distillation of $\ket{CCZ}$ states with constant overhead is summarised as follows.

\begin{enumerate}
    \item Perform twirling on each of the $Cn$ input $\ket{CCZ}$ states.
    \item Use the $Cn$ twirled $\ket{CCZ}$ states of higher error rate to make a supply of $n$ copies of the $s$-qubit $\ket{U_s}$ states (of higher error rate), which may be interpreted as single-qudit $\ket{M}$ states.
    \item Perform one round of the magic state distillation subroutine on the $\ket{M}$ states, as follows.
    \begin{enumerate}
        \item Prepare $k$ qudits in the state $\ket{+_q}^{\otimes k}$ and encode them into our quantum code.
        \item Using gate teleportation, apply (a noisy version of) $U^{\otimes n}$ to the $n$ physical qudits of the code by using Clifford operations, and by consuming the $n$ $\ket{M}$ states.
        \item Measure the $X$-stabilisers of the code, obtaining a syndrome $S \in \mathbb{F}_{q}^{m-k}$. Pass the syndrome to the classical decoder.
        \item If the decoder returns the error vector $v_S \in \mathbb{F}_{q}^n$ (for error correction, without post-selection), apply $Z^{v_S}$ to the state.
        \item Decode the state from the quantum code. The remaining $k$ qudits are treated as $k$ copies of the $\ket{M}$ state, of higher quality.
    \end{enumerate}
    \item The outputted $k$ $\ket{M}$ states, each of which is a state of one qudit, may equally well be thought of as $k$ copies of the $s$-qubit state $\ket{U_s}$.
    \item Convert the $k$ $\ket{U_s}$ states into $k$ $\ket{CCZ}$ states.
\end{enumerate}

Readers familiar with \cite{bravyi2012magic} and \cite{krishna2019towards} will note the similarity of step 3 to those procedures. Those works construct $[[n,k,d]]$ quantum codes (on qubits in the first case and on prime-dimensional qudits in the second case), which support transversal gates that are diagonal, non-Clifford and in the third level of the Clifford hierarchy. Calling this non-Clifford gate $U$ for convenience, they each make use of a distillation subroutine, which goes as follows, on either qubits or prime-dimensional qudits.
\begin{enumerate}
    \item Prepare $k$ qubits/qudits in the state $\ket{+}^{\otimes k}$, where $\ket{+}$ is an equal superposition of all computational basis states. Encode these qubits/qudits into the quantum code.
    \item Using a supply of $n$ copies of the state $\ket{M} = U\ket{+}$ (of higher error rate), and gate teleportation, apply (a noisy) $U^{\otimes n}$ to the state of the quantum code.
    \item Measure the $X$-stabilisers of the code.
    \item If any non-trivial syndrome is obtained, declare a failure of the protocol. If a trivial syndrome is obtained, decode the state from the quantum code. The result is treated as $k$ copies of the state $\ket{M}$ of a lower error rate than the initial supply.
\end{enumerate}
Then, \cite{bravyi2012magic} and \cite{krishna2019towards} concatenate this distillation subroutine to make a larger distillation routine that may arbitrarily suppress the error rate. In particular, the distillation routine will have multiple levels where the output of one level is used as the input to the next level. Where $[[n,k,d]]$ are the parameters of the quantum code at hand, both papers show that, in order to generate a supply of magic states with an error rate below $\epsilon$, the number of input magic states (of some (small) constant error rate) required per output state is
\begin{equation}\label{canonicalOverhead}
    \mathcal{O}(\log^\gamma(1/\epsilon)),
\end{equation}
which is called the overhead of the magic state distillation protocol, where the exponent $\gamma$ is\footnote{Conventionally, magic state distillation is performed by concatenating protocols using an $[[n,k,d]]$ code $L$ times. The error rate becomes $\epsilon=(p_\mathrm{ph}/p_\mathrm{th})^{\mathcal{O}(d^L)}$, i.e., $L\log d=\log(\mathcal{O}(\log(1/\epsilon)))$, where $p_\mathrm{th}$ is a threshold, and $p_\mathrm{ph}<p_\mathrm{th}$ is a physical error rate. The overhead is given by $(n/k)^L=\mathcal{O}(\log^\gamma(1/\epsilon))$ with $\gamma = \log(n/k)/\log d$. Note that this overhead scaling is not applicable to our protocol since we do not perform the concatenation but develop a way to suppress the error rate by a single-round protocol using asymptotically good triorthogonal codes.}
\begin{equation}
    \gamma = \frac{\log(n/k)}{\log d}.
\end{equation}
Now, the comparison of this conventional protocol and our protocol for constant-overhead magic state distillation may be made. The most obvious thing to say is that the bulk of our protocol, step 3, is preceded and succeeded by conversions from $\ket{CCZ}$ to $\ket{U_s}$ states, and then vice versa. The next thing to note is that our step 3 strongly resembles the magic state distillation subroutine of \cite{bravyi2012magic} and \cite{krishna2019towards}; however, whereas their magic state distillation subroutine declares a failure upon the measurement of any non-trivial syndrome, we make an attempt to correct the error in this case, without post-selection. The last difference to note is that we use only one round of magic state distillation of the $\ket{U_s}$ states, whereas, in the more conventional approach, the subroutine is concatenated several times; importantly, the one-round protocol can arbitrarily suppress error rate in our case because of the linear distance $d=\Theta(n)$ of our $[[n,k,d]]$ code, which was not achieved in the existing protocols.

Note that one could also perform the concatenated magic state distillation subroutines using our code, by simply performing the initial $\ket{CCZ} \mapsto \ket{U_s}$ conversion, concatenating several rounds of the magic state distillation subroutine for $\ket{U_s}$ states, and then converting $\ket{U_s} \mapsto \ket{CCZ}$ back. Since the first and final conversions only come with a constant-factor loss only, the overhead of $\ket{CCZ}$ distillation would be achieved again with the overhead of Equation \eqref{canonicalOverhead}.
Since our quantum code has asymptotically good parameters $(k,d = \Theta(n))$,
one would then be able to achieve arbitrarily small yet nonzero $\gamma \to 0$ in Equation \eqref{canonicalOverhead}, ``arbitrarily-low-polylogarithmic overhead magic state distillation''.
However, we use the asymptotically good parameters of our quantum code to show an even stronger result, i.e., constant overhead $\gamma = 0$, as we will argue in Section \ref{errorAnalysis}.

\subsection{Twirling the Magic States}\label{twirling}

Before going into more detail on the bulk of our protocol, we make a brief comment on the twirling~\eqref{eq:twirling} of the input $\ket{CCZ}$ magic states. This is performed before the conversion to $\ket{M}$ states, in step 1 of the protocol described in Section~\ref{MSDSummarySection}.
We start by explaining the twirling of a single $CCZ$ state and then proceed by explaining that of multiple $CCZ$ states, which may suffer from correlated error under the local stochastic error model in Section~\ref{sec:task}.

In the case of a single $CCZ$ state, a general noisy input magic state may be written as the three qubit state $\rho_{CCZ}$, where
\begin{equation}
    1-\bra{CCZ}\rho_{CCZ}\ket{CCZ} \leq p_\mathrm{ph},
\end{equation}
and $p_\mathrm{ph}$ is the input error rate.
For $\mathbf{b}=(b_1,b_2,b_3) \in \mathbb{F}_2^3$, let us define
\begin{equation}
    \ket{M_{\mathbf{b}}^{(3)}} \coloneqq \left(Z^{b_1}\otimes Z^{b_2}\otimes Z^{b_3}\right)\ket{CCZ}.
\end{equation}
It holds that $\braket{M_{\mathbf{b}}^{(3)}|M_{\mathbf{c}}^{(3)}} = \delta_{\mathbf{b}\mathbf{c}}$, so that these states form a basis for the space of three qubits.
For the stabiliser of $\ket{CCZ}$ generated by $X_1\otimes CZ_{2,3}$, $X_2\otimes CZ_{1,3}$, and $X_3\otimes CZ_{1,2}$ used in the twirling~\eqref{eq:twirling}, $\ket{M_{\mathbf{b}}^{(3)}}$ is one of the $\pm 1$ eigenstates.
Then, due to a straightforward generalisation of the corresponding argument for the $T$ gate~\cite{bravyi2005universal,bravyi2012magic}, the twirling transforms $\rho_{CCZ}$ into a dephased state
\begin{gather}\label{diagonalrhoCCZ}
    \left(1-\sum_{\mathbf{b} \in \mathbb{F}_2^3, \mathbf{b} \neq 0}p_{\mathbf{b}}\right)\ket{CCZ}\bra{CCZ} + \sum_{\mathbf{b}\in \mathbb{F}_2^3, \mathbf{b} \neq \mathbf{0}}p_{\mathbf{b}}\ket{M_\mathbf{b}^{(3)}}\bra{M_\mathbf{b}^{(3)}},,
\end{gather}
for some $p_{\mathbf{b}}\geq 0$ with $\sum_{\mathbf{b} \in \mathbb{F}_2^3, \mathbf{b} = 0}p_{\mathbf{b}} \leq p_\mathrm{ph}$.
That is, the state~\eqref{diagonalrhoCCZ} after the twirling becomes diagonal in the $\{\ket{M_\mathbf{b}^{(3)}}\}$ basis, only having $Z$-type three-qubit errors $Z^{b_1}\otimes Z^{b_2}\otimes Z^{b_3}$ on $\ket{CCZ}$.

In the case of multiple $CCZ$ states, the same argument implies that the states after the twirling only have $Z$-type errors.
The difference is that we may have correlated errors in the local stochastic error model, which is more general than the IID error model and used for the conventional analysis of fault-tolerant quantum computation~\cite{gottesman2010,yamasaki2024time}.
Given $\zeta$ initial $CCZ$ states prepared by the $CCZ$-state preparation operations undergoing the local stochastic error model at the physical error rate $p_\mathrm{ph}$ in~\eqref{eq:physical_error_rate}, the state after the twirling can then be represented as
\begin{align}
    \sum_{F}p(F)E_Z^{(F)} \left(\ket{CCZ}\bra{CCZ}^{\otimes \zeta}\right)E_Z^{(F)\dag},
\end{align}
where $F$ is the randomly chosen set of faulty locations among the $\zeta$ $CCZ$-state preparation operations, $E_Z^{(F)}$ is a $Z$-type error with its support included in the $3|F|$ qubits of $F$, and $p(F)$ is the probability distribution used in~\eqref{eq:physical_error_rate}.
In particular, by definition of the local stochastic error model, for any set $S$ of states among the $\zeta$ three qubit states after the twirling,~\eqref{eq:physical_error_rate} shows that the probability of having some $Z$-type errors at every state in $S$ is at most
\begin{align}
\Pr\{F\supset S\}\leq p_\mathrm{ph}^{|S|}.
\end{align}

It is for the reason of twirling that we may assume stochastic $Z$ errors only on our input magic states $\ket{CCZ}$, and subsequently on our $s$-qubit magic states $\ket{U_s}$. Correspondingly, the same statement may be made about our input single qudit magic states $\ket{M}$, and this is why only the $X$-stabilisers need to be measured in our main magic state distillation protocol. Accordingly, in our magic state distillation protocol, we only need a decoder for the code $\mathcal{G}_0^\perp$, to correct $Z$-type errors.

\subsection{Magic State Distillation from the Quantum Code}\label{MSDfromCode}

In this section, we go into more detail about the bulk of our distillation protocol, which is the distillation of the $s$-qubit magic state $\ket{U_s}$, which can equivalently be thought of as the distillation of the one-qudit state $\ket{M}$. This occurred in steps 3a to 3e of our constant-overhead magic state distillation protocol. In this section, as usual, we write $q=2^s$ and think of $s=10$, although other values work in certain places.

By using Clifford operations, exactly one copy of $\ket{M}$ may be consumed to execute $U_1^{(7)}$ on an unknown single qudit state $\ket{\psi}$ via a standard gate teleportation procedure. The gate teleportation protocol goes as follows.

\begin{enumerate}
    \item Begin with the ancillary state $\ket{M}$ in the first (single qudit) register and the target state $\ket{\psi}$ to which we wish to apply $U_1^{(7)}$ in the second (single qudit) register.
    \item Apply
    \begin{equation}
        SUM = \sum_{\gamma \in \mathbb{F}_{q}}\ket{\gamma}\bra{\gamma}\otimes X^\gamma.
    \end{equation}
    \item Measure the second register in the computational basis $\{\ket{\eta}\}_{\eta \in \mathbb{F}_{q}}$. Retain the outcome $\beta$.
    \item Apply $U_1^{(7)}X^\beta U_1^{(7)}$ to the first register.
    \item The state $U_1^{(7)}\ket{\psi}$ is found in the first register.
\end{enumerate}
We emphasise that all the operations performed here are Clifford operations. Indeed, one can conjugate $X^\gamma$ and $Z^\gamma$ by $SUM$ to verify that $SUM$ is Clifford. Moreover, because we showed that $U_1^{(7)}$ is in the third level of the Clifford hierarchy in Section \ref{nonCliffordGate}, the gate $U_1^{(7)}X^\beta (U_1^{(7)})^\dagger = U_1^{(7)} X^\beta U_1^{(7)}$ is Clifford. We note that if one applies the above gate teleportation procedure by using a noisy magic state $\ket{M}$, rather than the ideal state, a noisy version of $U_1^{(7)}$ will be implemented, rather than the ideal $U_1^{(7)}$. The relevant error analysis will be performed in Section \ref{errorAnalysis}.

Let us note that this gate teleportation procedure for the $U_1^{(7)}$ is used in the bulk of our constant-overhead magic state distillation protocol, in particular in Step 3b in Section \ref{MSDSummarySection}.

Now, although it is a slightly tedious rewording, for the sake of clarity we would like to elucidate the translation between the distillation protocol, expressed in terms of $q$-dimensional qudits, and that on qubits. The corresponding qubit protocol distils an $s$-qubit magic state corresponding to $\ket{M}$, which we have denoted $\ket{U_s}$.

We recall the correspondence between a $q$-dimensional qudit, and $s$ qubits presented in Section \ref{primePowerQuditsPrelims}. Indeed, we fixed a self-dual basis for $\mathbb{F}_{q}$ over $\mathbb{F}_2$, call it $(\alpha_i)_{i=0}^{s-1}$, and constructed a linear map $\psi : \mathbb{C}^{q} \to (\mathbb{C}^2)^{\otimes s}$. $\psi$ mapped the computational basis states of $\mathbb{C}^{q}$, denoted $\ket{\beta}$ ($\beta \in \mathbb{F}_{2^s}$), to $\ket{\mathbf{b}}$, where $\mathbf{b} \in \mathbb{F}_{2}^s$ is the bit string of coefficients of $\beta$ in the basis $(\alpha_i)$. Similarly, we constructed a map $\theta$ which maps unitaries acting on $\mathbb{C}^{q}$ to unitaries acting on $(\mathbb{C}^2)^{\otimes s}$. Given a unitary $U$ acting on $\mathbb{C}^{q}$ and a matrix for $U$ corresponding to the computational basis of $\mathbb{C}^{q}$, $\theta(U)$ had an identical matrix corresponding to the computational basis of $(\mathbb{C}^2)^{\otimes s}$, where the two computational bases are identified via $\psi$. We further showed that $\theta$ specialises to an isomorphism between the Pauli groups, the Clifford groups, and the diagonal gates in the third level of the Clifford hierarchy.

When phrased in terms of qubits, our magic state distillation protocol distils the $s$-qubit state $\psi(\ket{M})$. This is
\begin{equation}
    \ket{U_s} \coloneq \psi(\ket{M}) = \theta(U)\left(\ket{+}^{\otimes s}\right)
\end{equation}
where again we abbreviate $U = U_1^{(7)}$ and $\ket{+}$ is the usual single-qubit +1-eigenstate of the $X$-operator. It is important to note that because $\theta$ specialises to an isomorphism between the Clifford groups, and between diagonal gates in the third level of the Clifford hierarchy, the fact that $U$ is in the third level, but not the second level of the Clifford hierarchy, implies the same statement about $\theta(U)$: that it is not Clifford, but in the third level of the Clifford hierarchy.

The quantum code may be phrased as a stabiliser code on qubits, equivalently by either applying $\theta^{\otimes n}$ to every stabiliser of the qudit code, thus defining a group of a qubit stabiliser, or by applying the map $\psi^{\otimes n}$ to every code state, thus defining a qubit code space. 

The logical operators $\overline{X^\beta_a}$ and $\overline{Z^\gamma_a}$ may be carried over to the qubit code via $\theta^{\otimes n}$ to give the expected $ks$ independent logical operators (of both $X$ and $Z$ type) acting on $ns$ qubits, and these give the expected commutation relations between each other, and with the stabilisers. It is most natural to think of the $ks$ logical qubits of the code in $k$ sets of $s$ qubits, where we label each set of $s$ qubits via the indices $a,b \in \{1, ..., k\}$. The encoded operators $X^{\mathbf{b}}$ and $Z^{\mathbf{c}}$ ($\mathbf{b}, \mathbf{c} \in \mathbb{F}_{2}^s$) acting on the $a$-th set of logical qubits are denoted
\begin{equation}
    \overline{X^\mathbf{b}_a} \text{ and }\overline{Z^{\mathbf{c}}_a}.
\end{equation}
Indeed, it is most natural to let $\overline{X^\mathbf{b}_a} \text{ and }\overline{Z^{\mathbf{c}}_b}$ be respectively $\overline{X^\beta_a}$ and $\overline{Z^\gamma_b}$ carried over to the qubit code via $\theta^{\otimes n}$, where $\mathbf{b}$ and $\mathbf{c}$ are the bit strings of coefficients of $\beta$ and $\gamma$ respectively in the fixed self-dual basis. This then gives us the required commutation relations for $\overline{X^{\mathbf{b}}_a}$ and $\overline{Z^{\mathbf{c}}_b}$, because $(-1)^{\tr(\beta\gamma)\delta_{ab}} = (-1)^{\mathbf{b}\cdot\mathbf{c}\delta_{ab}}$.

At the same time, the encoded computational basis states $\overline{\ket{u}}$ may be carried over to the qubit code via $\psi^{\otimes n}$. In particular, considering an encoded computational basis state $\overline{\ket{u}}$ of the qudit code, where $u \in \mathbb{F}_{q}^k$, we may write the elements $(u_a)_{a=1}^k$ in terms of the self-dual basis with sets of coefficients $(\mathbf{b}_a)_{a=1}^k$, where $\mathbf{b}_a \in \mathbb{F}_2^s$, and then we identify the encoded computational basis states of the qubit code as
\begin{equation}
    \overline{\ket{\mathbf{b}_1...\mathbf{b}_k}} = \psi^{\otimes n}\left(\overline{\ket{u}}\right)
\end{equation}
and these behave in the required way under our choice of logical operators.

Ultimately, we have seen that from our code on $q$-dimensional qudits, which has $n$ physical qudits and $k$ logical qudits, we can define a qubit code on $ns$ physical qudits, encoding $ks$ logical qudits, which is ``the same'' in all respects. All code states can be taken to be the same states, with different labels for their bases, and all stabilisers and logical operators are the same matrices (in the respective computational bases).

The qudit code had the property that there was a one qudit non-Clifford (third-level) diagonal gate $U$ such that acting with $U^{\otimes n}$ on a code state had the same effect as acting with the encoded gate $\overline{U^{\otimes k}}$. One may consider the same gate acting on $s$ qubits, $\theta(U)$, which we know to be diagonal and in exactly the third level of the Clifford hierarchy. Treating the $ns$ physical qubits of the code in $n$ sets of $s$ qubits, if we act with $\theta(U)^{\otimes n}$ on a code state, where each copy of $\theta(U)$ acts on a separate set of $s$ qubits, then it is quite easy to see that this will have the same effect as acting with the encoded gate $\overline{\theta(U)^{\otimes k}}$. Indeed, for $u \in \mathbb{F}_{q}^k$, we had
\begin{equation}
    U^{\otimes n}\overline{\ket{u}} = \overline{U^{\otimes k}}\; \overline{\ket{u}},
\end{equation}
meaning that when $U^{\otimes n}$ acts on $\overline{\ket{u}}$, it adds the same phase as if $U^{\otimes k}$ had acted on $\ket{u}$. If we then consider the qubit encoded computational basis state $\overline{\ket{\mathbf{b}_1...\mathbf{b}_k}}$ corresponding to $\overline{\ket{u}}$, where $\mathbf{b}_a \in \mathbb{F}_2^s$, 
\begin{equation}
    \theta(U)^{\otimes n}\overline{\ket{\mathbf{b}_1...\mathbf{b}_k}} = \overline{\theta(U)^{\otimes k}}\;\overline{\ket{\mathbf{b}_1...\mathbf{b}_k}}
\end{equation}
because $\theta(U)$ has the same coefficients as $U$ in its computational basis, and $\overline{\ket{\mathbf{b}_1...\mathbf{b}_k}}$ has the same coefficients as $\overline{\ket{u}}$ in its computational basis.

For convenience, we will go on to rephrase the bulk of our magic state distillation protocol, steps 3a to 3e, in terms of qubits. 

First, the gate teleportation protocol for $\theta(U)$ goes as follows.
\begin{enumerate}
    \item Begin with the $s$-qubit state $\ket{U_s}$ in the first ``$s$-qubit register'' and the target $s$-qubit state $\ket{\psi}$ in the second ``$s$-qubit register''.
    \item Apply $\theta(SUM)$ from the first register to the second.
    \item Measure the second register in the computational basis $\{\ket{\mathbf{e}}\}_{\mathbf{e} \in \mathbb{F}_{2}^s}$. Retain the outcome $\mathbf{b}$.
    \item Apply $\theta(U)X^{\mathbf{b}}\theta(U)$ to the first register.
    \item The state $\theta(U)\ket{\psi}$ is found in the first register.
\end{enumerate}
Note that all these operations are Clifford operations; in particular $\theta(SUM)$ and $\theta(U)X^{\mathbf{b}}\theta(U)$ are Clifford operations because $\theta$ maps Cliffords to Cliffords and third-level gates to third-level gates. Again, applying the gate teleportation protocol with noisy $\ket{U_s}$, rather than the ideal state, will implement a noisy version of $\theta(U)$.

With this, the bulk of our constant-overhead magic state distillation protocol, steps 3a to 3e, may be rephrased on qubits as:
\begin{enumerate}\setcounter{enumi}{2}
\item \begin{enumerate}
    \item Prepare $k$ sets of $s$ qubits, where each set is in the state $\ket{+}^{\otimes s}$, and $\ket{+}$ is the usual single-qubit $+1$-eigenstate of the $X$-operator. Encode these qubits into the quantum code.
    \item Use $n$ noisy input magic states $\rho$, which are each noisy versions of the $s$-qubit gate $\ket{U_s}$, and the qubit gate teleportation protocol to apply a noisy $\theta(U)$ gate to each of the $n$ sets of $s$ physical qudits of the code.
    \item Measure the $X$-stabilisers of the code, obtaining a syndrome $S_{bin} \in \left(\mathbb{F}_2^s\right)^{m-k}$. This may be used to deduce a syndrome for the equivalent qudit code $S \in \mathbb{F}_{q}^{m-k}$. Pass the syndrome to the classical decoder.
    \item If the decoder returns an error vector $v_S \in \mathbb{F}_{q}^n$, apply the Pauli $\theta^{\otimes n}(Z^{v_S})$ to the state.
    \item Decode the state from the quantum code. The remaining $ks$ qubits are treated as $k$ copies of the $\ket{U_s}$ state, of higher quality.
\end{enumerate}
\end{enumerate}

\subsection{CCZ Conversion}\label{CCZConversion}

In this section, we demonstrate how the described distillation of the $s$-qubit state $\ket{U_s}$ can be converted, up to some constant loss, into the distillation of $\ket{CCZ}$ states. Again, we write $q=2^s$ and think of $s=10$ for our main construction.

The $\ket{CCZ}$ state enables the execution of the qubit $CCZ$ gate via Clifford operations only according to the following gate teleportation protocol.
\begin{enumerate}
    \item Begin with $\ket{CCZ}$ in the first ``3-qubit register'' and the target 3-qubit state $\ket{\psi}$ in the second ``3-qubit register''.
    \item Apply $CNOT^{\otimes 3}$ with the three qubits of the first register as the controls and the three qubits of the second register as the targets.
    \item Measure the second register in the computational basis $\{\ket{\mathbf{e}}\}_{\mathbf{e} \in \mathbb{F}_2^3}$. Retain the outcome $\mathbf{b}$.
    \item Apply (the Clifford) $CCZ\; X^{\mathbf{b}}\; CCZ$ to the first register.
    \item The state $CCZ\ket{\psi}$ is found in the first register.
\end{enumerate}

Let us recall that the gate from the last section, $\theta(U)$, is an $s$-qubit diagonal gate with phases $\pm 1$ only on the diagonal and that it is in exactly the third level of the Clifford hierarchy. It is known that any diagonal $s$-qubit gate with $\pm 1$ only on the diagonal may be decomposed into a sequence of gates $C^kZ$ gates \cite{houshmand2014decomposition} for $0 \leq k \leq s-1$, where $C^kZ$ is the $Z$-gate acting on one target qubit, controlled on $k$ other qubits being in the state $\ket{1}$. Note that all these gates commute, and square to the identity - therefore each $C^kZ$ gate appears at most once in the decomposition. Moreover, because \cite{houshmand2014decomposition} shows that these gates are independent (see Lemma 1 in that paper), this decomposition is in fact unique\footnote{Note that this decomposition is unique, once one has fixed an isomorphism between qudits of dimension $q$ and sets of $s$ qubits. This is done by fixing a self-dual basis for $\mathbb{F}_q$ over $\mathbb{F}_2$. A different self-dual basis may lead to a different isomorphism, and therefore $\theta(U)$ will be different (although it is always a gate in exactly the third level of the Clifford hierarchy), leading to a different decomposition.}.

Now, $CCZ$ is a diagonal gate in the third level of the Clifford hierarchy and thus enables universal quantum computation when combined with Clifford operations. It is also considered quite a natural non-Clifford gate to implement, whereas the $s$-qubit gate $\theta(U)$ discussed previously is non-standard, so it is advantageous to demonstrate this conversion.

There, therefore, exists a unique decomposition of the $s$-qubit gate $\theta(U)$ into a sequence of $Z$, $CZ$, and $CCZ$ gates. This is so because it is in the third level of the Clifford hierarchy, and $C^kZ$ is in the $(k+1)$-th level. Note that $\theta(U)$ is not a Clifford gate, and so this decomposition includes at least one $CCZ$ gate. Suppose that this decomposition uses $C$ $CCZ$ gates.  Here, as throughout the paper, we are only interested in asymptotic performance and do not attempt to optimise constant factors. Nevertheless, we mention that $1 \leq C \leq 120$ (for our main construction with $s=10$) because the decomposition of $\theta(U)$ into $Z$, $CZ$ and $CCZ$ gates may contain at most $\binom{s}{3}=\binom{10}{3} = 120$ $CCZ$ gates.
The required number $C$ of $CCZ$ gates in such gate decomposition depends on the choice of the self-dual basis.
In Appendix \ref{exampleDecomposition}, we provide an example of decomposition of the gate $\theta(U)$ into $Z$, $CZ$, and $CCZ$ gates which uses $C=70$ $CCZ$ gates for an example of the self-dual basis.

To perform $CCZ$ distillation, we imagine that we start with some supply of input, lower quality (higher error rate) $\ket{CCZ}$ states. By consuming $\ket{CCZ}$ states in their gate teleportation protocols, as well as Cliffords $Z$ and $CZ$, lower quality $\ket{U_s}$ state may be prepared, by acting according to the decomposition of the gate $\theta(U)$ on the state $\ket{+}^{\otimes s}$. Note that $C$ $\ket{CCZ}$ states are required to make one $\ket{U_s}$ state, so generating n $\ket{U_s}$ states requires $Cn$ initial $\ket{CCZ}$ states. One then performs the distillation protocol described in the previous section to the collection of $\ket{U_s}$ states to produce a smaller number, $k$, of higher quality $\ket{U_s}$ states.

Given this supply of higher quality $\ket{U_s}$ states, we would like to finish with a supply of higher quality $\ket{CCZ}$ states. One makes this final conversion as follows. Given a $\ket{U_s}$ state, consider any three out of the $s$ qubits such that in the decomposition of $\theta(U)$, there is a $CCZ$ state acting on those three qubits. We measure the remaining $s-3$ qubits in the $Z$ basis and can perform Clifford gates conditioned on the result to obtain a $\ket{CCZ}$ state. Explicitly, we do the following (for the sake of convenience assume that the decomposition of $\theta(U)$ includes a gate $CCZ_{1,2,3}$, and the qubits that are measured are therefore $4, \ldots, s$; we will end with a $\ket{CCZ}$ state on qubits 1,2,3).
\begin{enumerate}
    \item Suppose that the decomposition of $\theta(U)$ contained a $Z$ or $CZ$ gate supported only on the qubits 1,2,3. We act with this same gate again.
    \item Suppose that the decomposition of $\theta(U)$ contained a $CZ$ gate acting on qubits $a,b$ where $a \in \{1, 2, 3\}$ and $b \notin \{1, 2, 3\}$. If the $b$-th qubit was measured to be 1, then perform $Z$ on the $a$-th qubit.
    \item Suppose that the decomposition of $\theta(U)$ contained a $CCZ$ gate acting on qubits $a,b,c$ where $a,b \in \{1,2,3\}$ and $c \notin \{1,2,3\}$. If the $c$-th qubit was measured to be 1, then perform $CZ$ across the $a$-th and $b$-th qubits.
    \item Suppose that the decomposition of $\theta(U)$ contained a $CCZ$ gate acting on qubits $a,b,c$ where $a \in \{1,2,3\}$ and $b,c \notin \{1,2,3\}$. If the $b$-th and $c$-th qubits were both measured to be 1, then perform $Z$ on the $a$-th qubit.
\end{enumerate}
It can be checked that this procedure reproduces a $\ket{CCZ}$ state upon all measurement outcomes. The error analysis for the whole distillation procedure will be performed in Section \ref{errorAnalysis}.

\section{Triorthogonal Matrices from Algebraic Geometry Codes}\label{mainConstruction}

In this section, we will provide an explicit construction of the triorthogonal matrices over $\mathbb{F}_q$ which can be used to make the quantum codes in Section \ref{quantumCodefromMatrix}, and thus the distillation protocols constructed in Section \ref{MSDProtocolMainSection}. We will prove all the required properties of the matrix constructed, namely that it is triorthogonal, and that the quantum code that arises from it is asymptotically good, i.e. $k=\Theta(n)$ and $d=\Theta(n)$. When combined with the results of Section \ref{mainCodeSection}, this establishes Theorem \ref{quantumCodeTheorem}.

It was shown in Claim \ref{distanceClaim} that the distance of the code is, in fact, equal to its $Z$-distance, so we will be satisfied for these purposes with a lower bound on the $Z$-distance. The construction itself will take place in Sections \ref{AGCodeConstruction} and \ref{puncturingSection}, while in Section \ref{concrete}, we will make things concrete, meaning that we will give values to all the parameters of our construction. This will show that all hypotheses of our construction can be simultaneously satisfied, and that the parameters of the resulting code are asymptotically good. In Sections \ref{AGCodeConstruction} and \ref{puncturingSection}, we will denote $q=2^s$ for brevity. We will reintroduce $q=2^{10}=1024$ when we make things concrete in Section \ref{concrete}. We will go on to show, however, that this construction of asymptotically good triorthogonal codes for qudits of dimension $2^s$ works for all $s$ satisfying $s \geq 10$ and $s \not\equiv 0 \pmod 3$. Therefore, the following theorem is true, which is a more technical version of Theorem \ref{quantumCodeTheorem}.
\begin{theorem}\label{technicalCodeTheorem}
    Fix any constant $s \geq 10$ with $s \not\equiv 0 \pmod 3$, and take $q=2^s$.
    If we have a family of algebraic function fields of one variable $(F_i/\mathbb{F}_q)_{i=1,2,\ldots}$ with genera $(g_i)_{i=1,2,\ldots}$, where each has a set $\{P_1,\ldots,P_{n_i^\prime}\}$ of $n_i^\prime$ rational places satisfying
    \begin{align}
    n_i^\prime-4+g_i\geq 7(3g_i+2),
    \end{align}
    then for any sequences $\{a_i\}_i$ and $\{k_i\}_i$ of integers satisfying
    \begin{align}
        &a_i\geq 3g_i+2,\\
        &n_i^\prime-4+g_i-7a_i\geq 0,\\
        &a_i-3g_i-1\geq k_i>0,
    \end{align}
    there exists a sequence $\{N_i\}_i$ of integers satisfying
    \begin{align}
        n^{\prime}_i-2\geq N_i\geq n^{\prime}_i-2-g_i,
    \end{align}
    and a family of $[[n_i,k_i,d_i]]$ quantum codes on $2^s$-dimensional qudits such that the code parameters are
    \begin{align}
    n_i&=N_i-k_i,\\
    d_i&\geq a_i-k_i-(2g_i-2),
    \end{align}
    and any code in this family supports a non-Clifford transversal gate that is in the third level of the Clifford hierarchy (i.e., there is a single-qudit gate in exactly the third level of the Clifford hierarchy, $U$, such that the physical transversal gate $U^{\otimes n_i}$ acts in the same way as the logical transversal gate $\overline{U^{\otimes k_i}}$ on code states).
    In addition, for any sequence $\{t_i\}_i$ of parameters satisfying
    \begin{align}
        0<t_i\leq\frac{[a_i-k_i-(2g_i-2)]-g_i-1}{2},
    \end{align}
    this family of quantum codes have an $O(\mathrm{poly}(n_i))$-time decoder with the decoding radius $t_i$ for correcting $Z$-type errors (as is relevant to magic state distillation).
    There exists an explicit construction of an infinite family $\{F_i/\mathbb{F}_q\}_{i=1,2,\ldots}$ of algebraic functions fields that enable an explicit construction of this code family achieving
    \begin{align}
        k_i&=\Theta(n_i),\\
        d_i&=\Theta(n_i),\\
        t_i&=\Theta(n_i),
    \end{align}
    as $i\to\infty$.
\end{theorem}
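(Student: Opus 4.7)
The plan is to realize the triorthogonal matrix $G$ over $\mathbb{F}_q$ as an evaluation matrix of functions in a Riemann-Roch space at the rational places of an algebraic function field, and then feed $G$ into the CSS construction of Section \ref{quantumCodefromMatrix} to obtain the quantum code. Concretely, given $F/\mathbb{F}_q$ of genus $g$ with rational places $P_1,\ldots,P_{n'}$ and $D=P_1+\cdots+P_{n'}$, I would pick a divisor $A$ of degree $a$ with $\supp(A)\cap\supp(D)=\emptyset$. The rows of $G$ will be evaluation vectors $g^a=(f_a(P_1),\ldots,f_a(P_{n'}))$ of carefully chosen $f_a\in\mathcal{L}(A)$, with the first $k$ rows forming a designated ``logical'' block and the remaining $m-k$ rows chosen inside an annihilator subspace of the relevant bilinear/trilinear forms.

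Both triorthogonality conditions of Definition \ref{secondTriorthogDef} will be enforced via the residue theorem for Weil differentials, Proposition \ref{localComponentFormula}: for any $\omega\in\Omega_F$ and any $h\in F$, $\sum_P\omega_P(h)=0$. Choosing a differential $\omega$ with $(\omega)\geq 7A-D$, so that $\omega$ vanishes on $\mathcal{A}_F(7A-D)+F$, forces $\sum_{i=1}^{n'}\sigma_i h(P_i)=0$ for every $h\in\mathcal{L}(7A)$, where $\sigma_i=\omega_{P_i}(1)$. Since $f_a^4f_b^2f_c\in\mathcal{L}(7A)$ and $f_af_b\in\mathcal{L}(2A)\subseteq\mathcal{L}(7A)$, a single residue identity simultaneously makes the seventh-power sums and the weighted pairing sums $\sum_i\sigma_i g^a_ig^b_i$ vanish on the appropriate subspaces. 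The degree constraint $n'-4+g-7a\geq 0$ in the hypothesis is exactly the bound $\deg(7A)\leq\deg(D)+\deg(W)$ (using $\deg W=2g-2$ from \eqref{eq:degree_canonical_divisor}) needed for such an $\omega$ to exist. To obtain the nonzero diagonal entries $\tau_a$ and the value $1$ for the $a=b=c\leq k$ entry of the seventh-power sum, I would pick the first $k$ functions $f_1,\ldots,f_k$ inside a smaller subspace with explicit generators and diagonalise the resulting bilinear and trilinear forms, rescaling each $f_a$ by an appropriate scalar (and seventh root). Because the map $x\mapsto x^7$ on $\mathbb{F}_q^\times$ is a bijection exactly when $\gcd(7,q-1)=1$, i.e.\ when $s\not\equiv 0\pmod{3}$, such rescaling is unobstructed, which is precisely the hypothesis of the theorem.

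After the evaluation matrix is built, a puncturing step (the role of Section \ref{puncturingSection} foreshadowed in the statement) deletes at most $g+2$ columns where certain designated functions vanish, trimming $n'$ down to the final length $N\in[n'-2-g,\,n'-2]$ while preserving the triorthogonality identities. The quantum code is then $CSS(X,\mathcal{G}_0;Z,\mathcal{G}^\perp)$ as in Section \ref{quantumCodefromMatrix}; Lemma \ref{transversalGateLemma} instantly yields the transversal non-Clifford gate in the third level of the Clifford hierarchy (which is genuine by Lemma \ref{gatelemma}), and Lemma \ref{matrixProps} gives exactly $k_i$ logical qudits. The distance bound $d_i\geq a_i-k_i-(2g_i-2)$ follows from Claim \ref{distanceClaim} together with $d_Z\geq d_{\mathrm{cl}}$ and the dual-distance bound of Theorem \ref{AGCodesParams} applied to $\mathcal{G}_0^\perp$, and the efficient decoder is the one of \cite[Decoding Algorithm 8.5.4]{stichtenoth2009algebraic} summarised in \eqref{decodingConditionFirst}--\eqref{decodingConditionLast}, whose hypotheses reduce to the stated inequality on $t_i$. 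For the asymptotic claim, I would instantiate everything on the Garcia-Stichtenoth tower of function fields over $\mathbb{F}_q$ (available since $q=2^s$ with $s$ even is a square for $s=10$, and more generally one may use other towers attaining Ihara's bound $A(q)=\sqrt{q}-1$), so that $n'_i/g_i\to\sqrt{q}-1$; choosing $a_i$ and $k_i$ as fixed linear-in-$g_i$ fractions satisfying all the listed inequalities then yields $k_i,d_i,t_i=\Theta(n_i)$.

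The main obstacle I anticipate is the first triorthogonality condition. The residue theorem effortlessly kills $\sum_i(g^a_i)^4(g^b_i)^2g^c_i$ for the bulk of triples, but making this sum equal exactly $1$ when $a=b=c\leq k$ and be $0$ for every other triple with $\max\{a,b,c\}\leq k$ requires choosing a basis that simultaneously diagonalises the trilinear form $(f_a,f_b,f_c)\mapsto\sum_i f_a(P_i)^4f_b(P_i)^2f_c(P_i)$ on the logical $k$-dimensional subspace. In the Reed-Solomon setting of \cite{krishna2019towards} this diagonalisation reduces to a Vandermonde calculation; in the algebraic-geometry setting it requires a splitting of $\mathcal{L}(A)$ that interacts nontrivially with the genus $g$ and with the specific Garcia-Stichtenoth tower, which is precisely what drives the interlocking inequalities among $a_i$, $g_i$, $k_i$, and $n'_i$ appearing in the theorem statement.
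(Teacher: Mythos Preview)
Your proposal captures much of the architecture correctly: the residue identity from a differential $\omega_0$ with $(\omega_0)\geq 7A+E-D$, the need for $s\not\equiv 0\pmod 3$ so that seventh roots exist in $\mathbb{F}_q^\times$, the reduction of the distance and decoder claims to Theorem~\ref{AGCodesParams} and the classical AG decoder, and the Garcia--Stichtenoth tower for the asymptotic part. But the step you yourself flag as the ``main obstacle'' --- diagonalising the trilinear form $(f_a,f_b,f_c)\mapsto\sum_i\sigma_i f_a(P_i)^4f_b(P_i)^2f_c(P_i)$ on a $k$-dimensional logical subspace --- is genuinely unresolved in your proposal, and the paper does not attack it that way at all.

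The paper sidesteps the diagonalisation via two moves you are missing. First, it rescales \emph{coordinates}, not rows: after discarding the (at most $g$) places where $\omega_{0P_i}(1)=0$, it absorbs a seventh root $w_i$ of each $\omega_{0P_i}(1)$ into the $i$-th coordinate, setting $\tilde C=m_w(C)$. This makes the seventh-power identity \emph{unweighted}, $\sum_{i=1}^N(y_1)_i\cdots(y_7)_i=0$ for all $y_j\in\tilde C$, which is exactly what Definition~\ref{secondTriorthogDef} requires for the first condition (your version would leave a $\sigma_i$ inside that sum as well, contrary to the definition). Second --- and this is the device that manufactures the diagonal for free --- the paper row-reduces a generator matrix of $\tilde C$ so that its first $k$ columns form an identity block (possible because $k$ lies below the dual distance of $\tilde C$), and then \emph{deletes those $k$ columns} to obtain $G$ of length $n=N-k$. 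Splitting $\sum_{i=1}^N=\sum_{i\le k}+\sum_{i>k}$, the identity block contributes exactly $1$ when $1\le a=b=c\le k$ and $0$ otherwise, so in characteristic two the first triorthogonality condition on $G$ follows immediately with no trilinear-form diagonalisation. The second condition follows the same way using $(w_i)_i\in\tilde C$ (since $1\in\mathcal{L}(A)$), yielding $\sigma_i=w_{i+k}^5$ and $\tau_a=w_a^5$. Your puncturing step accounts only for the $\leq g+2$ bad-residue columns; the additional $k$-column puncturing after row reduction is the missing ingredient that replaces your unsolved diagonalisation problem.
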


Before beginning, let us recall the definition of triorthogonality that we aim for.
\begin{definition}[Triorthogonal Matrix]\label{triorthogMatDefinition}
    For us, a matrix $G \in \mathbb{F}_{q}^{m \times n}$ with rows $(g^a)_{a=1}^m$ is called triorthogonal if for all $a,b,c \in \{1, ..., m\}$,
    \begin{equation}\label{triorthogSecondDef1}
        \sum_{i=1}^n(g^a_i)^4(g^b_i)^2(g^c_i) = \begin{cases}
            1 &\text{ if } 1 \leq a=b=c \leq k\\
            0 &\text{ otherwise}
        \end{cases}
    \end{equation}
    and
    \begin{equation}\label{triorthogSecondDef2}
        \sum_{i=1}^n \sigma_ig_i^ag_i^b = \begin{cases}
            \tau_a &\text{ if } 1 \leq a=b \leq k\\
            0 &\text{ otherwise}
        \end{cases}
    \end{equation}
    for some integer $k \in \{1, ..., m\}$ and for some $\sigma_i, \tau_a \in \mathbb{F}_{q}$, where $\sigma_i, \tau_a \neq 0$. In these expressions, all arithmetic takes place over $\mathbb{F}_{q}$.
\end{definition}

\subsection{The Algebraic Geometry Code}\label{AGCodeConstruction}

We rely heavily on the details of algebraic geometry codes for our construction. The necessary preliminary material is presented in Section \ref{AGPrelims}. Our argument also uses several ideas from the quantum algebraic geometry codes of \cite{ashikhmin2001asymptotically}, along with some key differences, but what follows is self-contained within this paper.

Let us start by considering an algebraic function field of one variable $F/\mathbb{F}_q$ with genus $g>0$, and let us denote a set of its rational places as $\mathcal{P}' = \{P_1, ..., P_{n'}\}$, where we assume a condition
\begin{align}
\label{hypothesis1}
    n^\prime-4+g>7(3g+2).
\end{align}
Now consider in addition two positive divisors $A \in \mathrm{Div}(F)$ and $E \in \mathrm{Div}(F)$ with
\begin{align}
    \label{positivityCondition_a}
    a&\coloneqq \deg(A) \geq 3g+2,\\
    \label{positivityCondition}
    \deg(E) &= n'-4+g-7a \geq 0,
\end{align}
where under the condition~\eqref{hypothesis1}, we can always find such $a$ and the corresponding divisors, e.g., $A = aP_{n'-1}$ and $E = (n'-4+g-7a)P_{n'}$.
For later use in Section~\ref{puncturingSection}, pick an integer $k$ with
\begin{align}
\label{eq:condition_k}
     a-3g-1\geq k>0,
\end{align}
which exists due to the condition~\eqref{positivityCondition_a}.
The condition~\eqref{hypothesis1}, which yields all the other conditions~\eqref{positivityCondition_a},~\eqref{positivityCondition}, and~\eqref{eq:condition_k}, will turn out to be satisfied when particular values are chosen as in Section \ref{concrete}.

Let $\mathcal{P}'' = \{P_1, ..., P_{n'-2}\}$, where without loss of generality $\supp(A)\cap \mathcal{P}'' = \supp(E) \cap \mathcal{P}'' = \emptyset$\footnote{Recall that given any divisor $D$, its support $\supp(D)$ is the set of places $P$ for which $\nu_P(D) \neq 0$.}, because we can pick $A$ and $E$ to be supported at only one rational place. For convenience, let
\begin{align}
\label{eq:n_prime_prime}
    n'' \coloneq n'-2,
\end{align}
so that $\mathcal{P}'' = \{P_1, ..., P_{n''}\}$.
Now define new divisors
\begin{align}
    D'' &\coloneq P_1 + ... + P_{n''}\\
    B &\coloneq D''-7A-E
\end{align}
and we note that $\deg(B) = 2-g$. Given any canonical divisor $W$, Riemann's theorem (Equation~\eqref{eq:riemannTheorem}) tells us that
\begin{equation}
    l(W+B) \geq (2g-2)+(2-g)+1-g = 1
\end{equation}
(since we know that $\deg(W) = 2g-2$ due to Equation~\eqref{eq:degree_canonical_divisor}) and so by the Duality Theorem (Theorem \ref{dualityThm}), the space $\Omega_F(-B)$ contains a non-zero element. Let $0 \neq \omega_0 \in \Omega_F(-B)$ and, as in Definition~\ref{def:canonical_divisor}, we write $(\omega_0)$ for the canonical divisor corresponding to $\omega_0$. By Remark \ref{canonicalDivisors}, $(\omega_0) \geq -B$, i.e.,
\begin{equation}
    (\omega_0) \geq 7A+E-D''.
\end{equation}
For any $P_i \in \mathcal{P}''$, by Definition~\ref{def:divisor}, we calculate
\begin{align}
\nu_{P_i}\left((\omega_0)\right) &\geq 7\nu_{P_i}(A)+\nu_{P_i}(E)-\nu_{P_i}(D'') \\
&=-\nu_{P_i}(D'')\\
&=-1\label{omega0Fact}
\end{align}
where the second line uses the fact that $\supp(A)\cap \mathcal{P}'' = \supp(E)\cap \mathcal{P}'' = \emptyset$, so $\nu_{P_i}(A) = \nu_{P_i}(E)=0$. With this in mind, let us define the following subset $\mathcal{P}_0 \subseteq \mathcal{P}''$:
\begin{equation}
    \mathcal{P}'' \supseteq \mathcal{P}_0 \coloneq \{P_{i_1}, ..., P_{i_M}\} = \{P_{i_j} \in \mathcal{P}'' : \nu_{P_{i_j}}\left((\omega_0)\right) \geq 0\}.
\end{equation}
In words, $\mathcal{P}_0$ is the set of $P_{i_j}$ in $\mathcal{P}''$ for which we not only have $\nu_{P_{i_j}}((\omega_0)) \geq -1$, but also $\nu_{P_{i_j}}((\omega_0)) \geq 0$. What is the significance of the set $\mathcal{P}_0$? Using a claim showed in the proof of Theorem 2.2.7 in \cite{stichtenoth2009algebraic}, we have that for $P_{i_j} \in \mathcal{P}''$, $\nu_{P_{i_j}}((\omega_0)) \geq 0 \iff \omega_{0P_{i_j}}(1) = 0$, where $\omega_{0P_{i_j}}$ denotes the local component of $\omega_0$ at $P_{i_j}$. To construct our triorthogonal matrix, it will be of great importance to us to only use the $P_{i_j}$ for which $\omega_{0P_{i_j}}(1) \neq 0$. To this end, we set
\begin{equation}
    \mathcal{P} \coloneq \mathcal{P}'' \setminus \mathcal{P}_0
\end{equation}
and after relabelling we can write
\begin{equation}
    \mathcal{P} = \{P_1, ..., P_{N}\},
\end{equation}
but how do we know how large $\mathcal{P}$ is? Do we even know that it is non-empty? We show the following:
\begin{claim}\label{boundP0}
$|\mathcal{P}_0| = M \leq g$.
\end{claim}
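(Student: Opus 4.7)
The plan is to bound $M$ by converting $(\omega_0)$ into a positive divisor whose total degree is exactly $g$, and then to observe that each place of $\mathcal{P}_0$ contributes at least $1$ to that degree. Concretely, I would form the divisor
\begin{equation}
    D^\star \coloneq (\omega_0) + B = (\omega_0) - 7A - E + D''
\end{equation}
which is positive by Remark~\ref{canonicalDivisors}, since $\omega_0 \in \Omega_F(-B)$ is equivalent to $(\omega_0) \geq -B$. This reduction to a positive divisor is the key move; after that the argument is just a degree count.

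Next I would compute $\deg(D^\star)$ using that $\deg((\omega_0)) = 2g-2$ (Equation~\eqref{eq:degree_canonical_divisor}), $\deg(D'') = n''$ (each $P_i$ is rational, so $\deg(P_i)=1$), $\deg(A) = a$, and $\deg(E) = n'-4+g-7a$. Combining these with $n'' = n' - 2$ (Equation~\eqref{eq:n_prime_prime}) gives
\begin{align}
\deg(D^\star) &= (2g-2) - 7a - (n'-4+g-7a) + (n'-2) \\
&= g.
\end{align}

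Now I would localise at the places in $\mathcal{P}_0$. Fix $P_{i_j} \in \mathcal{P}_0$. By definition of $\mathcal{P}_0$, we have $\nu_{P_{i_j}}((\omega_0)) \geq 0$. Since $\mathcal{P}_0 \subseteq \mathcal{P}''$, we have $\nu_{P_{i_j}}(A) = \nu_{P_{i_j}}(E) = 0$ (supports of $A,E$ are disjoint from $\mathcal{P}''$) and $\nu_{P_{i_j}}(D'') = 1$. Hence
\begin{equation}
    \nu_{P_{i_j}}(D^\star) \geq 0 - 0 - 0 + 1 = 1.
\end{equation}

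Finally, since $D^\star$ is positive, its degree is at least the sum of $\nu_{P_{i_j}}(D^\star)\cdot \deg(P_{i_j})$ over $P_{i_j} \in \mathcal{P}_0$, and each $P_{i_j}$ is rational so $\deg(P_{i_j}) = 1$. This gives
\begin{equation}
    g = \deg(D^\star) \geq \sum_{j=1}^{M} \nu_{P_{i_j}}(D^\star) \cdot \deg(P_{i_j}) \geq M,
\end{equation}
proving the claim. There is no real obstacle here; the only thing to be careful about is the bookkeeping of which divisors have which valuations at places in $\mathcal{P}''$, and in particular that the positivity $(\omega_0) \geq -B$ supplied by the Duality Theorem is exactly what makes $D^\star = (\omega_0) + B$ positive, so that a degree estimate can be applied at all.
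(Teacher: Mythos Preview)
Your proof is correct and is essentially the same argument as the paper's: the paper writes the inequality as $(\omega_0) \geq 7A+E-D''+\sum_{j=1}^M P_{i_j}$ and then takes degrees, which is just another way of saying that your $D^\star=(\omega_0)+B$ is positive with $\nu_{P_{i_j}}(D^\star)\geq 1$ at each place of $\mathcal{P}_0$. The only cosmetic difference is that you compute $\deg(D^\star)=g$ first and then bound it from below, while the paper bounds $\deg((\omega_0))=2g-2$ directly.
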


\begin{proof}
    Because $(\omega_0) \geq 7A+E-D''$ and $\nu_{P_{i_j}}(\omega_0) \geq 0$ for all $P_{i_j} \in \mathcal{P}_0$, we have that
\begin{equation}\label{omega0geq}
    (\omega_0) \geq 7A+E-D'' + \sum_{j=1}^MP_{i_j}.
\end{equation}
Taking degrees (and recalling that all canonical divisors have degree $2g-2$) yields
\begin{align}
    2g-2&\geq 7a+(n''-2+g-7a)-n''+M\\
    \implies g&\geq M.
\end{align}
\end{proof}
The claim then establishes that $n'' \geq N \geq n''-g$, meaning that $\mathcal{P}$ is non-empty if $n'' > g$; due to Equation~\eqref{eq:n_prime_prime}, we have
\begin{equation}\label{boundN}
    n'-2\geq N \geq n'-2-g,
\end{equation}
where these values are always positive under the condition~\eqref{hypothesis1}.
By setting
\begin{equation}
    D \coloneq P_1 + ... + P_N,
\end{equation}
we see from the proof of Claim \ref{boundP0}, in particular Equation \eqref{omega0geq}, that
\begin{equation}
    (\omega_0) \geq 7A+E-D
\end{equation}
and so
\begin{equation}\label{omega0Membership}
    \omega_0 \in \Omega_F(7A+E-D)
\end{equation}
by Remark \ref{canonicalDivisors}.

We may now consider the algebraic geometry code $C = C_{\mathcal{L}}(D,A)$ as in Definition~\ref{def:AG_code}, which we write out explicitly for the sake of completeness as
\begin{equation}\label{exlicitCode}
    C \coloneq C_{\mathcal{L}}(D,A) = \{(x(P_1), ..., x(P_N)): x \in \mathcal{L}(A)\} \subseteq \mathbb{F}_q^N.
\end{equation}
Now consider any $x_1, x_2, x_3, x_4, x_5, x_6, x_7 \in \mathcal{L}(A)$, set $X \coloneq x_1x_2x_3x_4x_5x_6x_7$, and consider the principal adele $\mathrm{princ}(X)$ defined in Definition~\ref{def:principal_adele}. Recalling from Definition \ref{differentialDef} that Weil differentials always vanish on principal adeles, we have
\begin{equation}
    0 = \omega_0\left(\mathrm{princ}(X)\right).
\end{equation}
By further using Proposition \ref{localComponentFormula}, we have that
\begin{equation}
    0 = \sum_{P \in \mathbb{P}_F}\omega_{0P}(X)
\end{equation}
where $\omega_{0P}$ is the local component of $\omega_0$ at the place $P$, recalling from Definition~\ref{def:local_components_of_differentials} that the value of the principal adele $\mathrm{princ}(X)$ is $X$ at all places $P \in \mathbb{P}_F$. Then, by Definition~\ref{def:local_components_of_differentials}, we have
\begin{equation}
    0 = \sum_{P \in \mathbb{P}_F}\omega_0(\iota_P(X)).
\end{equation}
Now let us show:
\begin{claim}\label{iotaMembership}
    For any place $P \notin \mathcal{P}$, we have $\iota_P(X) \in \mathcal{A}_F(7A+E-D)$.
\end{claim}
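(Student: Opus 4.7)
The plan is to unpack the definition of membership in $\mathcal{A}_F(7A+E-D)$ and observe that, because $\iota_P(X)$ vanishes at every place other than $P$, the condition collapses to a single inequality of valuations at $P$ itself. Concretely, by Definition~\ref{def:adeles}, we need $\nu_{P'}(\iota_P(X)) + \nu_{P'}(7A+E-D) \geq 0$ for every $P' \in \mathbb{P}_F$, and for $P' \neq P$ we have $\nu_{P'}(\iota_P(X)) = \nu_{P'}(0) = \infty$, so that constraint is automatic. Only the inequality
\begin{equation}
    \nu_P(X) + 7\nu_P(A) + \nu_P(E) - \nu_P(D) \geq 0
\end{equation}
needs to be verified.

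Next, I would split into two cases according to why $P$ lies outside $\mathcal{P}$. Since $\mathcal{P} = \mathcal{P}'' \setminus \mathcal{P}_0$, either (i) $P \in \mathcal{P}_0 \subseteq \mathcal{P}''$, or (ii) $P \notin \mathcal{P}''$. In both cases, $P$ is not in the support of $D = P_1 + \cdots + P_N$, so $\nu_P(D) = 0$, and the inequality reduces to
\begin{equation}
    \nu_P(X) + 7\nu_P(A) + \nu_P(E) \geq 0.
\end{equation}
In case (i), by construction $\supp(A) \cap \mathcal{P}'' = \supp(E) \cap \mathcal{P}'' = \emptyset$, so $\nu_P(A) = \nu_P(E) = 0$; moreover, each $x_i \in \mathcal{L}(A)$ satisfies $\nu_P(x_i) \geq -\nu_P(A) = 0$, whence $\nu_P(X) = \sum_{i=1}^7 \nu_P(x_i) \geq 0$, closing the inequality.

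In case (ii), $P$ may lie in $\supp(A) \cup \supp(E)$ or be a non-rational place or one of the two removed rational places, but the definition of $\mathcal{L}(A)$ still guarantees $\nu_P(x_i) \geq -\nu_P(A)$ for every $i$, so $\nu_P(X) \geq -7\nu_P(A)$; combining with the positivity of $E$ (hence $\nu_P(E) \geq 0$) yields
\begin{equation}
    \nu_P(X) + 7\nu_P(A) + \nu_P(E) \geq \nu_P(E) \geq 0,
\end{equation}
completing the proof. The whole argument is essentially bookkeeping in finite-place valuations; the only mildly delicate point is to remember that $\mathcal{P}_0$, while not contributing to $D$, is contained in $\mathcal{P}''$ and therefore disjoint from $\supp(A) \cup \supp(E)$, which is exactly what kills all three divisor contributions at once in case (i). No further machinery beyond Definitions~\ref{def:adeles},~\ref{def:riemann_space} and the positivity of $E$ is needed.
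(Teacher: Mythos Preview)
Your proof is correct and follows essentially the same approach as the paper: reduce to the single place $P$ via $\nu_{P'}(0)=\infty$, then use $x_i \in \mathcal{L}(A)$ together with $E \geq 0$ and $P \notin \supp(D)$ to verify the valuation inequality. The only difference is that your case split is unnecessary---your case~(ii) argument (namely $\nu_P(X) \geq -7\nu_P(A)$ from $X \in \mathcal{L}(7A)$, combined with $\nu_P(E) \geq 0$ and $\nu_P(D)=0$) already handles case~(i) verbatim, which is exactly how the paper proceeds in one stroke.
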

\begin{proof}
    Let $P$ be a place such that $P \notin \mathcal{P}$. For any place $Q \neq P$, $\iota_P(X)$ takes a value $0$ at $Q$, and so $\nu_Q(\iota_P(X)) = \infty$ due to Lemma~\ref{PlaceValuation}; then, we have
    \begin{equation}
        \nu_Q(\iota_P(X)) + \nu_Q(7A+E-D)=\infty \geq 0.
    \end{equation}
    
    For the place $P$ itself, because $x_1, ..., x_7 \in \mathcal{L}(A)$, and using the second point of Definition \ref{discreteValuations}, we have $X \in \mathcal{L}(7A)$. We claim that, because $E \geq 0$ and $P \notin \mathcal{P} = \supp(D)$, we have
    \begin{equation}\label{adeleMembership}
        \nu_P(\iota_P(X)) + \nu_P(7A+E-D)\geq 0.
    \end{equation}
    Indeed, we recall from Definition \ref{def:local_components_of_differentials} that the adele $\iota_P(X)$ takes the value $X$ at the place $P$, so that $\nu_P(\iota_P(X)) = \nu_P(X)$. Then $X \in \mathcal{L}(7A)$ tells us that $\nu_P(\iota_P(X)) + \nu_P(7A) \geq 0$ by Definition~\ref{def:riemann_space}, and since $E \geq 0$, we have $\nu_P(E) \geq 0$ by Definition~\ref{def:positive_divisor}, and $P \notin \mathcal{P} = \supp(D) \implies \nu_P(-D)=0$ by Definition~\ref{def:divisor}, thus establishing Equation \eqref{adeleMembership}.
    
    We have thus shown that for every place $Q \in \mathbb{P}_F$, we have $\nu_Q(\iota_P(X)) + \nu_Q(7A+E-D) \geq 0$, and so, by Definition~\ref{def:adeles}, $\iota_P(X) \in \mathcal{A}_F(7A+E-D)$.
\end{proof}
From Equation \eqref{omega0Membership}, Claim \ref{iotaMembership}, and Definition \ref{differentialDef}, $\omega_0(\iota_P(X)) = 0$ for all $P \notin \mathcal{P}$, and we deduce that
\begin{align}
    0&=\sum_{i=1}^N\omega_0(\iota_{P_i}(X))\\
    &=\sum_{i=1}^N\omega_{0P_i}(X)
\end{align}
where $\omega_{0P_i}$ is the local component of $\omega_0$ at the place $P_i \in \mathcal{P}$. To proceed from here, we need the following fact from \cite{stichtenoth2009algebraic}.

\begin{proposition}[From the proof of Theorem 2.2.8 of \cite{stichtenoth2009algebraic}]
    Let $P \in \mathbb{P}_F$ be a rational place and $\omega$ a Weil differential with $\nu_P((\omega)) \geq -1$. If $x \in F$ has $\nu_P(x) \geq 0$, then
    \begin{equation}
        \omega_P(x) = x(P) \omega_P(1).
    \end{equation}
\end{proposition}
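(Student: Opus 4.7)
The plan is to write $x$ as a sum of a constant part and a part that vanishes at $P$, use the $\mathbb{F}_q$-linearity of $\omega_P$, and then show the vanishing part contributes nothing. Since $P$ is rational, the residue class field is $\mathbb{F}_q$ itself, so $x(P)\in\mathbb{F}_q\subseteq F$. Decompose
\begin{equation}
    x = x(P) + \bigl(x - x(P)\bigr),
\end{equation}
and observe that $\nu_P(x - x(P))\geq 1$ because $x-x(P)$ is in the maximal ideal $P$ by construction of the residue class map in Definition \ref{residueClassFieldAndMapDef}. By the $\mathbb{F}_q$-linearity of the local component $\omega_P:F\to\mathbb{F}_q$, we have
\begin{equation}
    \omega_P(x) = x(P)\,\omega_P(1) + \omega_P\bigl(x-x(P)\bigr),
\end{equation}
so the entire claim reduces to showing $\omega_P(y)=0$ whenever $\nu_P(y)\geq 1$.

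To prove that, I would use the defining vanishing property of $\omega$ (Definition \ref{differentialDef}) on the single-place adele $\iota_P(y)$. By definition of $\Omega_F((\omega))$ and Remark \ref{canonicalDivisors}, $\omega$ vanishes on $\mathcal{A}_F((\omega))+F$, so it suffices to check $\iota_P(y)\in\mathcal{A}_F((\omega))$. At any place $Q\neq P$, the adele $\iota_P(y)$ takes the value $0$, hence $\nu_Q(\iota_P(y))=\infty$ and the condition $\nu_Q(\iota_P(y))+\nu_Q((\omega))\geq 0$ holds automatically. At $P$ itself, we compute
\begin{equation}
    \nu_P(\iota_P(y)) + \nu_P((\omega)) = \nu_P(y) + \nu_P((\omega)) \geq 1 + (-1) = 0,
\end{equation}
using the hypothesis $\nu_P((\omega))\geq -1$. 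Thus $\iota_P(y)\in\mathcal{A}_F((\omega))$, and Definition~\ref{def:local_components_of_differentials} gives $\omega_P(y)=\omega(\iota_P(y))=0$, as required.

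There is essentially no obstacle: the only subtle point is keeping straight that $x(P)$ genuinely lives inside $F$ (via the inclusion $\mathbb{F}_q\subseteq F$ from Proposition \ref{residueClassFieldStructure}) so that the decomposition $x = x(P) + (x-x(P))$ is taking place in $F$, which is needed both to apply the $\mathbb{F}_q$-linearity of $\omega_P$ to pull out $x(P)$ and to make sense of $\nu_P(x-x(P))$. Everything else is a mechanical check against the definitions.
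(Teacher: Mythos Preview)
Your proof is correct and is essentially the standard argument from Stichtenoth's proof of Theorem~2.2.8, which is what the paper cites without reproducing. The paper itself does not give a proof of this proposition, so there is nothing further to compare.
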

This is applicable in our case to places $P_i \in \mathcal{P}$ and $\omega_0$ because we established earlier that $\nu_P((\omega_0)) \geq -1$ for all $P \in \mathcal{P}''$ (see Equation \eqref{omega0Fact}). We note also that $x_1, ..., x_7 \in \mathcal{L}(A)$ and $\supp(A) \cap \mathcal{P}'' = \emptyset$ together imply $\nu_P(x_i) \geq 0$ for all $i=1, ..., 7$ and all $P \in \mathcal{P}''$ due to Definition~\ref{def:riemann_space}, which means that $\nu_P(X) \geq 0$ for all $P \in \mathcal{P}''$. These statements thus hold in particular for $P_i \in \mathcal{P}$. We therefore find
\begin{equation}
    0 = \sum_{i=1}^NX(P_i)\omega_{0P_i}(1).
\end{equation}
Now using the fact (see Definition~\ref{residueClassFieldAndMapDef}) that the residue class map at the place $P_i$ forms a ring homomorphism from $\mathcal{O}_{P_i}$ into the residue class field at $P_i$ (which is $\mathbb{F}_q$ in this case), we get
\begin{equation}
    0=\sum_{i=1}^Nx_1(P_i)x_2(P_i)x_3(P_i)x_4(P_i)x_5(P_i)x_6(P_i)x_7(P_i)\omega_{0P_i}(1)\text{ for all } x_1, ..., x_7 \in \mathcal{L}(A).\label{nearHomogeneous}
\end{equation}

From here, we make use of the following claim.

\begin{claim}\label{seventhRoots}
    For each $i=1, ..., N$, there exist $w_i \in \mathbb{F}_q\setminus\{0\}$ such that $w_i^7 = \omega_{0P_i}(1)$.
\end{claim}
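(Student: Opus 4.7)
The plan is to combine two essentially independent observations: first, that by the very construction of $\mathcal{P}$, the scalar $\omega_{0P_i}(1)$ is nonzero for each $P_i\in\mathcal{P}$, so it lies in the multiplicative group $\mathbb{F}_q^\times$; second, that under our standing hypothesis on $s$, the map $x\mapsto x^7$ is a bijection of $\mathbb{F}_q^\times$, so every element of $\mathbb{F}_q^\times$ has a unique (nonzero) seventh root.

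For the first point, I would recall the characterisation used just above the claim: for a rational place $P_{i_j}\in\mathcal{P}''$, the condition $\nu_{P_{i_j}}((\omega_0))\geq 0$ is equivalent to $\omega_{0P_{i_j}}(1)=0$ (this is the claim from the proof of Theorem~2.2.7 of~\cite{stichtenoth2009algebraic} that was invoked when defining $\mathcal{P}_0$). Since $\mathcal{P}=\mathcal{P}''\setminus\mathcal{P}_0$ is precisely the complement of the set of places where this vanishing occurs, every $P_i\in\mathcal{P}$ satisfies $\omega_{0P_i}(1)\neq 0$. Thus $\omega_{0P_i}(1)\in\mathbb{F}_q^\times$.

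For the second point, I would use that $\mathbb{F}_q^\times$ is a cyclic group of order $q-1=2^s-1$, and therefore the $7$th-power map is a bijection (equivalently, a surjection) if and only if $\gcd(7,2^s-1)=1$. Since the multiplicative order of $2$ modulo $7$ is $3$ (because $2^3=8\equiv 1\pmod 7$), we have $7\mid 2^s-1$ iff $3\mid s$. Under the hypothesis $s\not\equiv 0\pmod 3$ of Theorem~\ref{technicalCodeTheorem} (in particular $s=10$ for the main construction, where $2^{10}-1=1023=3\cdot 11\cdot 31$ is coprime to $7$), we obtain $\gcd(7,2^s-1)=1$, so the $7$th-power map is a bijection on $\mathbb{F}_q^\times$.

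Combining the two, for each $i=1,\ldots,N$ the nonzero element $\omega_{0P_i}(1)\in\mathbb{F}_q^\times$ has a unique preimage $w_i\in\mathbb{F}_q^\times$ under the $7$th-power map, giving $w_i^7=\omega_{0P_i}(1)$ with $w_i\neq 0$, as required. The only subtlety — and the place I would double-check the argument — is ensuring that the restriction on $s$ made in the theorem statement is exactly what is needed here; the number-theoretic computation above shows that $s\not\equiv 0\pmod 3$ is indeed precisely the condition that makes $7$th roots exist in $\mathbb{F}_{2^s}^\times$, which is why this hypothesis appears in Theorem~\ref{technicalCodeTheorem}.
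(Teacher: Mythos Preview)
Your proposal is correct and follows essentially the same approach as the paper: both arguments use that $\mathbb{F}_q^\times$ is cyclic of order $2^s-1$, that $\gcd(7,2^s-1)=1$ when $s\not\equiv 0\pmod 3$ so the seventh-power map is a bijection, and that $\omega_{0P_i}(1)\neq 0$ by construction of $\mathcal{P}$. Your write-up is in fact slightly more explicit than the paper's, spelling out both why $\omega_{0P_i}(1)\neq 0$ and the reason (the multiplicative order of $2$ modulo $7$ being $3$) that $7\mid 2^s-1$ iff $3\mid s$.
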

\begin{proof}
    The multiplicative group of a finite field is cyclic. Therefore the multiplicative group of $\mathbb{F}_q$ is isomorphic to the cyclic group of order $q-1$, denoted $C_{q-1}$, which we will write multiplicatively. Recall that we have $q-1 = 2^s-1$ which is coprime to 7 for $s \not\equiv 0 \pmod 3$\footnote{This claim is exactly why we take $s \not\equiv 0 \pmod 3$ in Theorem \ref{technicalCodeTheorem}.}, and so in particular for $s=10$. It follows that the map \begin{equation}
        \phi:\begin{cases}
            C_{q-1} &\to C_{q-1}\\
            g &\mapsto g^7
        \end{cases}
    \end{equation}
    is a group automorphism. Therefore every element in $\mathbb{F}_q$ has a seventh-root. Note that because $\omega_{0P_i}(1) \neq 0$ for $i=1, ..., N$, we have $w_i \neq 0$ as well.
\end{proof}

We will use this to take seventh-roots of $\omega_{0P_i}(1)$ and absorb these roots into the components of the codespace in Equation \eqref{nearHomogeneous}, thus defining a new code $\tilde{C}$. For this, the following definition is useful.

\begin{definition}\label{componentwiseMult}
    Define an $\mathbb{F}_q$-linear map $m_w:\mathbb{F}_q^N \to \mathbb{F}_q^N$ by componentwise multiplication by the $w_i$, where the $w_i \in \mathbb{F}_q\setminus \{0\}$ are defined as in Claim \ref{seventhRoots}. Explicitly,
    \begin{equation}
        m_w(v)_i = w_iv_i.
    \end{equation}
    Given an $\mathbb{F}_q$-linear subspace $C \subseteq\mathbb{F}_q^N$, define $m_w(C)$ to be the image of $C$ under $m_w$. The space $m_w(C)$ is also an $\mathbb{F}_q$-linear subspace. Because $w_i\neq 0$ for all $i$, it is easy to check that $m_w : C \to m_w(C)$ forms an isomorphism of vector spaces.
\end{definition}

With this, consider the following.
\begin{definition}
\label{def:C_tilde}
    We define the code $\tilde{C}$ as
    \begin{equation}
        \tilde{C}\coloneq m_w(C) \subseteq \mathbb{F}_q^N,
    \end{equation}
    where $C$ is defined as in Equation~\eqref{exlicitCode}, and $m_w$ is as in Definition~\ref{componentwiseMult}.
\end{definition}
From Equation \eqref{exlicitCode}, Equation \eqref{nearHomogeneous}, Claim \ref{seventhRoots} and Definition \ref{componentwiseMult}, $\tilde{C}$, it follows that
\begin{equation}\label{seventhPoly}
    0 = \sum_{i=1}^N(y_1)_i(y_2)_i(y_3)_i(y_4)_i(y_5)_i(y_6)_i(y_7)_i \text{ for all } y_1, ..., y_7 \in \tilde{C}.
\end{equation}
Let us emphasise that all arithmetic in this expression has taken place over $\mathbb{F}_q$. While this expression will ultimately lead to us obtaining the first point in the triorthogonality definition, Equation \eqref{triorthogSecondDef1}, let us make some preparation for satisfying the second point, Equation \eqref{triorthogSecondDef2}. Because the divisor $A$ is positive, the element $1 \in \mathbb{F}_q$ is in the Riemann-Roch space $\mathcal{L}(A)$; $1 \in \mathcal{L}(A)$. The element $1 \in \mathcal{L}(A)$ takes the value $1 \in \mathbb{F}_q$ under the residue class map at every place\footnote{As is discussed before Definition 1.1.14 of \cite{stichtenoth2009algebraic}, given any place $P$, one has $\mathbb{F}_q \subseteq \mathcal{O}_P$ and $\mathbb{F}_q \cap P = \{0\}$. This means that the residue class map embeds a copy of $\mathbb{F}_q$ into the residue class field. The degree of the place is defined as the degree of the extension of the residue class field with respect to this embedding of $\mathbb{F}_q$. Therefore, if the place $P$ is rational, this embedding is exactly the residue class field. Thus, in particular, at every rational place on which $C$ is defined, $1 \in \mathcal{L}(A)$ takes the value 1 under the residue class map.}. This, therefore, means that $1^N$, i.e. all 1's vector of length $N$, is in the code $C$. In turn, this means that the vector $m_w(1^N)$ is in the code $\tilde{C}$. 

Note that $m_w(1^N)_i = w_i$. Applying this to Equation \eqref{seventhPoly} gives us
\begin{equation}\label{preRowTwo}
    0 = \sum_{i=1}^Nw_i^5(y_1)_i(y_2)_i\text{ for all } y_1, y_2 \in \tilde{C}.
\end{equation}

\subsection{Puncturing a Generator Matrix for the Code}\label{puncturingSection}

Let us now consider a generator matrix $\tilde{G}$ for $\tilde{C}$ in Definition~\ref{def:C_tilde}. Explicitly, $\tilde{G}$ is a matrix over $\mathbb{F}_q$ with dimensions $m \times N$ whose rows form a basis over $\mathbb{F}_q$ for $\tilde{C}$.
Here, $m$ is the dimension of the code $\tilde{C}$, which is also the dimension of the code $C$ since $m_w$ forms an isomorphism.
Under the conditions presented in Section~\ref{AGCodeConstruction}, in particular, under the condition~\eqref{hypothesis1}, it always follows from~\eqref{positivityCondition} that
\begin{align}
    \label{positivityCondition_a2}
    a<n'-2-g.
\end{align}
Due to Equations \eqref{positivityCondition_a}, \eqref{boundN}, and~\eqref{positivityCondition_a2}, it holds that
\begin{align}
    1 &\leq [a-(2g-2)]-g-3,\label{hypothesis2}\\
    a&<N.\label{hypothesis3}
\end{align}
Then, the condition~\eqref{positivityCondition_a} implies $g \leq a < N$ and $a \geq 2g-1$, and so from Theorem \ref{AGCodesParams}, we have
\begin{equation}
    m=a+1-g.
\end{equation}

Now, $g \leq a < N$ and $a \geq 2g-1$ just ensured that we had a definite value for $m = \dim C = \dim\tilde{C}$, but by Theorem \ref{AGCodesParams} these also ensure that the dual code to $C$, $C^\perp$, has distance
\begin{equation}
    d^\perp \geq a-(2g-2).
\end{equation}
We wish to show that the dual code to $\tilde{C}$, $(\tilde{C})^\perp$, has the same distance as $C^\perp$. We start by showing the following.
\begin{claim}
    $(\tilde{C})^\perp = m_{w^{-1}}(C^\perp)$ where $w^{-1} \in \mathbb{F}_q^N$ is the vector defined as $(w^{-1})_i = w_i^{-1}$.
\end{claim}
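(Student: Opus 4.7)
The plan is to prove the claim by a direct unwinding of the definition of the dual code together with the defining property of $m_w$. Recall from Definition~\ref{componentwiseMult} that $m_w$ acts as componentwise multiplication, so $\tilde{C} = m_w(C) = \{(w_1 c_1, \ldots, w_N c_N) : c \in C\}$, and since each $w_i \neq 0$, the map $m_w$ has a well-defined inverse $m_{w^{-1}}$ acting as componentwise multiplication by the vector $w^{-1}$ with $(w^{-1})_i = w_i^{-1}$.

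First, I would take an arbitrary $y \in \mathbb{F}_q^N$ and rewrite the condition $y \in (\tilde{C})^\perp$ using the definition of orthogonality: $y \in (\tilde{C})^\perp$ iff $\sum_{i=1}^N y_i \tilde{c}_i = 0$ for every $\tilde{c} \in \tilde{C}$. Substituting $\tilde{c}_i = w_i c_i$ for some $c \in C$ and absorbing the $w_i$ into the $y_i$, this becomes the condition $\sum_{i=1}^N (w_i y_i) c_i = 0$ for every $c \in C$. But this last condition is exactly the statement that $m_w(y) \in C^\perp$.

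Therefore $y \in (\tilde{C})^\perp \iff m_w(y) \in C^\perp \iff y \in m_{w^{-1}}(C^\perp)$, where the final equivalence uses that $m_{w^{-1}}$ is the inverse of $m_w$. This establishes the set equality, and as a bonus it is immediate that $m_{w^{-1}}$ restricts to an $\mathbb{F}_q$-linear isomorphism $C^\perp \to (\tilde{C})^\perp$, which in turn will let us conclude (in the next step of the construction) that $\dim (\tilde{C})^\perp = \dim C^\perp$ and, since $m_{w^{-1}}$ preserves Hamming weight, that the minimum distance of $(\tilde{C})^\perp$ equals that of $C^\perp$. There is no real obstacle here beyond careful bookkeeping with the two componentwise-multiplication maps; the content of the claim is that duals transform covariantly under componentwise rescaling, but by the \emph{inverse} scaling vector.
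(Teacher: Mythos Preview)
Your proof is correct and rests on the same core observation as the paper's: absorbing the scalars $w_i$ across the bilinear pairing converts orthogonality to $\tilde{C}$ into orthogonality to $C$. The only stylistic difference is that the paper proves the single inclusion $m_{w^{-1}}(C^\perp) \subseteq (\tilde{C})^\perp$ and then appeals to a dimension count ($\dim m_{w^{-1}}(C^\perp) = \dim C^\perp = N - \dim C = N - \dim \tilde{C} = \dim (\tilde{C})^\perp$) to upgrade it to an equality, whereas your iff chain gives both inclusions at once and makes the dimension count unnecessary.
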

\begin{proof}
    Consider any $x \in m_{w^{-1}}(C^\perp)$ and $y \in \tilde{C}$. Write $x_i = w_i^{-1}X_i$ and $y_i = w_iY_i$ where $(X_i)_{i=1}^N$ and $(Y_i)_{i=1}^N$ are the components of vectors in $C^\perp$ and $C$ respectively. Then $\sum_{i=1}^Nx_iy_i = 0$ and so $m_{w^{-1}}(C^\perp) \subseteq (\tilde{C})^\perp$. The conclusion then follows from
    \begin{align}
        \dim (\tilde{C})^\perp &= N-\dim\tilde{C}\\
        &=N-\dim C\\
        &= \dim C^\perp\\
        &= \dim m_{w^{-1}}(C^\perp).
    \end{align}
\end{proof}
We note that the map $m_{w^{-1}}$, as well as being an isomorphism of vector spaces, preserves the Hamming weight of vectors, i.e., $|m_{w^{-1}}(v)| = |v|$ for any $v \in \mathbb{F}_q^N$ because $w^{-1}_i \neq 0$ for each $i=1, ..., N$. As such, the distance of $(\tilde{C})^\perp$ is the same as the distance of $C^\perp$, which is then at least $a-(2g-2)$.

Recall that in Equation~\eqref{eq:condition_k}, we have picked an integer $k$ satisfying
\begin{align}
\label{eq:condition_k_2}
    0 < k \leq [a-(2g-2)]-g-3,
\end{align}
so that in particular $k$ is less than the distance of $(\tilde{C})^\perp$. Because $\tilde{G}$ is a generator matrix for $\tilde{C}$, it is also a parity-check matrix for $(\tilde{C})^\perp$. Non-zero codewords of $(\tilde{C})^\perp$ therefore correspond exactly to non-trivial linear combinations of columns of $\tilde{G}$ that equate to zero. In particular, any $k$ columns of $\tilde{G}$ are linearly independent. By considering the submatrix of $\tilde{G}$ formed by its first $k$ columns, there must be some $k$ rows of the submatrix that are linearly independent. As such, via row operations over $\mathbb{F}_q$, $\tilde{G}$ may be rewritten into the following form.

\begin{figure}[h]
\begin{center}
\begin{tikzpicture}
\node at (0,0.5) {$\tilde{G} = $};
\node at (1.3, 0.6) {$1$};
\node at (3.7, -1.7) {$1$};
\node [rotate = -12] at (2.5,-0.5) {$\ddots$};
\node at (6.5,-0.5) {$G_1$};
\node at (6.5,-2.5) {$G_0$};
\node at (2.5,-2.5) {$\mathbf{0}$};
\draw (1,1) rectangle (9,-3);
\draw [-](4,1) -- (4,-3);
\draw [-](1,-2) -- (4,-2);
\draw [dashed](4,-2) -- (9,-2);
\draw [stealth-stealth](0.8,1)--(0.8,-2);
\draw [stealth-stealth](0.8,-2)--(0.8,-3);
\draw [stealth-stealth](1,1.2)--(4,1.2);
\draw [stealth-stealth](4,1.2)--(9,1.2);
\node at (0.5,-0.5) {$k$};
\node at (0.15, -2.5) {$m-k$};
\node at (2.5,1.5) {$k$};
\node at (6.5,1.5) {$n= N-k$};

\end{tikzpicture}
\end{center}
\end{figure}
In words, the upper-left $k \times k$ submatrix of $\tilde{G}$ is an identity matrix, and the lower-left $(m-k)\times k$ submatrix is all zeros. We denote the resulting upper-right $k \times (N-k)$ submatrix as $G_1$ and the lower-right $(m-k)\times (N-k)$ submatrix as $G_0$. We also define $n \coloneq N-k$. Finally, we define $G$ as the right-hand $m \times n$ matrix (formed of $G_1$ and $G_0$).

Let us denote the rows of $\tilde{G}$ as $(\tilde{g}^a)_{a=1}^m$. These rows form a basis for $\tilde{C}$, and so in particular lie in $\tilde{C}$. From Equations \eqref{seventhPoly} and \eqref{preRowTwo}, we therefore have
\begin{align}
    0 &= \sum_{i=1}^N(\tilde{g}^a_i)^4(\tilde{g}^b_i)^2(\tilde{g}^c_i)\\
    0 &= \sum_{i=1}^Nw_i^5\tilde{g}^a_i\tilde{g}^b_i
\end{align}
for all $a,b,c \in \{1, ..., m\}$. With this, it is immediate that $G$ is a triorthogonal matrix in the sense of Definition \ref{triorthogMatDefinition}. For each $a=1, ..., k$, let $\tau_a = w_a^5$, and for each $i=1, ..., n$, let $\sigma_i = w_{i+k}^5$. Letting the rows of $G$ be $(g^a)_{a=1}^m$, we have, for all $a,b,c \in \{1, ..., m\}$,
\begin{align}
    \sum_{i=1}^n(g_i^a)^4(g_i^b)^2(g_i^c) &= \begin{cases}
        1 &\text{ if } 1 \leq a=b=c \leq k\\
        0 &\text{ otherwise}
    \end{cases}\\
    \sum_{i=1}^n\sigma_ig_i^ag_i^b &= \begin{cases}
        \tau_a &\text{ if } 1 \leq a=b \leq k\\
        0 &\text{ otherwise}
    \end{cases}
\end{align}
and because $\sigma_i, \tau_a \neq 0$, we find that $G$ is a triorthogonal matrix.

Having constructed a triorthogonal matrix, we are concerned with the parameters of the quantum code constructed from it according to the quantum code definition of Section~\ref{quantumCodefromMatrix}.
From the argument in Section~\ref{quantumCodefromMatrix}, the dimension of the quantum code is manifest --- it is the value $k$ in the above equations, and showing $k = \Theta(n)$ only comes down to the values for the parameters that will be chosen in Section~\ref{concrete}.

The distance of the code, on the other hand, requires analysis. It was shown in Section \ref{quantumCodefromMatrix} that $d_X \geq d_Z$, so the distance of the quantum code is
\begin{equation}
    d = d_Z = \min_{f \in \mathcal{G}_0^\perp\setminus \mathcal{G}^\perp}|f|.
\end{equation}
We use the simple lower bound for this:
\begin{equation}
    d_Z \geq \min_{f \in \mathcal{G}_0^\perp \setminus \{0\}}|f|
\end{equation}
which is the distance of $\mathcal{G}_0^\perp$ as a classical code\footnote{We denoted this classical distance as $d_{\mathrm{cl}}$ in Section \ref{MSDSummarySection}.}. To demonstrate a lower bound on this, we prove the following.
\begin{lemma}\label{G0CShort}
    Recall that $(P_1, ..., P_N)$ are the set of places on which the algebraic geometry code is defined and these correspond, in order, to the columns of $\tilde{G}$. As in Definition~\ref{def:AG_code}, consider the algebraic geometry code
    \begin{equation}
    \label{eq:C_short}
        C_\mathrm{short} = C_{\mathcal{L}}(P_{k+1}+...+P_N,A-P_1-...-P_k).
    \end{equation}
    Further, let $\tilde{m}_w: \mathbb{F}_q^n \to \mathbb{F}_q^n$ be the map defined by componentwise multiplication by $(w_i)_{k+1}^N$, i.e. for $x \in \mathbb{F}_q^n$, $\tilde{m}_w(x)_i = w_{i+k}x_i$. Then $G_0$ is a generator matrix for the code $\tilde{m}_w(C_\mathrm{short})$.
\end{lemma}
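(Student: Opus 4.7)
The plan is to show that every row of $G_0$ lies in $\tilde{m}_w(C_\mathrm{short})$, compute $\dim C_\mathrm{short}$ via the Riemann-Roch theorem and check that it equals $m-k$ (the number of rows of $G_0$), and then conclude by a dimension count using the linear independence of the rows of $G_0$.

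First, I would unpack what it means for a row of $\tilde{G}$ to have its first $k$ entries zero. Since the rows of $\tilde{G}$ form a basis of $\tilde{C} = m_w(C)$, any such row can be written as $m_w(y)$ for some $y \in C = C_{\mathcal{L}}(D,A)$. Because $w_i \neq 0$ for every $i$, the vanishing of the first $k$ entries of $m_w(y)$ is equivalent to $y(P_i) = 0$ for $i=1,\dots,k$. By Definition~\ref{def:riemann_space} of the Riemann-Roch space this is precisely the condition $y \in \mathcal{L}(A - P_1 - \dots - P_k)$, so restricting $y$ to the remaining places $P_{k+1},\dots,P_N$ gives a codeword of $C_\mathrm{short}$ as defined in Equation~\eqref{eq:C_short}. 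Applying $\tilde{m}_w$ to this restriction recovers the corresponding row of $G_0$, which shows the containment of the row span of $G_0$ in $\tilde{m}_w(C_\mathrm{short})$.

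Next I would compute $\dim C_\mathrm{short}$. The divisor $A - P_1 - \dots - P_k$ has degree $a-k$. The hypothesis $k \leq a-3g-1$ from Equation~\eqref{eq:condition_k_2} gives $a-k \geq 3g+1 \geq 2g-1$, and the already established bound $a < N$ (Equation~\eqref{hypothesis3}) gives $a-k < N-k = n$, so Theorem~\ref{AGCodesParams} applies to $C_\mathrm{short}$ and yields
\begin{equation}
\dim C_\mathrm{short} = (a-k) + 1 - g = m-k,
\end{equation}
using $m = a+1-g$. Since $\tilde{m}_w$ is an isomorphism of vector spaces, $\dim \tilde{m}_w(C_\mathrm{short}) = m-k$ as well. (One should also verify that $\supp(A - P_1 - \dots - P_k) \cap \{P_{k+1},\dots,P_N\} = \emptyset$, which follows from the earlier assumption $\supp(A) \cap \mathcal{P}'' = \emptyset$.)

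Finally I would observe that the $m-k$ rows of $G_0$ are linearly independent over $\mathbb{F}_q$: any linear dependence among them, padded with zeros in the first $k$ coordinates, would be a linear dependence among the last $m-k$ rows of $\tilde{G}$, contradicting the fact that $\tilde{G}$ is a generator matrix of $\tilde{C}$. Combined with the containment established in the first step and the matching dimension $m-k$, this forces the row span of $G_0$ to equal $\tilde{m}_w(C_\mathrm{short})$. The only part that requires any care is the bookkeeping of the Riemann-Roch hypotheses to guarantee the clean dimension formula $\dim C_\mathrm{short} = m-k$; everything else is essentially a direct translation between the algebraic-geometric and linear-algebraic descriptions.
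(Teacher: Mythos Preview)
Your proof is correct and follows essentially the same approach as the paper: show the rows of $G_0$ land in $\tilde{m}_w(C_\mathrm{short})$, compute $\dim C_\mathrm{short}=m-k$ via Riemann--Roch (the paper applies Theorem~\ref{riemannRochTheorem} directly rather than Theorem~\ref{AGCodesParams}, but this is cosmetic), and finish by a dimension count using the linear independence of the rows. One small notational slip: when you write ``$y(P_i)=0$'' and then ``$y\in\mathcal{L}(A-P_1-\dots-P_k)$'', you are conflating the codeword $y\in C$ with a function $x\in\mathcal{L}(A)$ satisfying $\mathrm{ev}(x)=y$; the paper makes this passage explicit by invoking the evaluation map, but your intended meaning is clear and the argument goes through.
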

\begin{proof}
    We first start by noting that if $x$ is in the Riemann-Roch space $\mathcal{L}(A)$, then $x(P_1) = ... = x(P_k) = 0$ if and only if $x \in \mathcal{L}(A-P_1-...-P_k)$, which is a subspace of $\mathcal{L}(A)$.

    Let us consider the evaluation map that defines the algebraic geometry code $C$:
    \begin{equation}
        \mathrm{ev}:\begin{cases}
            \mathcal{L}(A) &\to C = C_{\mathcal{L}}(D,A)\\
            x &\mapsto (x(P_1), ..., x(P_N)).
        \end{cases}
    \end{equation}
    Due to $N > a$ shown in Equation~\eqref{hypothesis3}, this map is an isomorphism (see the proofs of Theorem 2.2.2 and Corollary 2.2.3 of \cite{stichtenoth2009algebraic}). Moreover, if we restrict $\mathrm{ev}$ to the subspace $\mathcal{L}(A-P_1-....-P_k)$, it gives an isomorphism
    \begin{equation}\label{restrictedEvaluation}
        \mathrm{ev}|_{\mathcal{L}(A-P_1-...-P_k)}:\mathcal{L}(A-P_1-...-P_k) \to \{x \in C: x_1 = ... = x_k = 0\}.
    \end{equation}
    Indeed, the fact that $\mathrm{ev}|_{\mathcal{L}(A-P_1-...-P_k)}$ maps into this space, and surjectivity, follow from the first sentence of the proof, and injectivity is true because a restriction of an injective map is injective.

    Recall that $k$ has been chosen to be less than $a-(2g-2)$ as shown in Equation~\eqref{eq:condition_k_2}. Therefore, $a-k \geq 2g-1$, and so by the Riemann-Roch theorem (Theorem~\ref{riemannRochTheorem}),
    \begin{equation}
        l(A-P_1-...-P_k) = a-k+1-g = m-k.
    \end{equation}
    Thus, both spaces in Equation \eqref{restrictedEvaluation} have dimension $m-k$. The last $m-k$ rows of $\tilde{G}$ are linearly independent and in the space $\tilde{m}_w\left(\{x \in C: x_1 = ... = x_k = 0\}\right)$, and thus form a basis for this space. By deleting the first $k$ zeros, this space is isomorphic to $\tilde{m}_w(C_\mathrm{short})$, and so the rows of $G_0$ form a basis for this space, thus establishing the lemma.
\end{proof}
Given that $G_0$ is a generator matrix for the classical code $\tilde{m}_w(C_\mathrm{short})$, we in fact have $\mathcal{G}_0 = \tilde{m}_w(C_\mathrm{short})$ and therefore
\begin{equation}
    \mathcal{G}_0^\perp = (\tilde{m}_w(C_\mathrm{short}))^\perp = \tilde{m}_{w^{-1}}(C_\mathrm{short}^\perp)
\end{equation}
where $\tilde{m}_{w^{-1}}:\mathbb{F}_q^n\to\mathbb{F}_q^n$ is the linear map enacting componentwise multiplication by $(w_{i+k}^{-1})_{i=1}^n$. The distance of the classical code $\mathcal{G}_0^\perp$ is therefore the same as the distance of $C_\mathrm{short}^\perp$, since $w_{i+k} \neq 0$ for $i=1, ..., n$. We can show that all the hypotheses of Theorem \ref{AGCodesParams} apply to $C_\mathrm{short}$. Indeed, we had $a < N$ and so $a-k < N-k$. Moreover, we chose $k \leq a-(2g-2)-g-3$ and so $a-k \geq 2g-1$ and $a-k \geq g$. Therefore Theorem \ref{AGCodesParams} applies in full, and we deduce the distance of $C_\mathrm{short}^\perp$, and therefore that of $\mathcal{G}_0^\perp$, as at least $a-k-(2g-2)$. This means that the distance of the quantum code is
\begin{equation}
\label{eq:distance_bound}
    d = d_Z \geq a-k-(2g-2)
\end{equation}
which will turn out to be $\Theta(n)$ in the following section.

Lastly, let us discuss the efficient classical decoding of the code $\mathcal{G}_0^\perp$, which is useful in the constant-overhead magic state distillation protocol discussed in Section \ref{MSDSummarySection}. Specifically, it is best, for the practical implementation of the protocols, to be able to efficiently decode the classical code $\mathcal{G}_0^\perp$ from a number of errors $t = \Theta(d_{\mathrm{cl}})$, where $d_{\mathrm{cl}}$ is the (classical) distance of the code $\mathcal{G}_0^\perp$, which we recall was for the correction of $Z$-errors. From Lemma \ref{G0CShort}, we know that the code $\mathcal{G}_0$ is (up to componentwise multiplication by known non-zero field elements) the algebraic geometry code $C_\mathrm{short}$ in Equation~\eqref{eq:C_short}.
We, therefore, know that $\mathcal{G}_0^\perp$ is (up to componentwise multiplication by known non-zero field elements) $C_\mathrm{short}^\perp$.
We then refer to the details of efficient decoding of such dual algebraic geometry codes in Section \ref{AGPrelims3}.
We let
\begin{align}
\label{eq:A_1}
    A_1 \in \mathrm{Div}(F)
\end{align}
be a divisor supported at only one place --- a rational place not in the set $\{P_{k+1}, ..., P_N\}$, so that $\supp A_1 \cap \supp (P_{k+1}+...+P_N) = \emptyset$.
The decoding radius $t$ for the efficient decoder is determined by whether $t$ and $A_1$ meet the conditions for efficient classical decoding in Equations \eqref{decodingConditionFirst} to \eqref{decodingConditionLast} in Section \ref{AGPrelims3}
\begin{align}
\label{eq:condition_A_1_1}
     \deg(A_1) &< \deg(A-P_1-...-P_k)-(2g-2) - t=a-k-(2g-2)-t
\end{align}
and
\begin{equation}
\label{eq:condition_A_1_2}
    l(A_1) \geq \deg(A_1)+1-g > t
\end{equation}
where the first inequality is true by Riemann's theorem --- see the paragraph after Theorem \ref{riemannRochTheorem}.
From these inequalities, it follows that there exists such an integer $\deg(A_1)$ if $t$ is chosen as
\begin{align}
\label{eq:t_bound}
    0<t\leq\frac{[a-k-(2g-2)]-g-1}{2}.
\end{align}
Note that under the condition~\eqref{eq:condition_k}, i.e., $[a-k-(2g-2)]-g\geq 3$, such an integer $t$ always exists. Moreover, in the following section, it will turn out that it is possible to achieve $t=\Theta(n)$.

\subsection{A Concrete Construction}\label{concrete}

The aim of this section is to make concrete the construction of the triorthogonal matrix by assigning values to each parameter. With these values, each of the hypotheses used in the above construction will be satisfied, and the final parameters of the quantum code will be asymptotically good, i.e. $k,d = \Theta(n)$. Our aim for now is only to show that this can be achieved, and not to optimise the parameters up to constants.

Our function fields were built over the field $\mathbb{F}_q$ with $q=2^s$, where our primary case is $s=10$, i.e., $q=2^{10} = 1024$, and we know in this case that the Ihara constant is
\begin{equation}
    A(1024) = 31.
\end{equation}
Therefore, there is a sequence of function fields over $\mathbb{F}_{2^{10}}$ whose genera $g$ diverge to infinity, and whose number of rational places is at least $(31-\delta)g$ for any $\delta>0$.
We may thus take the initial number of rational places employed to be
\begin{align}
\label{eq:n_prime_concrete}
    n' = \left\lfloor\frac{123}{4}g\right\rfloor. 
\end{align}
For any such sequence of function fields, we define a sequence of codes, and in each, we pick the following values:
\begin{align}
\label{eq:a_concrete}
    a &= \left\lfloor\frac{9}{2}g\right\rfloor;\\
\label{eq:k_concrete}
    k &= \left\lfloor\frac{5}{4}g\right\rfloor.
\end{align}

Let us show how these satisfy all the hypotheses of our construction and lead to asymptotically good quantum codes.
It is clear that the choices of $n^\prime$, $a$, and $k$ in Equations~\eqref{eq:n_prime_concrete},~\eqref{eq:a_concrete}, and~\eqref{eq:k_concrete} satisfy the conditions~\eqref{hypothesis1},~\eqref{positivityCondition_a},~\eqref{positivityCondition}, and~\eqref{eq:condition_k} for every sufficiently large $g$.
We then set $n'' = n'-2$ and find a set of places defining the algebraic geometry code $C$ of size $N$, where $n'' \geq N \geq n'' - g$, i.e.,
\begin{equation}
    \left\lfloor\frac{123}{4}g\right\rfloor -2 \geq N \geq \left\lfloor\frac{119}{4}g\right\rfloor -2.
\end{equation}
Because $n=N-k$, we have
\begin{equation}
    \left\lfloor\frac{118}{4}g\right\rfloor -1 \geq n \geq \left\lfloor\frac{114}{4}g\right\rfloor-3,
\end{equation}
and so in particular
\begin{align}
    n = \Theta(g)
\end{align}
as $g \to \infty$.
Due to Equation~\eqref{eq:distance_bound}, the distance of our quantum code is then
\begin{align}
    d=d_Z &\geq \left\lfloor\frac{9}{2}g\right\rfloor - \left\lfloor\frac{5}{4}g\right\rfloor -2g+2\\
    &\geq \left\lfloor\frac{5}{4}g\right\rfloor+1.
\end{align}
Therefore, as $g\to\infty$, we have
\begin{align}
    k=\Theta(g)=\Theta(n),\\
    d=\Theta(g)=\Theta(n),
\end{align}
as claimed.

The decoding radius $t$ for the efficient decoder can be determined as follows.
For the divisor $A_1$ discussed in Equation~\eqref{eq:A_1},
we set
\begin{align}
    \deg(A_1) &= \left\lfloor \frac{9}{8}g\right\rfloor,\\
    t &=\left\lfloor\frac{1}{8}g\right\rfloor.
\end{align}
Then, the conditions shown in Equations~\eqref{eq:condition_A_1_1},~\eqref{eq:condition_A_1_2}, and~\eqref{eq:t_bound} are satisfied.
Thus, we have
\begin{align}
    t=\Theta(g)=\Theta(n),
\end{align}
as $g\to\infty$.

In summary, given any sequence of function fields over $\mathbb{F}_{1024}$ with diverging genera, and whose number of rational places exceeds 30 times their genera, we can construct asymptotically good triorthogonal quantum codes over $\mathbb{F}_{1024}$, where the bounds on the parameters follow from Theorem \ref{AGCodesParams}.
Let us give an explicit instantiation of our asymptotically good triorthogonal quantum codes.

\begin{example}
    In \cite{garcia1995tower}, a sequence of function fields over $\mathbb{F}_q$, where $q=l^2$ is a square prime power, is presented as $F_1\coloneq\mathbb{F}_q(x_1)$ and $F_{n+1}\coloneq F_n(z_{n+1})$ for $n\geq 1$, where $z_{n+1}$ satisfies the equation $z_{n+1}^l+z_{n+1}=x_{n}^{l+1}$ with $x_n\coloneqq z_n/x_{n-1}\in F_n$ for $n\geq 2$. The sequence of function fields is optimal in the sense that its number of rational places to genera tends to $l-1$, thus meeting the Ihara constant $A(q) = l-1$. For each $i \geq 3$, the sequence of function fields $F_i$ have genera $g_i$ (see Theorem 2.10 of \cite{garcia1995tower}) and number of rational places $N^{(1)}_i$ (Proposition 3.1 of \cite{garcia1995tower}) satisfying
    \begin{equation}
        g_i = \begin{cases}
            l^i + l^{i-1} - l^{\frac{i+1}{2}}-2l^{\frac{i-1}{2}}+1, &\text{ if } i \equiv 1 \pmod 2\\
            l^i + l^{i-1} - \frac{1}{2}l^{\frac{i}{2}+1}-\frac{3}{2}l^{\frac{i}{2}} - l^{\frac{i}{2}-1}+1, &\text{ if } i \equiv 0 \pmod 2
        \end{cases}
    \end{equation}
    and
    \begin{equation}
        N^{(1)}_i \geq (l^2-1)l^{i-1}+2l.
    \end{equation}
    For every $i \geq 3$, we have $N_i^{(1)} > 31g_i$, and $g_i$ is large enough that the conditions~\eqref{hypothesis1},~\eqref{positivityCondition_a},~\eqref{positivityCondition}, and~\eqref{eq:condition_k} are satisfied. Thus, every member of this family of function fields may be used in our construction. We apply this sequence of function fields to the above construction of $[[n_i,k_i,d_i]]$ triorthogonal quantum codes, where the number of physical qudits $n_i$ satisfies $\left\lfloor \frac{123}{4}g_i\right\rfloor - k_i - 2 \geq n_i \geq \left\lfloor \frac{119}{4}g_i\right\rfloor - k_i - 2$, the number of logical qudits is $k_i=\lfloor \frac{5}{4}g_i\rfloor$, and the distance satisfies $d_i\geq \lfloor \frac{5}{4}g_i\rfloor+1$; we also take the decoding radius as $t_i=\lfloor \frac{1}{8}g_i\rfloor$.
    With $q=1024$, i.e, $l=32$, we then obtain a family of $[[n_i,k_i,d_i]]$ triorthogonal quantum codes on $1024$-dimensional qudits with decoding radius $t_i$ satisfying, for all $i\in\{3,5,7,\ldots\}$,
    \begin{align}
        \left\lfloor \frac{118}{4}\left(33\times 32^{i-1}-34\times 32^{\frac{i-1}{2}}+1\right)\right\rfloor -1 \geq \; &n_i \geq \left\lfloor \frac{114}{4}\left(33\times 32^{i-1}-34\times 32^{\frac{i-1}{2}}+1\right)\right\rfloor-3,\\
        &k_i=\left\lfloor\frac{5}{4}\left(33\times 32^{i-1}-34\times 32^{\frac{i-1}{2}}+1\right)\right\rfloor,\\
        &d_i\geq\left\lfloor\frac{5}{4}\left(33\times 32^{i-1}-34\times 32^{\frac{i-1}{2}}+1\right)\right\rfloor+1,\\
        &t_i=\left\lfloor\frac{1}{8}\left(33\times 32^{i-1}-34\times 32^{\frac{i-1}{2}}+1\right)\right\rfloor,
    \end{align}
    and for all $i\in\{4,6,8,\ldots\}$,
    \begin{align}
         \left\lfloor \frac{118}{4}\left(33 \times 32^{i-1}-561\times 32^{\frac{i}{2}-1}+1\right)\right\rfloor -1 \geq \; &n_i \geq \left\lfloor \frac{114}{4}\left(33 \times 32^{i-1}-561\times 32^{\frac{i}{2}-1}+1\right)\right\rfloor -3 ,\\
        &k_i=\left\lfloor \frac{5}{4}\left(33 \times 32^{i-1}-561\times 32^{\frac{i}{2}-1}+1\right)\right\rfloor,\\
        &d_i\geq\left\lfloor \frac{5}{4}\left(33 \times 32^{i-1}-561\times 32^{\frac{i}{2}-1}+1\right)\right\rfloor+1,\\
        &t_i=\left\lfloor \frac{1}{8}\left(33 \times 32^{i-1}-561\times 32^{\frac{i}{2}-1}+1\right)\right\rfloor.
    \end{align}
    We emphasise that the code parameters in this example are not optimised, and moreover this is only one example of a possible function field --- many other examples may be considered. Our contribution is to provide a general recipe to construct the asymptotically good triorthogonal codes on prime-power dimensional qudits to make such optimisation possible in future work, and importantly, our magic state distillation protocol can arbitrarily suppress the error rate of magic states with constant overhead using such families of asymptotically good triorthogonal codes.
\end{example}

We can finally comment on why we have picked $q=2^{10}$ for our main construction throughout the paper. It is clear that we wish to pick $q=2^s$ for some $s$ to ultimately obtain a qubit code. For each chosen value of $s$, the inequality~\eqref{positivityCondition}, i.e., $n'-4+g-7a \geq 0$, places the most restrictive lower bound on the number of rational places $n'$ in our function fields. It was necessary for us to pick the parameter $a$ with (at the very least) $a/g$ being a constant exceeding 3 so that the decoding radius $t$ in Equation~\eqref{eq:t_bound} may equal $\Theta(g)$ as $g$ increases\footnote{It is interesting to note that if one does not require the efficient decoding --- just the construction of the asymptotically good triorthogonal code --- one needs only $a/g > 2$ for the distance $d$ in Equation~\eqref{eq:distance_bound} to grow as $g$ increases, which in fact allows a construction over the field $\mathbb{F}_{2^8} = \mathbb{F}_{256}$.}. Putting these together, we therefore require an $s$ for which the Ihara constant is $A(2^s) > 20$. From Theorem 1.1 of \cite{bassa2012towers}, it in fact follows that $A(2^9) = 20.217\ldots$; however, $s=9$ is not usable because of the requirement that $s \not\equiv 0 \pmod 3$, which came from the need to be able to take seventh-roots in our field $\mathbb{F}_q$ in Claim \ref{seventhRoots}. 

There were no further restrictions on the value of $s$ in the paper --- we note that Lemma \ref{gatelemma} was proved for $s \geq 5$ which is certainly implied by the restriction $s \geq 10$ of Theorem \ref{technicalCodeTheorem}. We know that $A(q) = \sqrt{q}-1$ for square prime powers $q$, and that $A(2^{2a+1}) > 2^a-1$ for integers $a \geq 1$ from Theorem 1.1 of \cite{bassa2012towers}. Thus, for all values of $s$ which we consider in Theorem \ref{technicalCodeTheorem}, we in fact have $A(2^s) \geq 31$. This establishes Theorem \ref{technicalCodeTheorem}, and so in particular Theorem \ref{quantumCodeTheorem}.

\section{Analysis of Error Suppression and Overhead}\label{errorAnalysis}

In this section, we analyse the error and the overhead of the distillation protocol presented in this paper.
We consider the task of magic state distillation under the local stochastic error model described in Section~\ref{sec:task}.
As described in Sections~\ref{MSDProtocolMainSection}, our distillation protocol, using $[[n,k,d]]$ triorthogonal codes on $q$-dimensional qudits with $q=2^s$, starts with $\zeta=Cn$ noisy $CCZ$ states $\ket{CCZ}$ at physical error rate $p_\mathrm{ph}$, converts them into $n$ noisy magic states $\ket{M}$ on qudits, distils $k$ magic states $\ket{M}$, and converts them into $l=k$ $CCZ$ states.
In particular, as presented in Section~\ref{MSDProtocolMainSection}, our protocol is designed for $s=10$, $C=O(1)$, $k=\Theta(n)$, and $d=\Theta(n)$.
The efficient decoder of the algebraic geometry codes (i.e., of our triorthogonal codes) has the decoding radius of $t=\Theta(n)$ and thus can correct any $Z$-type errors (which are relevant for magic state distillation) up to weight $t$ on the $n$ qudits.
Note that our protocol does not involve post-selection since we use the decoder for error correction rather than error detection (see also Section~\ref{MSDProtocolMainSection}).

In this case, we show the following bounds on errors for the task of magic state distillation defined in Section~\ref{sec:task}.
\begin{theorem}
\label{thm:error_rate}
Consider a family of protocols for magic state distillation using $[[n,k,d]]$ codes for $q$-dimensional qudits with decoding radius $t$, which starts with $\zeta=Cn$ noisy $CCZ$ states $\ket{CCZ}$ at physical error rate $p_\mathrm{ph}$, converts them into $n$ noisy magic states $\ket{M}$ on qudits, distils $k$ magic states $\ket{M}$ without post-selection, and converts them into at least $\xi(\zeta)=k$ $CCZ$ states.
Under the local stochastic error model,
if the codes used in such a protocol satisfy
\begin{align}
    k&=\Theta(n),\\
    t&=\Theta(n),
\end{align}
then the protocol has a constant threshold $p_\mathrm{th}>0$ such that if the physical error rate of the $\zeta$ initial $CCZ$ states is below the threshold, i.e., $p_\mathrm{ph}<p_\mathrm{th}$, then the protocol outputs $\xi(\zeta)$ final $CCZ$ states achieving a target error rate $\epsilon$ exponentially suppressed as
\begin{align}
\label{eq:epsilon}
\epsilon=\left(\frac{p_\mathrm{ph}}{p_\mathrm{th}}\right)^{O(\zeta)}=\exp(-O(\zeta))
\end{align}
within a constant overhead
\begin{align}
\label{eq:overhead}
\frac{\zeta}{\xi(\zeta)}=O(1)
\end{align}
as $\zeta\to\infty$.
\end{theorem}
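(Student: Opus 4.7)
The plan is to propagate the local stochastic error through each stage of the protocol and then combine the decoder's linear correction radius $t=\Theta(n)$ with a union bound to obtain exponential suppression of the decoder failure probability in $n$, hence in $\zeta$. Throughout, I will track a single effective parameter for the local stochastic $Z$-error on the current set of magic states.

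Twirling (Section~\ref{twirling}) acts locally with noiseless Cliffords on each input $\ket{CCZ}$, so the $Cn$ post-twirling states remain local stochastic with parameter $p_\mathrm{ph}$ and now carry only $Z$-type errors. In the subsequent $\ket{CCZ}\to\ket{U_s}$ conversion, each $\ket{U_s}$ consumes a disjoint set of at most $C$ noisy $\ket{CCZ}$'s together with noiseless Cliffords, and since $Z$-error propagation through the stabiliser backbone is bounded, $\ket{U_s}_i$ is faulty only if one of its $C$ progenitors is. For any $T\subseteq\{1,\ldots,n\}$, the disjointness and~\eqref{eq:physical_error_rate} give
\begin{align*}
\Pr\bigl\{\ket{U_s}_i\text{ faulty for all }i\in T\bigr\}\leq(Cp_\mathrm{ph})^{|T|},
\end{align*}
so the $n$ output $\ket{M}$ states are local stochastic with updated parameter $p_1\coloneqq Cp_\mathrm{ph}$.

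In the distillation subroutine of Section~\ref{MSDfromCode}, gate teleportation of $U$ through one noisy $\ket{M}_i$ deposits its $Z$-error on the $i$th physical qudit of the code and nowhere else, since the surrounding Cliffords are noiseless; hence the $n$ physical qudits carry a $Z$-type local stochastic error $E$ of parameter $p_1$. The $X$-stabiliser measurement returns its syndrome, and by Lemma~\ref{G0CShort} together with Section~\ref{AGPrelims3} the polynomial-time decoder for $\mathcal{G}_0^\perp$ deterministically recovers $E$ whenever $|E|\leq t=\Theta(n)$. Conditioned on that event, the post-correction state is the ideal encoded $\overline{\ket{M}^{\otimes k}}$, and the noiseless decode and $\ket{M}\to\ket{CCZ}$ conversion of Section~\ref{CCZConversion} yield $\ket{CCZ}^{\otimes k}$ exactly; hence the joint infidelity in~\eqref{eq:target_infidelty} is at most $\Pr(|E|>t)$. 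Writing $t=\delta n$ for the constant $\delta>0$ supplied by $t=\Theta(n)$ and using $\binom{n}{\delta n}\leq 2^{nH_2(\delta)}$,
\begin{align*}
\Pr(|E|>t)\leq\binom{n}{t+1}(Cp_\mathrm{ph})^{t+1}\leq Cp_\mathrm{ph}\bigl(2^{H_2(\delta)/\delta}Cp_\mathrm{ph}\bigr)^{\delta n},
\end{align*}
which, on setting $p_\mathrm{th}\coloneqq 1/\bigl(C\,2^{H_2(\delta)/\delta}\bigr)>0$, becomes $(p_\mathrm{ph}/p_\mathrm{th})^{\Theta(n)}=\exp(-\Theta(\zeta))$ whenever $p_\mathrm{ph}<p_\mathrm{th}$, giving~\eqref{eq:epsilon}; the overhead bound~\eqref{eq:overhead} is then immediate from $\zeta/\xi(\zeta)=Cn/k=O(1)$, since $k=\Theta(n)$ and $C=O(1)$.

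The main obstacle is the propagation step: one must verify that the local stochastic property of the initial $CCZ$ states passes through twirling, the $\ket{CCZ}\to\ket{U_s}$ conversion, and gate teleportation without inflating the effective parameter by a factor scaling with $n$. The key locality claims --- that each $\ket{U_s}$ draws on a disjoint set of at most $C$ progenitor $\ket{CCZ}$'s, and that a noiseless Clifford backbone confines the $Z$-error of a single magic state to its own target qudit --- are both true but require careful checking of the gadgets in Sections~\ref{MSDfromCode} and~\ref{CCZConversion}. Once they are established, the remainder is a standard entropy-versus-threshold calculation, and the crucial new ingredient relative to the conventional analysis is simply that the linear distance $d=\Theta(n)$ (hence $t=\Theta(n)$) lets a single round of distillation suppress the error rate exponentially in $\zeta$ rather than polylogarithmically.
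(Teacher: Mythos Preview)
Your proposal is correct and follows essentially the same approach as the paper's proof: propagate the local stochastic error parameter through twirling and the $\ket{CCZ}\to\ket{M}$ conversion to get a per-qudit parameter $Cp_\mathrm{ph}$, bound the decoder failure probability by $\binom{n}{t+1}(Cp_\mathrm{ph})^{t+1}$ via a union bound, and use the binary-entropy estimate on the binomial coefficient together with $t=\Theta(n)$ to extract a constant threshold $p_\mathrm{th}=1/\bigl(C\,2^{H_2(\delta)/\delta}\bigr)$ and exponential suppression; the overhead bound $Cn/k=O(1)$ is identical. Your write-up is, if anything, slightly more explicit than the paper's about the locality checks (disjointness of progenitors, confinement of teleported $Z$-errors), which the paper leaves implicit.
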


\begin{remark}
Even if we choose $\ket{T}$ as the target magic state to be distilled, the same bounds as~\eqref{eq:epsilon} and~\eqref{eq:overhead} for $\ket{CCZ}$ also hold for $\ket{T}$ up to modification of the constant factors.
To see this, we notice that $\ket{T}$ and $\ket{CCZ}$ can be transformed into each other exactly by stabiliser operations at finite conversion rates~\cite{beverland2020lower}.
In particular, we can convert four $\ket{T}$ into $\ket{CCZ}$ using the protocol in~\cite{PhysRevA.87.022328,PhysRevA.87.042302}; conversely, we can use a protocol in~\cite{Gidney2019efficientmagicstate} to transform $\ket{CCZ}\otimes\ket{T}$ into $\ket{T}^{\otimes 3}$, using a single copy of $\ket{T}$ as a catalyst.
For the latter catalytic transformation, the initial single copy of $\ket{T}$ needs to be distilled by conventional protocols for magic state distillation with a polylogarithmic overhead, e.g., by the protocols in~\cite{bravyi2005universal,bravyi2012magic,hastings2018distillation}.
However, this additional overhead of distilling the single $\ket{T}$ can be made negligible if we distil many $\ket{T}$s using our constant-overhead protocol for distilling $\ket{CCZ}$.
Hence, the choice of target magic states, $\ket{CCZ}$ or $\ket{T}$, only affects the constant factors in Theorem~\ref{thm:error_rate}.
\end{remark}

To prove Theorem~\ref{thm:error_rate}, we analyze the error rate at each step of our protocol.
Our protocol works on $\zeta=Cn$. Using $C$ copies of initial $\ket{CCZ}$ at physical error rate $p_{\mathrm{ph}}$, our protocol prepares a single copy of $\ket{M}$.
At this step, due to the union bound, the error rate of the qudit preparation operation to prepare a single-qudit state $\ket{M}$ is upper bounded by $Cp_\mathrm{ph}$, where the preparation operations of $\ket{M}$ undergo the local stochastic error model since the preparation operations of $\ket{CCZ}$ do.
Then, using the $[[n,k,d]]$ triorthogonal code, from $n$ copies of $\ket{M}$ in the previous step, our protocol prepares $\ket{M}^{\otimes k}$.
At this step, any errors with weight at most $t$ on $n$ qudits can be corrected; i.e., due to the union bound, the error rate of the preparation operation of $\ket{M}^{\otimes k}$ is upper bounded by the sum of probabilities of having $t+1$ (or more) errors, i.e., $\binom{n}{t+1}(Cp_\mathrm{ph})^{t+1}$.
Finally, our protocol prepares $\ket{CCZ}^{\otimes k}$ from $\ket{M}^{\otimes k}$, with the error rate upper bounded by $\binom{n}{t+1}(Cp_\mathrm{ph})^{t+1}$ due to the previous step.

To summarize, if we have $\zeta=Cn$ initial $CCZ$ states, our protocol outputs $k$ final $CCZ$ states and achieves the target error rate
\begin{align}
    \epsilon&\leq\binom{n}{t+1}(Cp_\mathrm{ph})^{t+1}\\
    &\leq 2^{nh\left(\frac{t+1}{n}\right)}(Cp_\mathrm{ph})^{t+1}\\
    \label{eq:error_bound}
    &=\left(\frac{p_\mathrm{ph}}{1/C2^{\frac{n}{t+1}h\left(\frac{t+1}{n}\right)}}\right)^{t+1},
\end{align}
where $h(x)\coloneqq-x\log_2(x)-(1-x)\log_2(1-x)$ is the binary entropy function, and the second inequality follows from $\binom{n}{k}\leq 2^{nh(k/n)}$.
Then, due to $t=\Theta(n)$, there exists a threshold
\begin{align}
    p_\mathrm{th}\geq\liminf_{n\to\infty}\frac{1}{C2^{\frac{n}{t+1}h\left(\frac{t+1}{n}\right)}}>0
\end{align}
such that if $p_{\mathrm{ph}}<p_\mathrm{th}$, then it holds for sufficiently large $n$ that
\begin{align}
\label{eq:epsilon_Cn}
    \epsilon\leq\left(\frac{p_\mathrm{ph}}{p_\mathrm{th}}\right)^{t+1}=\exp(-O(\zeta)),
\end{align}
where the right-hand side follows from $t=\Theta(n)$, $\zeta=Cn$, and $C=O(1)$.
The constant overhead follows from
\begin{align}
\label{eq:overhead_Cn}
    \frac{\zeta}{\xi(\zeta)}=\frac{Cn}{k}=O(1),
\end{align}
where we use $k=\Theta(n)$ and $C=O(1)$ on the right-hand side.

Lastly, we remark that the linear decoding radius $t=\Theta(n)$ (hence, the linear distance) is essential for achieving the constant-overhead magic state distillation with a constant threshold.
If $t/n$ vanishes as $n$ increases (even if $k/n$ is non-vanishing and $t$ grows), we would need code concatenation (i.e., repeatedly using a fixed code in many rounds) as in the conventional protocols for magic state distillation~\cite{bravyi2005universal,bravyi2012magic,hastings2018distillation}, which incurs polylogarithmic overhead.
With the code concatenation, if we had a code family with $k/n\to 1$ like the quantum Hamming codes~\cite{yamasaki2024time}, then one might be able to achieve the constant overhead with error suppression; however, the rate $k/n$ of triorthogonal codes is upper bounded by $1/2$ and thus can never approach to $1$ at least in the qubit case~\cite{nezami2022classification}, and we do not know how to achieve $k/n\to 1$ also in the qudit case.
By contrast, our error bound~\eqref{eq:error_bound} shows that the linearly growing $t$ makes it possible to prove the existence of a constant threshold for our single-round protocol using codes of growing sizes.
Thus, with this single-round protocol design, our construction of asymptotically good triorthogonal codes has led to the constant-overhead magic state distillation.
We emphasise that we do not aim to optimise the constant factors of the theoretical bounds, so one would need numerical simulation to estimate and optimise the overhead and the threshold in practice as we have discussed in Section~\ref{sec:discussion}, but our contribution is to provide the general recipe and rigorous proofs to make such optimisation of constant-overhead magic state distillation possible.

\section*{Acknowledgements}

All authors extend their thanks to N.\ Rengaswamy for discussions on CSS-T codes. A.\ Wills thanks The University of Tokyo, in particular the whole group of H.\ Yamasaki, for their funding of and hospitality during his visit to the institution, which led to this work.
H.\ Yamasaki was supported by JST PRESTO Grant Number JPMJPR201A, JPMJPR23FC, JSPS KAKENHI Grant Number JP23K19970, and MEXT Quantum Leap Flagship Program (MEXT QLEAP) JPMXS0118069605, JPMXS0120351339\@.

\bibliographystyle{unsrt}
\bibliography{references}

\pagebreak

\appendix

\section{Derivation of the Multinomial Formula}\label{multinomialDerivation}

In this section, we provide a proof of a formula used in Section \ref{quantumCodefromMatrix} that was crucial to demonstrate the transversality of the non-Clifford gate $U=U_1^{(7)}$. This formula is as given in the following lemma.
\begin{lemma}\label{multinomialLemma}
    Let $y_1, ..., y_m \in \mathbb{F}_q$, where $q$ is a power of two. Then, with all arithmetic taking place in $\mathbb{F}_q$, we have
    \begin{equation}
        \left(\sum_{a=1}^my_a\right)^7 = \sum_{a=1}^my_a^7 + \sum_{a \neq b}\left[y_a^6y_b + y_a^5y_b^2 + y_a^4y_b^3\right] + \sum_{\substack{a,b,c\\\text{pairwise distinct}}}y_a^4y_b^2y_c.
    \end{equation}
\end{lemma}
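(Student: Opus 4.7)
The plan is to apply the multinomial theorem and then use the standard generalisation of Lucas' theorem to identify which coefficients survive in characteristic $2$. First, I would expand
\begin{equation}
\left(\sum_{a=1}^m y_a\right)^7 = \sum_{\substack{(n_1,\ldots,n_m) \in \mathbb{Z}_{\geq 0}^m \\ n_1 + \cdots + n_m = 7}} \binom{7}{n_1, \ldots, n_m}\, y_1^{n_1} \cdots y_m^{n_m},
\end{equation}
and, since $\mathbb{F}_q$ has characteristic $2$, reduce to identifying which multinomial coefficients are odd. By Lucas' theorem (applied digit-wise in base $2$, or equivalently by Kummer's theorem on carries), $\binom{7}{n_1, \ldots, n_m}$ is odd precisely when the binary supports of the $n_a$'s are pairwise disjoint subsets of $\{2^0, 2^1, 2^2\}$ whose union is the full binary support of $7 = (111)_2$.

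Next, I would parametrise the surviving terms by functions $\phi \colon \{0,1,2\} \to \{1,\ldots,m\}$, where $\phi(i) = a$ means "the $2^i$-bit of $n_a$ is set", so that $n_a = \sum_{i \in \phi^{-1}(a)} 2^i$. Stratifying by $|\mathrm{im}(\phi)|$ gives exactly three cases: if $|\mathrm{im}(\phi)| = 1$, there are $m$ such $\phi$'s, each contributing $y_a^7$; if $|\mathrm{im}(\phi)| = 2$, there are $3 \cdot m(m-1)$ functions (three choices of which bit is the singleton, times an ordered pair of distinct indices), contributing monomials of types $y_a y_b^6$ (singleton bit $2^0$), $y_a^2 y_b^5$ (singleton bit $2^1$), and $y_a^4 y_b^3$ (singleton bit $2^2$); finally, if $|\mathrm{im}(\phi)| = 3$, then $\phi$ is a bijection onto a $3$-element subset, giving $m(m-1)(m-2)$ functions, each contributing a monomial $y_a^4 y_b^2 y_c$ with $a,b,c$ pairwise distinct.

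The main task is then routine bookkeeping to match these to the right-hand side of the claimed formula. Specifically, $\sum_{a\neq b} y_a y_b^6$ and $\sum_{a \neq b} y_a^6 y_b$ coincide after swapping dummy variables (and similarly for the $(2,5)$ and $(3,4)$ exponent patterns), so the three "swap" types produced by the $|\mathrm{im}(\phi)| = 2$ stratum are exactly absorbed into the three summands inside $\sum_{a\neq b}[\,\cdots\,]$ of the statement. The $|\mathrm{im}(\phi)|=3$ stratum matches the final triple sum by relabelling the ordered triple $(c,b,a)$ corresponding to the bits $(2^0,2^1,2^2)$. The only real subtlety is confirming that no further simplification occurs in characteristic $2$: since each surviving $\phi$ yields a distinct monomial (ordered tuples of exponents differ), no cancellations happen, and the formula stands as stated.
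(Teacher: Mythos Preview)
Your argument is correct. The use of Lucas' (or Kummer's) theorem to characterise the odd multinomial coefficients as exactly those $(n_1,\ldots,n_m)$ whose binary supports partition $\{2^0,2^1,2^2\}$ is the right tool, and your parametrisation by functions $\phi:\{0,1,2\}\to\{1,\ldots,m\}$ is clean. The stratification by $|\mathrm{im}(\phi)|$ and the subsequent relabelling of dummy indices to match the stated right-hand side are all fine, and your remark that distinct $\phi$ give distinct exponent tuples (so no cancellations occur) closes the argument.

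Your route is genuinely different from the paper's. The paper proves the lemma by induction on $m$: it first establishes analogous closed forms for $\left(\sum_a y_a\right)^i$ for $i=2,3,4,5,6$ (each by its own short induction, using the lower-$i$ cases), and then does the inductive step for $i=7$ by expanding $\left(\sum_{a=1}^m y_a + y_{m+1}\right)^7$ via the binomial theorem and substituting all the previously derived formulae. The paper explicitly acknowledges, just before its proof, that one could instead ``calculate all multinomial coefficients relevant to the present case and observe that the odd ones are exactly the ones corresponding to the terms shown'' --- your approach is a systematic version of this, made uniform by Lucas' theorem rather than by brute-force computation. Your argument is shorter, more conceptual, and would generalise immediately to $\left(\sum_a y_a\right)^{2^r-1}$ for any $r$; the paper's inductive approach is entirely elementary (no appeal to Lucas or Kummer) but is longer and specific to the exponent $7$.
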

Note that this formula can be seen quickly by calculating all multinomial coefficients relevant to the present case and observing that the ones that are odd are exactly the ones corresponding to the terms shown. However, let us provide an inductive proof now for completeness.
\begin{proof}
    We will prove formulae for similar expressions $\left(\sum_{a=1}^my_a\right)^i$ for $i=2, \ldots, 7$ one-by-one to prove the above. For $i=2$, it is well known that
    \begin{equation}
        \left(\sum_{a=1}^my_a\right)^2 = \sum_{a=1}^my_a^2
    \end{equation}
    because the field has characteristic 2, and indeed this follows from Equation \eqref{char2Binomial}. Then, for $i=3$, we have
    \begin{equation}
        \left(\sum_{a=1}^my_a\right)^3 = \sum_{a=1}^my_a^3 + \sum_{a \neq b}y_a^2y_b.
    \end{equation}
    One may prove this by induction on $m$; first one notes the validity of the statement for $m=1$ and $m=2$ (note for $m=2$ that all binomial coefficients $\begin{pmatrix} 3\\k\end{pmatrix}$ are odd). For the general case, we have
    \begin{align}
        \left(\sum_{a=1}^my_a+y_{m+1}\right)^3 &= \left(\sum_{a=1}^my_a\right)^3 + \left(\sum_{a=1}^my_a\right)^2y_{m+1} + \sum_{a=1}^my_ay_{m+1}^2 + y_{m+1}^3\\
        &= \sum_{a=1}^my_a^3 + \sum_{\substack{a,b=1\\a \neq b}}^my_a^2y_b + \sum_{a=1}^my_a^2y_{m+1}+\sum_{a=1}^my_ay_{m+1}^2 + y_{m+1}^3\\
        &= \sum_{a=1}^{m+1}y_a^3 + \sum_{\substack{a,b=1\\a \neq b}}^{m+1}y_a^2y_b
    \end{align}
    where in the second equality we use the inductive hypothesis and the statement for $i=2$. The case of $i=4$,
    \begin{equation}
        \left(\sum_{a=1}^my_a\right)^4 = \sum_{a=1}^my_a^4,
    \end{equation}
    follows readily from the statement for $i=2$. Then, for $i=5$, we have
    \begin{equation}
        \left(\sum_{a=1}^my_a\right)^5 = \sum_{a=1}^my_a^5 + \sum_{a \neq b}y_a^4y_b.
    \end{equation}
    Again, this is proved by induction on $m$. The statement is noted to be true for $m=1$ and $m=2$, where the case of $m=2$ follows from the fact that $\begin{pmatrix}
        5\\k
    \end{pmatrix}$ is odd for $k=0,1,4,5$ and even for $k=2,3$. For the general case, we have
    \begin{align}
        \left(\sum_{a=1}^my_a+y_{m+1}\right)^5 &= \left(\sum_{a=1}^my_a\right)^5 + \left(\sum_{a=1}^my_a\right)^4y_{m+1} + \sum_{a=1}^my_ay_{m+1}^4 + y_{m+1}^5\\
        &=\sum_{a=1}^my_a^5 + \sum_{\substack{a,b=1\\a \neq b}}^my_a^4y_b + \sum_{a=1}^my_a^4y_{m+1}+\sum_{a=1}^my_ay_{m+1}^4 + y_{m+1}^5\\
        &= \sum_{a=1}^{m+1}y_a^5 + \sum_{\substack{a,b=1\\a \neq b}}^{m+1}y_a^4y_b
    \end{align}
    where in the second equality we apply the inductive hypothesis and the statement for $i=4$. The case of $i=6$ is then
    \begin{equation}
        \left(\sum_{a=1}^my_a\right)^6 = \sum_{a=1}^my_a^6 + \sum_{a \neq b}y_a^4y_b^2
    \end{equation}
    which then follows from the cases of $i=2$ and $i=3$. The lemma is then proved by showing the statement ($i=7$) by induction on $m$. First, one notes that the statement is true for $m=1$, $m=2$, and $m=3$ by direct calculation. Then,
    \begin{alignat}{2}
        \left(\sum_{a=1}^my_a+y_{m+1}\right)^7 &=&& \left(\sum_{a=1}^my_a\right)^7 + \left(\sum_{a=1}^my_a\right)^6y_{m+1}+\left(\sum_{a=1}^my_a\right)^5y_{m+1}^2\;+\\ &&&\left(\sum_{a=1}^my_a\right)^4y_{m+1}^3 + \left(\sum_{a=1}^my_a\right)^3y_{m+1}^4 + \left(\sum_{a=1}^my_a\right)^2y_{m+1}^5 + \sum_{a=1}^my_ay_{m+1}^6 + y_{m+1}^7\\
        &=&&\sum_{a=1}^my_a^7 + \sum_{\substack{a,b=1\\a \neq b}}^m\left(y_a^6y_b + y_a^5y_b^2 + y_a^4y_b^3\right) + \sum_{\substack{a,b,c=1\\\text{pairwise distinct}}}^my_a^4y_b^2y_c \;+ \\&&&\left(\sum_{a=1}^my_a^6 + \sum_{\substack{a,b=1\\a\neq b}}^my_a^4y_b^2\right)y_{m+1} + \left(\sum_{a=1}^my_a^5 + \sum_{\substack{a,b=1\\a\neq b}}^my_a^4y_b\right)y_{m+1}^2 + \sum_{a=1}^my_a^4y_{m+1}^3\;+\\
        &&&\left(\sum_{a=1}^my_a^3 + \sum_{\substack{a,b=1\\a\neq b}}^my_a^2y_b\right)y_{m+1}^4 + \sum_{a=1}^my_a^2y_{m+1}^5 + \sum_{a=1}^my_ay_{m+1}^6 + y_{m+1}^7\\
        &=&&\sum_{a=1}^{m+1}y_a^7 + \sum_{\substack{a,b=1\\a \neq b}}^{m+1}\left[y_a^6y_b+y_a^5y_b^2+y_a^4y_b^3\right] + \sum_{\substack{a,b,c=1\\\text{pairwise distinct}}}^{m+1}y_a^4y_b^2y_c
    \end{alignat}
    where in the second equality we have used the inductive hypothesis for the present case $i=7$ as well as the cases of $i=2,3,4,5$ and 6. By induction, we have thus proved Lemma \ref{multinomialLemma}.
\end{proof}

\section{An Example Decomposition of the 10-Qubit Gate}\label{exampleDecomposition}

We give an example decomposition of our $10$-qubit gate $\theta(U)$ from Section \ref{MSDfromCode} into $Z$, $CZ$ and $CCZ$ gates as discussed in Section \ref{CCZConversion}, where
\begin{equation}
    U = U_1^{(7)} = \sum_{\gamma \in \mathbb{F}_{1024}}\exp\left[i\pi\tr(\gamma^7)\right]\ket{\gamma}\bra{\gamma}.
\end{equation}
We consider the self-dual basis given in Equations \eqref{sdbFirst} to \eqref{sdbLast}, which was
\begin{align}
\alpha_0 &= 1+\alpha^2+\alpha^4+\alpha^5+\alpha^7+\alpha^8,\\
\alpha_1 &= \alpha^3+\alpha^6+\alpha^7+\alpha^8+\alpha^9,\\
\alpha_2 &= \alpha+\alpha^2+\alpha^5+\alpha^7 + \alpha^8+\alpha^9,\\
\alpha_3 &= 1+\alpha+\alpha^2+\alpha^3+\alpha^4+\alpha^6+\alpha^7+\alpha^8+\alpha^9,\\
\alpha_4 &= 1+\alpha+\alpha^4+\alpha^5+\alpha^7+\alpha^9,\\
\alpha_5 &= \alpha + \alpha^2+\alpha^3 + \alpha^7,\\
\alpha_6 &= \alpha^2+\alpha^6+\alpha^7,\\
\alpha_7 &= \alpha^2 + \alpha^5 + \alpha^7,\\
\alpha_8 &= 1+\alpha^3 + \alpha^7,\\
\alpha_9 &= 1 + \alpha^4 + \alpha^6 + \alpha^7,
\end{align}
where we recall $\alpha \in \mathbb{F}_{1024}$ satisfies $\alpha^{10} = 1+\alpha^3$. With respect to this basis, $\theta(U)$ is a $10$-qubit gate that decomposes into a sequence of $C=70$ $CCZ$ gates, as well as the Cliffords $Z$ and $CZ$. Explicitly,
\begin{gather*}
    \theta(U) = Z_1Z_2Z_3Z_4Z_5Z_6Z_7Z_8Z_9Z_{10}CZ_{1,2}CZ_{1,5}CZ_{1,7}CZ_{1,10}CZ_{2,3}CZ_{2,6}CZ_{2,8}\\CZ_{3,4}CZ_{3,7}CZ_{3,9}CZ_{4,5}CZ_{4,8}CZ_{4,10}CZ_{5,6}CZ_{5,9}CZ_{6,7}CZ_{6,10}CZ_{7,8}CZ_{8,9}CZ_{9,10}CCZ_{1,2,3}\\CCZ_{1,2,5}CCZ_{1,2,6}CCZ_{1,2,8}CCZ_{1,2,9}CCZ_{1,2,10}CCZ_{1,3,4}CCZ_{1,3,5}CCZ_{1,3,8}CCZ_{1,3,9}\\CCZ_{1,4,5}CCZ_{1,4,6}CCZ_{1,4,10}CCZ_{1,5,10}CCZ_{1,6,7}CCZ_{1,6,9}CCZ_{1,7,8}CCZ_{1,7,9}CCZ_{1,7,10}CCZ_{1,8,10}\\CCZ_{1,9,10}CCZ_{2,3,4}CCZ_{2,3,6}CCZ_{2,3,7}CCZ_{2,3,9}CCZ_{2,3,10}CCZ_{2,4,5}CCZ_{2,4,6}CCZ_{2,4,9}CCZ_{2,4,10}CCZ_{2,5,6}CCZ_{2,5,7}\\CCZ_{2,7,8}CCZ_{2,7,10}CCZ_{2,8,9}CCZ_{2,8,10}CCZ_{3,4,5}CCZ_{3,4,7}CCZ_{3,4,8}CCZ_{3,4,10}CCZ_{3,5,6}CCZ_{3,5,7}\\CCZ_{3,5,10}CCZ_{3,6,7}CCZ_{3,6,8}CCZ_{3,8,9}CCZ_{3,9,10}CCZ_{4,5,6}CCZ_{4,5,8}CCZ_{4,5,9}CCZ_{4,6,7}CCZ_{4,6,8}\\CCZ_{4,7,8}CCZ_{4,7,9}CCZ_{4,9,10}CCZ_{5,6,7}CCZ_{5,6,9}CCZ_{5,6,10}CCZ_{5,7,8}CCZ_{5,7,9}CCZ_{5,8,9}CCZ_{5,8,10}\\CCZ_{6,7,8}CCZ_{6,7,10}CCZ_{6,8,9}CCZ_{6,8,10}CCZ_{6,9,10}CCZ_{7,8,9}CCZ_{7,9,10}CCZ_{8,9,10}.
\end{gather*}

\end{document}